\documentclass[pra,aps,twocolumn,superscriptaddress]{revtex4}
\usepackage{color}
\usepackage{bm}
\usepackage{graphicx}
\usepackage{amsbsy}
\usepackage{amsmath}
\usepackage{amsfonts}
\usepackage{amsthm}
\usepackage{listings}
\usepackage{amssymb}

\usepackage{tikz}
\usepackage{tikz-cd}

\usepackage{makecell}
\usepackage{diagbox}

\usepackage{graphicx}
\usepackage{sidecap}

\usepackage{verbatim}

\usepackage{setspace}

\usepackage{enumitem}
\setlist{nolistsep}

\usepackage{quoting}
\quotingsetup{vskip=0pt}

\begin{document}

\theoremstyle{plain}
\newtheorem{theorem}{Theorem}
\newtheorem{lemma}[theorem]{Lemma}
\newtheorem{corollary}[theorem]{Corollary}
\newtheorem{conjecture}[theorem]{Conjecture}
\newtheorem{proposition}[theorem]{Proposition}

\theoremstyle{definition}
\newtheorem{definition}{Definition}

\theoremstyle{remark}
\newtheorem*{remark}{Remark}
\newtheorem{example}{Example}

\def\be{\begin{equation}}
\def\ee{\end{equation}}
\def\ba{\begin{align}}
\def\ea{\end{align}}

\newcommand{\mE}{\mathcal{E}}
\newcommand{\mU}{\mathcal{U}}
\newcommand{\mA}{\mathcal{A}}
\newcommand{\mF}{\mathcal{F}}
\newcommand{\mI}{\mathcal{I}}
\newcommand{\mH}{\mathcal{H}}
\newcommand{\mL}{\mathcal{L}}
\newcommand{\mM}{\mathcal{M}}
\newcommand{\mT}{\mathcal{T}}
\newcommand{\mN}{\mathcal{N}}

\newcommand{\fm}{\mathcal{F}_{\bf{m}}}
\newcommand{\am}{\mathcal{A}^{\textbf{m}}}
\newcommand{\dm}{\mathcal{D}(\mathrm{H}_{{\bf m}})}
\newcommand{\lr}{\rangle\langle}
\newcommand{\la}{\langle}
\newcommand{\ra}{\rangle}
\newcommand{\tr}{{\rm Tr}}

\newcommand{\mc}[1]{\mathcal{#1}}
\newcommand{\mbf}[1]{\mathbf{#1}}
\newcommand{\mbb}[1]{\mathbb{#1}}
\newcommand{\mrm}[1]{\mathrm{#1}}

\newcommand{\bra}[1]{\langle #1|}
\newcommand{\ket}[1]{|#1\rangle}
\newcommand{\braket}[3]{\langle #1|#2|#3\rangle}
%inner product
\newcommand{\ip}[2]{\langle #1|#2\rangle}
%outer productt
\newcommand{\op}[2]{|#1\rangle \langle #2|}

\newcommand{\mbN}{\mathbb{N}}

\definecolor{eric}{rgb}{0,.5,.2}
\newcommand{\eric}[1]{{\color{eric} #1}}

\newcommand{\review}[1]{{\color{red} #1}}

\title{Are Incoherent Operations Physically Consistent? -- A Critical Examination of Incoherent Operations}   
\author{Eric Chitambar}\email{echitamb@siu.edu}
\affiliation{Department of Physics and Astronomy, Southern Illinois University,
Carbondale, Illinois 62901, USA}
\author{Gilad Gour}\email{gour@ucalgary.ca}
\affiliation{
Department of Mathematics and Statistics,
University of Calgary, AB, Canada T2N 1N4} 
\affiliation{
Institute for Quantum Science and Technology,
University of Calgary, AB, Canada T2N 1N4}

\date{\today}

%\pacs{03.67.Mn, 03.67.Hk, 03.65.Ud}

\begin{abstract}

Considerable work has recently been directed toward developing resource theories of quantum coherence. In most approaches, a state is said to possess quantum coherence if it is not diagonal in some specified basis.  In this letter we establish a criterion of physical consistency for any resource theory in terms of physical implementation of the free operations, and we show that all currently proposed basis-dependent theories of coherence fail to satisfy this criterion.  We further characterize the physically consistent resource theory of coherence and find its operational power to be quite limited.  
After relaxing the condition of physical consistency, we introduce the class of dephasing-covariant incoherent operations, present a number of new coherent monotones based on relative R\'{e}nyi entropies, and study incoherent state transformations under different operational classes.  In particular, we derive necessary and sufficient conditions for qubit state transformations and show these conditions hold for all classes of incoherent operations.
\end{abstract}

\maketitle

Resource theories offer a powerful framework for understanding how certain physical properties naturally change within a physical system.  A general resource theory for a quantum system is characterized by a pair $(\mc{F},\mc{O})$, where $\mc{F}$ is a set of ``free'' states and $\mc{O}$ is a set of ``free'' quantum operations.  Any state that does not belong to $\mc{F}$ is then deemed a resource state.  Entanglement theory provides a prototypical example of a resource theory in which the free states are the separable or unentangled states, and the free operations are local operations and classical communication (LOCC) \cite{Plenio-2007a, Horodecki-2009a}.  Other examples includes the resource theories of athermality~\cite{Janzing-2000a, Brandao-2013a}, asymetry~\cite{Gour-2008a,Marvian2013,Marvian2014}, and non-stabilizer states for quantum computation~\cite{Veitch-2014a}. 

Any pair $(\mc{F},\mc{O})$ defines a resource theory, provided the operations of $\mc{O}$ act invariantly on $\mc{F}$; i.e. $\mc{E}(\rho)\in\mc{F}$ for all $\rho\in\mc{F}$ and all $\mc{E}\in\mc{O}$.  However, this is just a mathematical restriction placed on the maps belonging to $\mc{O}$.  It does not imply that $\mc{E}\in\mc{O}$ can actually be physically implemented without generating or consuming additional resource.  The issue is a bit subtle here since in quantum mechanics, physical operations on one system ultimately arise from unitary dynamics and projective measurements on a larger system, a process mathematically described by a Stinespring dilation \cite{Paulsen-2003a}.  A resource theory $(\mc{F},\mc{O})$ defined on system $A$ is said to be \textit{physically consistent} if every free operation in $\mc{E}\in\mc{O}$ can be obtained by an auxiliary state $\hat{\rho}_B$, a joint unitary $U_{AB}$, and a projective measurement $\{P_k\}_k$ that are all free in an extended resource theory $(\mc{F}',\mc{O}')$ defined a larger system $AB$, for which $\mc{O}=\text{Tr}_B\mc{O}':=\{\text{Tr}_B(\rho_{AB}):\rho_{AB}\in\mc{O}'\}$.

%A resource theory is said to be \textit{physically consistent} if every free operation on a system can be obtained by a joint unitary transformation and projective measurement that are both free when extending the resource theory to a larger system; otherwise it is physically inconsistent.  

\begin{table}[t]
\linespread{1}\selectfont{\small 
\begin{tabular}{|l|c|c|}\hline
\theadfont\diagbox[width=11em]{Resource}{Operations}&
\thead{Physically\\Consistent}&\thead{Physically\\Inconsistent}\\    \hline
Entanglement & LOCC & SEP, NE \\    \hline
Coherence & PIO & SIO, DIO, IO, MIO \\ \hline
\end{tabular}
\caption{The class of Physically Incoherent Operations (PIO) introduced in this letter represents the coherence analog to LOCC in terms of being a physically consistent resource theory.  The previously studied Strictly Incoherent Operations (SIO), Incoherent Operations (IO) and Maximally Incoherent Operations (MIO) represent relaxations of PIO in the same way that Separable (SEP) and Non-Entangling (NE) operations are relaxations of LOCC.  We further introduce the new class of Dephasing-covariant Incoherent Operations (DIO).}}
\end{table}

Arguably a physically consistent resource theory is more satisfying than an inconsistent one.  Indeed, without physical consistency, the notions of ``free'' and ``resource'' have very little physical meaning since resources must ultimately be consumed to implement certain operations that are supposed to be `` free.''  As an analogy, if a car wash offers to wash your car for free, but only after you go across the street and purchase an oil change from their business partner, is the ``car washing operation'' really free? 

At the same time, physically inconsistent resource theories can still be of interest.  Consider again entanglement.  LOCC renders a physically consistent resource theory of entanglement since any LOCC operation can be implemented using only local unitaries and projections.  However, often one considers more general operational classes such as separable operations (SEP) or the full class of non-entangling operations (NE) \footnote{It is also common to consider the class of positive partial-transpose preserving operations (PPT) as a relaxation of LOCC.  However, the PPT resource theory has a different set of free states than LOCC; namely all PPT entangled state are free in the former while they are not in the latter.}.  The motivation for using SEP is that it possesses a much nicer mathematical structure than LOCC without being too much stronger.  In contrast, one may turn to NE when seeking maximal strength among all operations that cannot generate entanglement.  Nevertheless, despite being appealing objects of study, both SEP and NE represent physically inconsistent resource theories of entanglement.

In this letter, we analyze some of the recently proposed resource theories of quantum coherence \cite{Aberg-2004a, Baumgratz-2014a, Yadin-2015a, Streltsov-2015a}.  We observe that none of these offer a physically consistent resource theory, and the true analog to LOCC in coherence theory has been lacking.  We identify this hitherto missing piece as the class of \textit{physically incoherent operations} (PIO), and we provide its characterization.  The operations previously used to study coherence are much closer akin to SEP and NE in entanglement theory, and we clarify what sort of physical interpretations can be given to these operations.  

While we find that PIO allows for optimal distillation of maximal coherence from partially coherent pure states in the asymptotic limit of many copies, the process is strongly irreversible.  That is, maximally coherent states cannot be diluted into weakly coherent states at a nonzero rate, and they are thus curiously found to be the \textit{least} powerful among all coherent states in terms of asymptotic convertibility.  Given this limitation of PIO and its similar weakness on the finite-copy level, it is therefore desirable from a theoretical perspective to consider more general operations. 
Consequently, we shift our focus to the development of coherence resource theories under different relaxations of PIO.  To this end, our main contributions are as follows.

We introduce the class of dephasing-commuting incoherent operations (DIO), which to our knowledge has never discussed before in literature.  We provide physical motivation for DIO and show that these operations are just as powerful as Maximal Incoherent Operations (MIO) when acting on qubits.
We then study the class MIO and show, somewhat surprisingly, that MIO can increase the Schmidt rank of pure states. New coherence measures based on the relative R\'{e}nyi entropies are presented for DIO and MIO.  Finally, we study the resource theory of $N$-asymmetry,
where $N$ is the group of all diagonal unitaries with respect to the incoherent basis. We show that for physical systems without $U(1)$-translation symmetry, the resource theory of $N$-asymmetry can characterize coherence more adequately than the resource theory of $U(1)$-asymmetry.
%\linespread{1}\selectfont{\small 
%\begin{tabular}{c|p{1.5cm}|p{1.8cm}}
%\backslashbox{Resource}{Operations}&Physically Consistent&Physically Inconsistent\\\hline
%Entanglement&LOCC&SEP, PPT, MNE\\\hline
%Coherence&PIO&GIO, IO, MIO
%\end{tabular}
%}

Quantum coherence has traditionally referred to the presence of off-diagonal terms in the density matrix.  For a given (finite-dimensional) system, a complete basis $\{\ket{i}\}_{i=1}^d$ for the system is specified, accounting for all degrees of freedom, and a state is said to lack coherence (or be ``incoherent'') with respect to this basis if and only if its density matrix is diagonal in this basis \cite{Blum-2012a, Cohen-Tannoudji-1992a}.  We will refer to this as a \textit{basis-dependent definition of coherence}, and accordingly, a basis-dependent resource theory of coherence identifies the free (or ``incoherent'') states $\mc{I}$ as precisely the set of diagonal density matrices in the fixed incoherent basis \footnote{One can also adopt a more general notion of coherence based on asymmetry \cite{RobIman}.  In this setting, coherence is not identified with the off-diagonal elements of a density matrix, and consequently, one have states $\ket{0}$, $\ket{1}$, $\ket{\psi}=\sqrt{1/2}(\ket{0}+\ket{1})$ and $\rho=1/2(\op{0}{0}+\op{1}{1})$ that are all considered to be incoherent.  This is a departure from traditional parlance in which $\ket{\psi}$ is called a \textit{coherent} superposition whereas $\rho$ is an \textit{incoherent} superposition (see Appendix for more discussion).}.

When it comes to identifying the free (or ``incoherent'') operations, different proposals have been made.  We focus on the following three operational classes.  A CPTP map $\mc{E}$ is said to be: a Maximal Incoherent Operation (MIO) if $\mc{E}(\rho)\in \mc{I}$ for every $\rho\in\mc{I}$ \cite{Aberg-2004a, Aberg-2006a}; an Incoherent Operation (IO) if $\mc{E}$ has a Kraus operator representation $\{K_n\}_{n}$ such that $K_n\rho K_n^\dagger/\tr[K_n\rho K_n^\dagger]\in\mc{I}$ for all $n$ and $\rho\in\mc{I}$ \cite{Baumgratz-2014a}; a Strictly Incoherent Operation (SIO) if $\mc{E}$ has a Kraus operator representation $\{K_n\}_{n}$ such that $K_n\Delta(\rho) K_n^\dagger=\Delta(K_n\rho K_n^\dagger)$ for all $n$ \cite{Yadin-2015a}, where $\Delta$ is the completely dephasing map $\Delta:\rho\mapsto\sum_{i=1}^{d_A}\op{i}{i}\rho\op{i}{i}$. 

In each of these approaches, the allowed unitary operations and projective measurements are the same.  The set of all incoherent unitary matrices forms a group which we denote by $G$.  For a $d$-dimensional system, the group $G$ consists of all $d\times d$ unitaries of the form $\pi u$, where $\pi$ is a permutation matrix and $u$ is a diagonal unitary matrix (with phases on the diagonal).  We denote by $N\cong U(1)^d$ the group of diagonal unitary matrices and by $\Pi$ the group of permutation matrices. Note that $N$ is a normal subgroup of $G$, and $G=N\rtimes \Pi$ is the semi-direct product of $N$ and $\Pi$.  Likewise, an incoherent projective measurement consists of any complete set of orthogonal projectors $\{P_j\}$ with each $P_j$ being diagonal in the incoherent basis. 

It is crucial that a physical resource theory possess a well-defined extension to multiple systems if one allows for generalized measurements, simply because the latter describes a process that is carried out on more than one system.  %When more than one system is considered, a resource theory will designate a pair of free objects $(\mc{F}_i,\mc{O}_i)$ for each of the subsystems as well as a pair $(\mc{F},\mc{O})$ for the overall system.  A resource theory is said to satisfy the \textit{no-superactivation postulate} if $\bigotimes_i \mc{F}_i\subset\mc{F}$.  Entanglement theory satisfies the no-superactivation postulate with this inclusion being proper.  %For example, if $S_1$ and $S_2$ are two bipartite systems with Alice and Bob holding one subsystem of each, then in addition to being globally separable between Alice and Bob, states in $\mc{F}_1\otimes\mc{F}_2$ are locally separable, whereas states in $\mc{F}$ allow for local entanglement.  In contrast, not all resource theories satisfy the no-superactivation postulate.  
%In contrast, if $\mc{F}_i$ denotes the collection of two-qubit bipartite states describable by a local hidden variable (LHV) model, then for sufficiently large $k$, $\bigotimes_{i=1}^k\mc{F}_i$ does not lie inside the set of $2^k\otimes 2^k$ local states; i.e. nonlocality can be superactivated \cite{Palazuelos-2012a}.  When considering a resource theory of coherence, it seems natural (if not mandatory) to assume that the no-superactivation postulate also holds: if both $\rho$ and $\sigma$ lack quantum coherence, then it would be rather strange if somehow $\rho\otimes\sigma$ possessed coherence in the resource-theoretic picture.
A natural requirement for any physical resource theory of coherence is that it satisfies the \textit{no superactivation postulate}; that is, if $\rho$ and $\sigma$ lack quantum coherence, then so must the joint state $\rho\otimes\sigma$.  Combining the basis-dependent definition of coherence with the no superactivation postulate immediately fixes the structure of multipartite incoherent states.  If $\{\ket{i}_A\}_{i=1}^{d_A}$ and $\{\ket{j}_B\}_{j=1}^{d_B}$ are defined to be the incoherent bases for systems $A$ and $B$ respectively, then the superactivation postulate forces $\{\ket{i}_A\ket{j}_{B}\}_{i,j=1}^{d_A,d_B}$ to be the incoherent basis for the joint system $AB$.%~\footnote{Note, however, that it is possible to consider resource theories in which some linear combinations of states in $\{\ket{i}_A\ket{j}_{B}\}_{i,j=1}^{d_A,d_B}$ are also considered free; this is exactly the situation with decoherence-free subspaces in the resource theory of asymmetry.}

\begin{figure}[t]
\includegraphics[scale=.25]{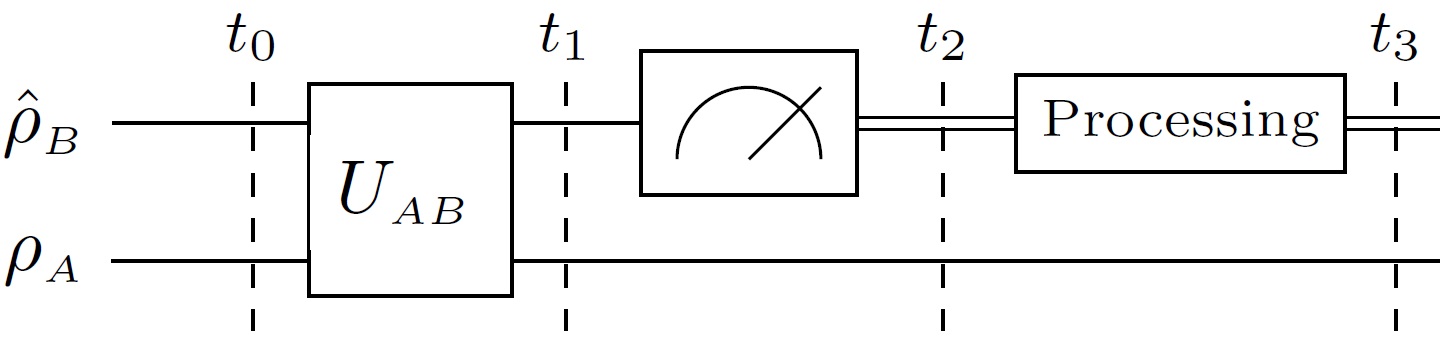}
\caption{\label{Fig:Measure}  This figure depicts the general process of implementing an incoherent operation on the joint system $AB$ whose reduced action on $A$ is the incoherent CPTP map $\rho_A\mapsto\hat{\mc{E}}(\rho_A)$.  A second system $B$ is introduced in an incoherent state $\hat{\rho}^B$.  Both the unitary $U_{AB}$ and projective measurement are coherence non-generating.  All measurement outcomes are stored in a classical register of system $B$ so that the joint system is in a QC state at time $t_2$.  Only maps $\hat{\mc{E}}$ implemented in this way are physically consistent within a resource-theoretic picture.}
\end{figure}

The fact that the incoherent basis takes tensor product form when considering multiple systems has strong consequences for the physical consistency of incoherent operations.  Every physical operation on some system, say $A$, can be decomposed into a three-step process as depicted in Fig. \ref{Fig:Measure}.  If this operation is free within a physically consistent framework, then (i) a joint incoherent unitary $U_{AB}$ is applied immediately prior to time $t_1$ on the input state $\rho_A$ and some fixed incoherent state $\hat{\rho}_B$, (ii) an incoherent projective measurement is applied immediately prior to time $t_2$ with system $B$ encoding the measurement outcome as a classical index, and (iii) a classical processing channel is applied to the measurement outcomes immediately prior to $t_3$.  Note that at time $t_2$, the joint state is a quantum-classical (QC) state $\omega_{AB}=\sum_{j=1}^t\rho_{A,j}\otimes\op{j}{j}_B$, where 
\[\rho_{A,j}=\tr_B[(\mbb{I}_A\otimes P_j) U_{AB}(\rho_A\otimes\hat{\rho}_B)U_{AB}^\dagger ].\]  With the classical processing, the final state of system $A$ at time $t_3$ is given by $\mc{E}(\rho_A):=\sum_{k=1}^{t'}\rho'_{A,k}\otimes\op{k}{k}$, where $\rho'_{A,k}=\sum_{j=1}^tp_{k|j}\rho_{A,j}$ for some channel $p_{k|j}$.  We define the class of physical incoherent operations (PIO) to be the set of all CPTP maps $\hat{\mc{E}}$ that can be obtained in this way.  
\noindent The following characterization of PIO is derived in the Appendix.
\begin{proposition}
A CPTP map $\hat{\mc{E}}$ is a physically incoherent operation if and only if it can be expressed as a convex combination of maps each having Kraus operators $\{K_j\}_{j=1}^r$ of the form
\begin{equation}
K_j=U_jP_j=\sum_{x}e^{i\theta_x}\op{\pi_j(x)}{x}P_{j},
\end{equation}
where the $P_j$ form an orthogonal and complete set of incoherent projectors on system $A$ and $\pi_j$ are permutations.
\end{proposition}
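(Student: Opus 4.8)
The plan is to prove both implications by matching the three-step dilation defining PIO to a Kraus decomposition, exploiting the fact (fixed above by the no-superactivation postulate) that the joint incoherent basis of $AB$ is the product basis $\{\ket{i}_A\ket{j}_B\}$. Consequently every incoherent unitary $U_{AB}$ is a \emph{phased permutation} of these product basis vectors, and every incoherent projector on $B$ is a sum of the $\op{b}{b}_B$. I would use repeatedly the elementary identity that $\rho_A\mapsto\tr_B[(\mbb{I}_A\otimes\op{b'}{b'}_B)U_{AB}(\rho_A\otimes\op{b}{b}_B)U_{AB}^\dagger]$ has the single $A$-operator $\bra{b'}_B U_{AB}\ket{b}_B$ as its Kraus operator. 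Since a mixed ancilla $\hat{\rho}_B=\sum_b q_b\op{b}{b}_B$ splits the whole process into a $q_b$-weighted sum of the processes run with the pure ancilla $\ket{b}_B$, and since a coarse-grained incoherent $B$-measurement merely regroups the rank-one outcomes $\op{b'}{b'}_B$ without changing $\sum_j\rho_{A,j}$, it suffices to treat a single pure ancilla together with rank-one $B$-measurements; the general case then surfaces as exactly the convex combination named in the statement.

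For the forward direction I would fix a pure ancilla $\ket{b_0}_B$ and write $U_{AB}\ket{x}_A\ket{b_0}_B=e^{i\phi(x)}\ket{\sigma(x)}_A\ket{\tau(x)}_B$, which is legitimate because $U_{AB}$ is a phased permutation of the product basis. The outcome-$j$ Kraus operator is then $K_j=\bra{j}_B U_{AB}\ket{b_0}_B=\sum_{x:\tau(x)=j}e^{i\phi(x)}\op{\sigma(x)}{x}$. Setting $P_j=\sum_{x:\tau(x)=j}\op{x}{x}$, these projectors are orthogonal and complete because $\tau$ is a function on the $A$-indices, so its fibres partition them; moreover $\sigma$ is injective on each fibre since $x\mapsto(\sigma(x),\tau(x))$ is injective (being part of a permutation). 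Hence $\sigma$ on the support of $P_j$ extends to a full permutation $\pi_j$, and choosing $\theta_x=\phi(x)$ on that support yields an incoherent unitary $U_j$ with $K_j=U_jP_j$, precisely the asserted form.

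For the converse I would realize a single map with Kraus operators $K_j=U_jP_j$, $j=1,\dots,r$, by a dilation that is itself incoherent. Take $B$ with basis $\{\ket{j}_B\}_{j=1}^r$, ancilla $\hat{\rho}_B=\op{1}{1}_B$, and define $V$ on the input sector by $V\ket{x}_A\ket{1}_B=\sum_j (K_j\ket{x}_A)\otimes\ket{j}_B$. Because the $P_j$ are orthogonal and complete, each $\ket{x}$ lies in the support of exactly one $P_{j(x)}$, so $K_j\ket{x}=\delta_{j,j(x)}e^{i\theta_x}\ket{\pi_{j(x)}(x)}$ and the right-hand side collapses to the single phased product vector $e^{i\theta_x}\ket{\pi_{j(x)}(x)}_A\ket{j(x)}_B$. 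The assignment $x\mapsto(\pi_{j(x)}(x),j(x))$ is injective (equal images force the same $j$, then injectivity of $\pi_j$ forces the same $x$), so $V$ sends the $d_A$ input basis vectors to distinct phased product basis vectors; since $\sum_jK_j^\dagger K_j=\sum_jP_j=\mbb{I}$ this is norm preserving, and because the image consists of honest product basis states, $V$ completes to a full phased permutation of the $AB$ basis, i.e.\ to an incoherent unitary. Measuring $B$ in $\{\op{j}{j}_B\}$ returns $\rho_{A,j}=K_j\rho_AK_j^\dagger$, so the reduced map is $\sum_jK_j\rho_AK_j^\dagger$. A general convex combination $\sum_\alpha p_\alpha\hat{\mathcal{E}}^{(\alpha)}$ follows by appending a classical flag prepared in the incoherent state $\sum_\alpha p_\alpha\op{\alpha}{\alpha}$ and applying the flag-controlled incoherent unitary $\sum_\alpha\op{\alpha}{\alpha}\otimes V_\alpha$, which is again a phased permutation.

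The main obstacle I anticipate is exactly the step that separates PIO from a generic Stinespring realization: verifying that the dilating isometry can be chosen incoherent rather than merely unitary. The structure $K_j=U_jP_j$ with a complete orthogonal family $\{P_j\}$ is precisely what forces $V$ to map product basis states to phased product basis states and hence to extend to a genuine incoherent unitary; carrying out this injectivity-and-extension argument, and confirming that the regrouping and convexity reductions do not covertly introduce coherence, is where the real content lies, while the remaining manipulations are bookkeeping.
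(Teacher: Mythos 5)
Your proof is correct, and its core is the same computation the paper performs: writing the incoherent unitary as a phased permutation of the product basis, slicing it along the ancilla index, and observing that the fibres of the $B$-component function $\tau$ induce a complete orthogonal family of incoherent projectors on $A$, with the $A$-component $\sigma$ injective on each fibre and hence extendable to a permutation $\pi_j$. The paper carries the general mixed ancilla $\hat{\rho}_B=\sum_y p_y\op{y}{y}$ and the coarse-grained projectors $P_j=\sum_{y'\in S_j}\op{y'}{y'}$ through this computation all at once, reading off the convex combination (over $y$) and the operators $U^{(y)}_{y'}P^{(y)}_{y'}$ at the end; you instead isolate the same content as a preliminary reduction to a pure ancilla and rank-one $B$-measurements, which makes the bookkeeping cleaner at no cost in generality. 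The more substantive difference is that the paper stops after this necessity computation and simply states the proposition, leaving the sufficiency direction implicit, whereas you construct the dilation explicitly: the isometry $V\ket{x}_A\ket{1}_B=\sum_j(K_j\ket{x}_A)\otimes\ket{j}_B$ collapses onto distinct phased product basis vectors precisely because the $P_j$ are orthogonal, complete, and incoherent, so $V$ extends to an incoherent unitary, and the classical-flag construction handles convex combinations. This is exactly the step that distinguishes PIO from a generic Stinespring realization, and supplying it makes your argument more complete than the printed proof; both treatments share the same mild imprecision of identifying the QC output $\sum_j K_j\rho K_j^\dagger\otimes\op{j}{j}$ with the $A$-marginal map when speaking of "the Kraus operators of $\hat{\mc{E}}$."
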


From the proposition above it is easy to see that PIO $\subset$ SIO $\subset$ IO $\subset$ MIO, with PIO being a strict subset of the other three.  To understand the physical differences between these operations let us return to Fig. \ref{Fig:Measure} and for the sake of the following discussion, assume that the measurement between times $t_1$ and $t_2$ is a rank-one projection into the incoherent basis $\{\ket{j}\}_{j=1}^{d_B}$.  Then the joint state at time $t_2$ takes the form $\sum_{j=1}^{d_B}K_j\rho_A K_j^\dagger\otimes\op{j}{j}_B$ for Kraus operators $\{K_j\}_{j=1}^{d_B}$.  Suppose now that the input $\hat{\rho}_A$ is incoherent so that initial joint state $\hat{\rho}_A\otimes\hat{\rho}_B$ is also incoherent.  If the final state at time $t_3$ is always incoherent, regardless of the coherence generated during the intermediate times, then the operation is a maximally incoherent operation (MIO).  If the QC joint state at time $t_2$ is always incoherent, then the operation is an incoherent operation (IO).  If the joint state at time $t_1$ is always incoherent, then the operation is a physically incoherent operation (PIO), provided the subsequent projective measurement is incoherent.  Conversely, every IO/MIO operation can be implemented using the scheme of Fig. \ref{Fig:Measure} by taking the size of system $B$ to be sufficiently large.  Where do SIO operations fit in this picture?  Despite the discussion presented in Ref. \cite{Yadin-2015a}, it is not entirely clear \footnote{In Ref. \cite{Yadin-2015a} it is claimed that if a set of Kraus operators $\{K_n\}_n$ represents a strictly incoherent operation - so that $K_n\Delta(\rho) K_n^\dagger=\Delta( K_n\rho K_n^\dagger)$ - then the QC state at time $t_2$ can always be obtained by a unitary $U_{AB}$ of the form $
U_{AB}:\ket{i}_A\ket{0}_B \to\ket{\pi(i)}_A\ket{\psi_i}_B$, where $\pi$ is a permutation and $\ket{\psi_i}$ is an arbitrary state.  However in general this is not true.  As we show in the appendix, the correct form of $U_{AB}$ for an SIO has the form $
U_{AB}=\sum_{i,k} c_{k i}\op{\pi_k(i)}{i}\otimes\op{k}{0}$, for different permutations $\pi_k$.}.

The class PIO is a rather restricted class of operations.  For instance, suppose that $\ket{\psi}$ and $\ket{\phi}$ are any two pure states with rank$[\Delta(\psi)]=\text{rank}[\Delta(\phi)]$.  Then $\ket{\psi}$ can be converted to another $\ket{\phi}$ using PIO if and only if the states are unitarily equivalent.  

The power of PIO is improved somewhat on the many-copy level.  One can easily show that a state $\ket{\psi}$ can be asymptotically converted via PIO into the maximally coherent qubit state $\ket{+}=\sqrt{1/2}(\ket{0}+\ket{1})$ at a rate equaling the von Neumann entropy of the state $\Delta(\op{\psi}{\psi})$, which is optimal (see Ref. \cite{Winter-2015a} for details of the incoherent projective measurement).  On the other hand, the asymptotic conversion rate of $\ket{+}$ into any weakly coherent state $\ket{\psi}$ is strictly zero.  The proof of this fact reveals an interesting relationship between quantum coherence and communication complexity in LOCC.  Observe that for any PIO transformation $\ket{\psi}\to\ket{\varphi}$, there exists a \textit{zero communication} LOCC protocol that transforms $\ket{\psi^{(mc)}}\to\ket{\varphi^{(mc)}}$, where $\ket{\psi^{(mc)}}$ and $\ket{\varphi^{(mc)}}$ are maximally correlated extensions of $\ket{\psi}$ and $\ket{\varphi}$; i.e. $\ket{\psi^{(mc)}}=\sum_{i}\sqrt{p_i}\ket{ii}_{AB}$ when $\ket{\psi}=\sum_i\sqrt{p_i}\ket{i}_A$.  Thus obtaining $nR$ copies of $\ket{\varphi}$ from multiple copies of $\ket{+}$ implies that $nR$ copies of $\ket{\varphi^{(mc)}}$ can be be obtained from multiple EPR pairs using no communication.  However, this contradicts the communication lower bounds \cite{Harrow-2003a, Hayden-2003a} that require nonzero communication to reliably obtain $nR$ copies of $\ket{\varphi^{(mc)}}$ from a source of EPR pairs.  Hence, rather bizarrely, in PIO theory the maximally coherent state is the weakest as it cannot be transformed into other that is not related by an incoherent unitary.
%In the Appendix, we prove that for pure states, PIO leads to an asymptotically %reversible resource theory:
%\begin{proposition}
%For any $\epsilon>0$ and $n$ sufficiently large, the transformation $\ket{\psi}^{\otimes \lfloor nR\rfloor}\to\overset{\epsilon}{\approx}\ket{\varphi}^{\otimes n}$ is possible whenever $R<S[\Delta(\psi)]/S[\Delta(\varphi)]$, where $S[\rho]=-tr[\rho\log\rho]$ is the von Neumann entropy.
%\end{proposition}

The weakness of PIO means that the constraint of physical consistency is too strong if one wishes to have a less degenerate resource theory of coherence.  This provides motivation to relax the constraint of physical consistency and to consider more general resource theories such as SIO/IO/MIO.  We now turn to one such theory that has not been previously discussed, but in some sense it is the most natural one to consider.

\medskip

\noindent\textit{Dephasing-Covariant Incoherent Operations.}    The family of Dephasing-Covariant Incoherent Operations (DIO) consists of all maps that commute with $\Delta$.  Recall that in general, for a collection of operations $T$, a CPTP map $\mc{E}$ is said to be $T$-covariant if $[\mc{E},\tau]=0$ for all $\tau\in T$.  DIO can be seen as a natural extension of PIO in light of the following theorem, whose proof is given in the Appendix.
\begin{theorem}$\;$\label{gcom}
{\bf{\rm (a)}} Let $G$ be the group of incoherent unitaries.  Then, $[\mU,\Delta]=0$ iff $U\in G$.  {\bf{\rm (b)}} A CPTP map $\mE$ is G-covariant iff
\begin{align}
\label{Eq:G-cov-form}
\mE(\rho)&=q_1\rho+
\frac{q_2}{d-1}\left(I-\Delta(\rho)\right)+
\frac{q_3}{d-1}\left(d\Delta(\rho)-\rho\right)
\end{align}
for some $q_i\geq 0$ with $\sum_{i=1}^{3}q_i=1$.  {\bf{\rm (c)}}  A CPTP map $\mE$ is {\upshape PIO}-covariant iff it has the form of Eq. \eqref{Eq:G-cov-form} with $q_2=0$.
\end{theorem}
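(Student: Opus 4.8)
The plan is to prove the three parts in order, with part (b) doing the heavy lifting.

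\emph{Part (a).} The condition $[\mU,\Delta]=0$ reads $U\Delta(\rho)U^\dagger=\Delta(U\rho U^\dagger)$ for all $\rho$, and I would evaluate it on the incoherent projectors $\rho=\op{i}{i}$. Since $\Delta(\op{i}{i})=\op{i}{i}$, the identity forces $U\op{i}{i}U^\dagger$ to be diagonal, i.e.\ the rank-one projector onto the $i$-th column $U\ket{i}$ must be diagonal. A rank-one projector is diagonal only when its vector is a phase times a basis vector, so $U\ket{i}=e^{i\theta_i}\ket{\pi(i)}$, and unitarity makes $\pi$ a permutation; hence $U=\pi u\in G$. The converse is the routine observation that a permutation only relabels diagonal entries and a diagonal unitary leaves them fixed, so both commute with $\Delta$.

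\emph{Part (b).} Here I would separate a covariance reduction from a complete-positivity calculation. Writing $\mE(\op{i}{j})=\sum_{kl}E_{kl,ij}\op{k}{l}$, covariance under the diagonal unitaries $N\cong U(1)^d$ forces $E_{kl,ij}=0$ unless $e_i-e_j=e_k-e_l$ for the standard basis vectors $e_i$ of $\mbb{Z}^d$, leaving only a matrix $T_{ki}$ acting on the diagonal (from $i=j,\ k=l$) and scalars $c_{ij}$ rescaling each off-diagonal unit (from $i=k\neq j=l$). Imposing covariance under the permutations $\Pi$ then makes $T$ invariant under simultaneous row/column permutation, so $T_{ki}=a\delta_{ki}+b(1-\delta_{ki})$, and forces all $c_{ij}$ equal to a single $c$ since $\Pi$ acts transitively on ordered pairs of distinct indices. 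Collecting terms gives $\mE(\rho)=c\,\rho+b\,\tr(\rho)I+(a-b-c)\Delta(\rho)$ with $a,b,c$ real by Hermiticity preservation, which is exactly \eqref{Eq:G-cov-form} under the linear reparametrization $a=q_1+q_3$, $b=q_2/(d-1)$, $a-b-c=(dq_3-q_2)/(d-1)$. The reverse inclusion is immediate because $\mathrm{id}$, $\rho\mapsto\tr(\rho)I$, and $\Delta$ are each manifestly $G$-covariant.

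The CPTP characterization is the technical crux, and here I would use the Choi operator $J=\sum_{ij}\op{i}{j}\otimes\mE(\op{i}{j})$. Trace preservation gives $\sum_i q_i=1$ by a one-line computation. For positivity, $J$ splits over the off-diagonal sector $\mathrm{span}\{\ket{ij}:i\neq j\}$, where it acts as $b$ times the identity, and the diagonal sector $\mathrm{span}\{\ket{ii}\}$, where it equals $(b+g)I_d+c\,E$ with $g=a-b-c$ and $E$ the all-ones matrix. The spectrum is therefore $b$ (multiplicity $d(d-1)$), $b+g$ (multiplicity $d-1$), and $b+g+cd$ (once); substituting the reparametrization collapses these to $q_2/(d-1)$, $dq_3/(d-1)$, and $dq_1$. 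Hence $J\geq0$ is equivalent to $q_1,q_2,q_3\geq0$, as claimed. One can cross-check by exhibiting Kraus forms: the $q_3$ generator is the mixed-unitary channel $\tfrac{1}{d-1}\sum_{k=1}^{d-1}Z^k\rho Z^{-k}$ with $Z=\sum_j\omega^j\op{j}{j}$ and $\omega=e^{2\pi i/d}$, while the $q_2$ generator has Kraus operators $\tfrac{1}{\sqrt{d-1}}\op{i}{k}$ for $i\neq k$.

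\emph{Part (c).} Since every incoherent unitary channel is PIO (take $r=1$, $P_1=I$ in the Proposition), PIO-covariance implies $G$-covariance, so $\mE$ already has the form \eqref{Eq:G-cov-form}, and the task is to show the extra constraint is precisely $q_2=0$. Because PIO $\subset$ SIO, every PIO map $\mc{P}$ obeys $\mc{P}\circ\Delta=\Delta\circ\mc{P}$, so $\mathrm{id}$ and $\Delta$---hence the $q_1$ and $q_3$ generators---commute with all of PIO, which proves the $q_2=0$ maps are PIO-covariant. For the converse, only the direction $\mc{T}:\rho\mapsto\tr(\rho)I$ remains, and a direct check shows $\mc{T}\circ\mc{P}=\mc{P}\circ\mc{T}$ iff $\mc{P}$ is unital. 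I would then exhibit a non-unital PIO: the reset channel $\mc{P}(\rho)=\tr(\rho)\op{1}{1}$, whose Kraus operators $\op{1}{x}=\pi_x\op{x}{x}$ (with $\pi_x$ any permutation sending $x\mapsto1$) have exactly the PIO form of the Proposition. Since $\mc{P}(I)=d\op{1}{1}\neq I$, one gets $[\mE,\mc{P}]=b\,[\mc{T},\mc{P}]\neq0$ unless $b=0$, i.e.\ unless $q_2=0$. The main obstacle throughout is organizing the Choi spectrum in (b) and fixing the reparametrization so the three eigenvalue families line up with $q_1,q_2,q_3\geq0$; given that, (a) and (c) are short, with (c) resting on the single observation that PIO contains a non-unital reset channel yet still commutes with $\Delta$.
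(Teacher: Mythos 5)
Your proof is correct, and on the substantive part (b) it takes a genuinely different route from the paper. The paper works at the level of Kraus operators: it first invokes an external representation-theoretic lemma on $U(1)^d$-covariant operations (Lemma~1 of its asymmetry reference) to show that $N$-covariance forces every Kraus operator to be either diagonal or a jump operator $b_{xx'}\op{x}{x'}$, then imposes permutation covariance to force the coefficient sums $\sum_j a_{jx}\bar{a}_{jx'}$, $\sum_j|a_{jx}|^2$, $|b_{xx'}|^2$ to equal constants $c,a,b$, and finally obtains the convexity constraints $-a/(d-1)\leq c\leq a$ by Cauchy--Schwarz-type estimates on those coefficients. You instead work with the process/Choi matrix: the $U(1)^d$ selection rule $e_i-e_j=e_k-e_l$ kills all but the diagonal-block matrix $T_{ki}$ and the off-diagonal scalars $c_{ij}$, permutation covariance collapses these to three real constants, and complete positivity is then read off from the Choi spectrum, whose three eigenvalue families are exactly $q_2/(d-1)$, $dq_3/(d-1)$ and $dq_1$. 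This buys you the ``iff'' in a single computation --- CP is literally equivalent to $q_i\geq 0$ --- without the external lemma or the separate coefficient inequalities, and your Kraus cross-checks (the mixed-unitary $Z^k$ form of the $q_3$ generator, the $\op{i}{k}/\sqrt{d-1}$ form of the $q_2$ generator) recover what the paper establishes separately (e.g.\ its theorem on complete positivity of $\Phi_t$). What the paper's route buys in exchange is the explicit Kraus structure of $N$-covariant maps, which it reuses later for its $N$-asymmetry state-conversion theorem. Finally, note that although part (c) is stated in the main text, the appendix restatement of the theorem contains only parts (a) and (b), so no explicit proof of (c) appears in the paper; your argument --- PIO-covariance implies $G$-covariance since incoherent unitary channels are PIO, the identity and $\Delta$ commute with every PIO map (as PIO $\subset$ SIO), and the non-unital reset channel $\rho\mapsto\tr(\rho)\op{1}{1}$ is itself PIO yet fails to commute with $\rho\mapsto\tr(\rho)I$ --- is a sound and self-contained way to fill exactly that gap.
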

\noindent From part (c) of Theorem \ref{gcom}, the commutant of PIO consists of the family of channels $\Delta_\lambda(\rho):=(1-\lambda)\rho+\lambda\Delta(\rho)$ for $\lambda\in[0,1]$.  The class DIO therefore generalizes PIO in that it is largest operational class sharing the same commutant as PIO (see Fig.2).  

Operational covariance is an important physical property as it describes an order invariance in performing a two-step process.  DIO are of particular interest when observing how the probabilities $p_i=\bra{i}\rho\ket{i}$ transform under a map $\mc{E}$.  If $\mc{E}$ is DIO, then an experimenter can put $\rho$ through any channel $\Delta_\lambda$ before applying $\mc{E}$ without changing the probabilities $p_i$.  Note that DIO can also be seen as an extension of SIO to general channels.  We next turn attention to the presentation of various coherence measures.%One can readily show that a CPTP map $\mc{E}$ is DIO iff $\mc{E}(\op{i}{i})\in\mc{I}$ and $\Delta(\mc{E}(\op{i}{i'})=0$ for any $i\not=i'$. 

\begin{SCfigure}[1.3][t]
%\captionsetup{width=0.8\textwidth}
\centering
\caption{\label{Fig:Commute}A DIO map $\mc{E}$ commutes with every channel $\Delta_\lambda(\rho):=(1-\lambda)\rho+\lambda\Delta(\rho)$ for $\lambda\in[0,1]$.}
\begin{tikzcd}[column sep=large, row sep=normal]
\rho \arrow{r}{\Delta_\lambda} \arrow[swap]{d}{\mc{E}} & \Delta_\lambda(\rho) \arrow{d}{\mc{E}} \\
\mc{E}(\rho)  \arrow{r}{\Delta_\lambda} & \rho'
\end{tikzcd}
\end{SCfigure}
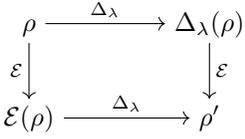

\medskip

\noindent\textit{New Coherence Measures.}  For a given coherence resource theory, the essential property of any coherence measure is that it monotonically decreases under the free operations.  The following introduces coherence monotones for the operational classes DIO and MIO.  Since MIO is the largest class of coherence non-generating operations, the MIO monotones hold for \textit{any} resource theory of coherence (including IO and SIO).  These monotones are based on relative R\'{e}nyi entropies, but they actually arise within a much more general family of monotones described in the Appendix.

The relative R\'{e}nyi entropy $D_\alpha$ and quantum relative R\'{e}nyi entropy $D_\alpha^{(q)}$ of $\rho$ to $\sigma$ are defined as $D_{\alpha}(\rho\|\sigma):=\frac{1}{\alpha-1}\log\tr(\rho^\alpha\sigma^{1-\alpha})$ and $D_{\alpha}^{(q)}(\rho\|\sigma):=\frac{1}{\alpha-1}\log\tr[(\sigma^{\frac{1-\alpha}{2\alpha}}\rho\sigma^{\frac{1-\alpha}{2\alpha}})^\alpha]$.
The function $D_\alpha$ is known to be contractive for $\alpha\in[0,2]$ while $D_\alpha^{(q)}$ is contractive for $\alpha\in[1/2,\infty]$.  Define
\begin{align}
C_\alpha(\rho)&=\min_{\sigma\in\mc{I}}D_\alpha(\rho||\sigma),\;\;\;\; \alpha\in[0,2];\label{Eq:Renyi-DIO-1}\\
C^{(q)}_\alpha(\rho)&=\min_{\sigma\in\mc{I}}D^{(q)}_\alpha(\rho||\sigma),\;\;\;\;\alpha\in[1/2,\infty].\label{Eq:Renyi-DIO-2}
\end{align}
The measures $C_\alpha$ and $C_\alpha^{(q)}$ are monotones under MIO, and they generalize other measures of coherence studied in the literature.  For instance, when taking $C_\alpha$ in the limit $\alpha\to 1$, the \textit{Relative Entropy of Coherence} \cite{Baumgratz-2014a} is obtained: $C_\alpha(\rho)\to C_{rel}(\rho)$.  Likewise, taking $C_{\alpha}^{(q)}$ in the limit $\alpha\to\infty$ yields the $\log[1+ C_R]$, where $C_R$ is the \textit{Robustness of Coherence} \cite{Piani-2016a}:
\begin{equation}
 C_R(\rho)=\min_{t\geq 0}\left\{t\;\Big|\;\frac{\rho+t\sigma}{1+t}\in\mI,\;\sigma\geq 0\right\}.\notag
\end{equation}
Furthermore, for pure states $\ket{\psi}=\sum_i\sqrt{p_i}\ket{i}$ and $\alpha\in[1/2,\infty]$, both $C_{1/\alpha}(\psi)$ and $C^{(q)}_{\alpha/(2\alpha-1)}(\psi)$ reduce to the R\'{e}nyi entropy $S_\alpha(p_i)$ of the distribution $p_i$:
\begin{equation}
\label{Eq:Renyi-Pure}
C_{1/\alpha}(\psi)=\tfrac{1}{1-\alpha}\log\sum_i p_i^\alpha=S_\alpha(p_i),\;\;\;\;\alpha\in[1/2,\infty].
\end{equation}
The following theorem exemplifies the power of MIO for pure state transformations.
\begin{theorem}
\label{Thm:MIO-pure}
Let $|\psi\ra=\sqrt{p_0}|0\ra+\sqrt{p_1}|1\ra$ and $|\psi\ra=\sum_{y=1}^{d'}\sqrt{q_y}|y\ra$, where
$q_y>0$ and $d'>2$. Then, $|\psi\ra$ can be converted to $|\phi\ra$ if and only if $p_0=p_1=1/2$ and $
\sum_{y=1}^{d'}\sqrt{q_y}\leq\sqrt{2}$.
\end{theorem}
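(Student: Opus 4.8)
The plan is to prove the two implications separately: sufficiency by exhibiting an explicit measure-and-prepare channel, and necessity by a rigidity analysis of the Kraus operators of an arbitrary MIO map realizing the transformation, where the hypothesis $d'>2$ plays the decisive role. For sufficiency, assume $p_0=p_1=\tfrac12$, so $\ket\psi=\ket+$, and $\sum_y\sqrt{q_y}\le\sqrt2$. Writing $O:=\op{\phi}{\phi}$, I would propose
\[
\mE(\rho)=\braket{+}{\rho}{+}\,O+\braket{-}{\rho}{-}\,\tau,\qquad \tau:=D-O,
\]
with $\ket{\pm}=\tfrac{1}{\sqrt2}(\ket0\pm\ket1)$ and $D=\mathrm{diag}(d_y)$. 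This is manifestly CPTP and sends $\op{+}{+}\mapsto O$, so it remains only to make it incoherent with $\tau$ a genuine state. Since $\braket{+}{j}{j}=\braket{-}{j}{j}=\tfrac12$, one finds $\mE(\op00)=\mE(\op11)=\tfrac12 D$, which is diagonal, so $\mE\in\mathrm{MIO}$ automatically. The only remaining requirement is $\tau\ge0$ with $\tr\tau=1$; choosing $d_y=\tfrac{2}{S}\sqrt{q_y}$ with $S=\sum_y\sqrt{q_y}$ gives $\tr\tau=1$, and since $\tau$ is diagonal-minus-rank-one, $\tau\ge0\iff\braket{\phi}{D^{-1}}{\phi}\le1$; a direct evaluation yields $\braket{\phi}{D^{-1}}{\phi}=\tfrac12 S^2$, so the construction works precisely when $S\le\sqrt2$. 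This gives the ``if'' direction and pins the threshold at $\sqrt2$.

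For necessity, let $\mE(\cdot)=\sum_\mu K_\mu(\cdot)K_\mu^\dagger$ be any MIO map with $\mE(\op\psi\psi)=O$. The key first step is a rigidity observation: since $\sum_\mu K_\mu\op\psi\psi K_\mu^\dagger$ equals the rank-one operator $O$, each $K_\mu\ket\psi$ lies in $\mathrm{span}\{\ket\phi\}$, so $K_\mu\ket\psi=c_\mu\ket\phi$ with $\sum_\mu|c_\mu|^2=1$. Setting $\ket{\psi^\perp}=\sqrt{p_1}\ket0-\sqrt{p_0}\ket1$ and $\ket{\eta_\mu}=K_\mu\ket{\psi^\perp}$, every Kraus operator decomposes as $K_\mu=c_\mu\op{\phi}{\psi}+\op{\eta_\mu}{\psi^\perp}$. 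Expanding the images of the diagonal inputs in terms of $O$, the vector $\ket\xi:=\sum_\mu\bar c_\mu\ket{\eta_\mu}$, and $R:=\sum_\mu\op{\eta_\mu}{\eta_\mu}=\mE(\op{\psi^\perp}{\psi^\perp})$ gives
\[
\mE(\op00)=p_0O+\sqrt{p_0p_1}\big(\op{\phi}{\xi}+\op{\xi}{\phi}\big)+p_1R,
\]
together with the analogous expression for $\mE(\op11)$ obtained by exchanging $p_0\leftrightarrow p_1$ and flipping the sign of the middle term.

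Now I impose that both $\mE(\op00)$ and $\mE(\op11)$ are diagonal and add and subtract the two diagonality conditions. Adding shows that $O+R=\mE(I)$ is diagonal, i.e. $R=D'-O$ for a diagonal $D'$ with $\tr D'=2$; subtracting and using $\mathrm{od}(R)=-\mathrm{od}(O)$ (where $\mathrm{od}$ denotes the off-diagonal part) leaves $(p_0-p_1)\,\mathrm{od}(O)+\sqrt{p_0p_1}\,\mathrm{od}(\op{\phi}{\xi}+\op{\xi}{\phi})=0$. Writing $\xi_y=r_y\sqrt{q_y}$ and dividing the $(y,y')$ entry by $\sqrt{q_yq_{y'}}$ (legitimate since every $q_y>0$) reduces this to $r_y+\bar r_{y'}=-(p_0-p_1)/\sqrt{p_0p_1}$ for all $y\ne y'$. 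This is the crux, and the one place the hypothesis $d'>2$ is essential: with at least three distinct indices one cancels the fixed index to conclude that all $r_y$ coincide, so $\ket\xi=r\ket\phi$ with $2\,\mathrm{Re}(r)=-(p_0-p_1)/\sqrt{p_0p_1}$. For a qubit target ($d'=2$) this propagation fails, which is precisely why more states become reachable there; establishing $\ket\xi\propto\ket\phi$ is the main obstacle of the whole argument.

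Finally I close using trace preservation. Taking the trace of the displayed expression for $\mE(\op00)$ and using $\tr O=\tr R=1$ gives $\tr\mE(\op00)=1+2\sqrt{p_0p_1}\,\mathrm{Re}\,\ip{\phi}{\xi}$, which must equal $1$; hence $\mathrm{Re}\,\ip{\phi}{\xi}=\mathrm{Re}(r)=0$. Combined with $2\,\mathrm{Re}(r)=-(p_0-p_1)/\sqrt{p_0p_1}$ this forces $p_0=p_1=\tfrac12$. The remaining bound follows from positivity of $R=\mE(\op{\psi^\perp}{\psi^\perp})=D'-O\ge0$, which gives $D'\ge O$, hence $\braket{\phi}{D'^{-1}}{\phi}\le1$; Cauchy--Schwarz with $\tr D'=2$ then yields $\big(\sum_y\sqrt{q_y}\big)^2\le\tr D'\cdot\braket{\phi}{D'^{-1}}{\phi}\le2$, i.e. $\sum_y\sqrt{q_y}\le\sqrt2$ (one may alternatively invoke monotonicity of the robustness of coherence $C_R$). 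Once $\ket\xi\propto\ket\phi$ is in hand, both conditions of the theorem drop out from trace preservation and positivity almost immediately.
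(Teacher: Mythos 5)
Your proposal is correct, and both directions go through as written. The necessity half is, at its core, the same argument as the paper's, just phrased at the operator level rather than in the paper's Gram-vector formalism: the paper also begins with the rigidity step $M_j\ket{\psi}=c_j\ket{\phi}$, extracts from diagonality of $\mE(\op{0}{0})$ and $\mE(\op{1}{1})$ an off-diagonal constraint that, after dividing by $\sqrt{q_yq_{y'}}$, forces a $y$-independent constant (this is exactly your cancellation step, and it is also where the paper implicitly uses $d'\geq 3$), then pins $p_0=p_1=\tfrac12$ using trace preservation, and finally derives the bound from Bessel's inequality $\sum_y q_y/(2r_y)\le 1$ optimized over the distribution $(r_y)$. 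Your closing step is the identical inequality in different clothes: with $p_0=p_1=\tfrac12$ and $\mathrm{Re}\,\ip{\phi}{\xi}=0$ one has $r_y=\bra{y}\mE(\op{0}{0})\ket{y}=D'_{yy}/2$, so the paper's Bessel bound is precisely your $\braket{\phi}{D'^{-1}}{\phi}\le 1$, and the paper's optimization over $(r_y)$ is your Cauchy--Schwarz with $\tr D'=2$. Where you genuinely depart is sufficiency: the paper exhibits explicit Kraus data ($\mbf{v}_y=\sqrt{r_y}\mbf{e}_y$, $\mbf{u}_y=\sqrt{2q_y}\mbf{c}-\mbf{v}_y$ with a suitable $\mbf{c}$) and verifies the orthogonality and completeness relations one by one, whereas your two-outcome measure-and-prepare channel $\mE(\rho)=\braket{+}{\rho}{+}\,O+\braket{-}{\rho}{-}\,(D-O)$ is manifestly CPTP, is MIO by inspection since $\mE(\op{0}{0})=\mE(\op{1}{1})=\tfrac12 D$, and reduces the entire verification to the rank-one-update criterion $\braket{\phi}{D^{-1}}{\phi}\le 1$, which saturates exactly at $\sum_y\sqrt{q_y}=\sqrt{2}$. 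Your construction is cleaner and makes the threshold transparent; the paper's buys a solution written in the same variables used for necessity, so both directions of its proof read off one set of equations.
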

\noindent This result is remarkable since it shows that the Schmidt rank of $\Delta(\psi)$ is not a monotone under MIO.  In contrast, the diagonal rank is indeed a coherence monotone for IO \cite{Baumgratz-2014a}.   More generally, this example implies that all R\'{e}nyi entropies of the distribution $p_i=|\ip{\psi}{i}|^2$ with $\alpha\in[0,1/2)$ are \emph{not} monotones under MIO.

Turning now to DIO, we obtain new monotones by replacing the set $\mc{I}$ in the minimizations of Eqns. \eqref{Eq:Renyi-DIO-1} and \eqref{Eq:Renyi-DIO-2} by different sets.  First, by taking the singleton $\{\Delta(\rho)\}$ for each $\rho$, the analog to Eq. \eqref{Eq:Renyi-DIO-1} becomes
\begin{align}
C_{\Delta,\alpha}(\rho)&:=\tfrac{1}{\alpha-1}\log\tr[\rho^\alpha\left(\Delta(\rho)\right)^{1-\alpha}]\;\;\;\;\alpha\in[0,2].\notag
\end{align}
For a pure states $\ket{\psi}=\sum_i\sqrt{p_i}\ket{i}$, this expression yields $C_{\Delta,2-\alpha}(\psi)=S_\alpha(p)$.  Since this is a DIO monotone for $\alpha\in[0,2]$, when combined with Eq. \eqref{Eq:Renyi-Pure} and the fact that DIO $\subset$ MIO, it implies that all R\'{e}nyi entropies $S_\alpha(p_i)$ for pure states are DIO monotones.  This is in sharp contrast to MIO for which monotonicity only holds when $\alpha\geq 1/2$, as shown in Theorem \ref{Thm:MIO-pure}.

In addition, one can define the quantity
\[C_{\Delta,\alpha}^{(q)}(\rho):=\min_{\sigma\in A_\rho}D_\alpha^{(q)}(\rho||\sigma),\;\;\;\; \alpha\in[1/2,\infty],\]
where \\$A_\rho=\left\{\frac{(1+t)\Delta(\rho)-\rho}{t}\;\Big |\;t>0\;;\;\;(1+t)\Delta(\rho)-\rho\geq 0\right\}.$  It turns out that $C_{\Delta,\alpha}^{(q)}(\rho)$ is also a DIO monotone.  Similar to $C_{\alpha}^{(q)}(\rho)$, taking the limit $\alpha\to\infty$ for $C_{\Delta,\alpha}^{(q)}(\rho)$ yields $\log(1+C_{\Delta,R})$, where $C_{\Delta,R}$ is  a new type of robustness measure that we call the $\Delta$-Robustness of Coherence:
 \begin{align}
C&_{\Delta,R}(\rho)=\min_{t\geq 0}\left\{t\;\Big|\;\frac{\rho+t\sigma}{1+t}\in\mI,\;\sigma\geq 0,\;\Delta(\sigma-\rho)=0\right\}.\notag
 \end{align}
For qubits, $C_{\Delta,R}(\rho)$ is also a monotone under MIO, and as shown in the Appendix, the robustness measures $C_R(\rho)$ and $C_{\Delta,R}(\rho)$ completely characterize incoherent transformations of qubit states.
\begin{theorem}
For qubit state $\rho$ and $\sigma$, the transformation $\rho\to\sigma$ is possible by either SIO, DIO, IO, or MIO if and only if both $C_R(\rho)\geq C_R(\sigma)$ and $C_{\Delta,R}(\rho)\geq C_{\Delta, R}(\sigma)$.
\end{theorem}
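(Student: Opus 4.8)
The plan is to prove the two directions separately, reducing the whole statement to an explicit construction in the smallest class. For necessity, I would use that both $C_R$ and (on a qubit) $C_{\Delta,R}$ are established to be monotones under MIO. Since SIO $\subseteq$ IO $\subseteq$ MIO and SIO $\subseteq$ DIO $\subseteq$ MIO (every SIO channel commutes with $\Delta$ because each Kraus operator satisfies $K_n\Delta(\rho)K_n^\dagger=\Delta(K_n\rho K_n^\dagger)$, and DIO preserves $\mc{I}$), any transformation realizable in one of the four classes is in particular an MIO transformation and hence cannot increase either monotone. Thus $\rho\to\sigma$ forces $C_R(\rho)\ge C_R(\sigma)$ and $C_{\Delta,R}(\rho)\ge C_{\Delta,R}(\sigma)$. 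For sufficiency it then suffices to realize the transformation by SIO, since SIO is contained in each of the other three classes.

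To set up the construction I would first record the qubit closed forms, obtained by direct optimization of the two robustness SDPs: writing $\rho_{00},\rho_{11}$ for the populations and $\rho_{01}$ for the coherence, one has $C_R(\rho)=2|\rho_{01}|$ and $C_{\Delta,R}(\rho)=|\rho_{01}|/\sqrt{\rho_{00}\rho_{11}}$. Using incoherent unitaries (a diagonal phase making $\rho_{01}\ge 0$, and the bit flip to order the populations) I may assume $\rho$ and $\sigma$ are in the canonical form fixed by $(c,p):=(|\rho_{01}|,\rho_{00})$ with $p\le 1/2$; since this pair is itself determined by $(C_R,C_{\Delta,R})$, the hypotheses become $c'\le c$ and $c'/\sqrt{p'(1-p')}\le c/\sqrt{p(1-p)}$. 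The construction uses two elementary SIO maps. The first is the partial dephasing $\Delta_\lambda(\rho)=(1-\lambda)\rho+\lambda\Delta(\rho)$, which is SIO with Kraus operators $\sqrt{1-\lambda}\,I,\ \sqrt{\lambda}\op{0}{0},\ \sqrt{\lambda}\op{1}{1}$ and acts as $(c,p)\mapsto((1-\lambda)c,p)$. The second is the population-transfer channel with Kraus operators
\begin{equation}
K_1=\begin{pmatrix}\cos\theta & 0\\ 0 & \cos\psi\end{pmatrix},\qquad K_2=\begin{pmatrix}0 & \sin\psi\\ \sin\theta & 0\end{pmatrix},
\end{equation}
each diagonal or antidiagonal and hence SIO. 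With $\rho_{01}\ge 0$ it produces $p'=p\cos^2\theta+(1-p)\sin^2\psi$ and real coherence $c'=c\cos(\theta-\psi)$, and the key identity (completing the square) is
\begin{equation}
p'(1-p')=p(1-p)\cos^2(\theta-\psi)+\big(p\sin\theta\cos\theta-(1-p)\sin\psi\cos\psi\big)^2.
\end{equation}
This gives at once $C_R'=2c\cos(\theta-\psi)\le C_R$ and $C_{\Delta,R}'=c\cos(\theta-\psi)/\sqrt{p'(1-p')}\le c/\sqrt{p(1-p)}=C_{\Delta,R}$, re-proving monotonicity for this family, and it shows the $C_{\Delta,R}$-bound is saturated precisely when $p\sin 2\theta=(1-p)\sin 2\psi$.

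Assembling the protocol, for any admissible target I would first hit the population $p'$ with maximal coherence and then dephase down to $c'$. If $p'$ is at least as close to $1/2$ as $p$ (so $C_R$ binds), I take $\theta=\psi$, which preserves $c'=c$ while realizing any such $p'$, and then apply $\Delta_\lambda$ to reduce $c$ to the target $c'\le c$. If $p'$ is farther from $1/2$ (so $C_{\Delta,R}$ binds), I solve the population equation together with the saturation condition $p\sin2\theta=(1-p)\sin2\psi$, which attains coherence $c\,\sqrt{p'(1-p')}/\sqrt{p(1-p)}\ge c'$, and again dephase down to $c'$. A composition of SIO maps is SIO, and SIO $\subseteq$ DIO, IO, MIO, so the target is reached in all four classes.

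The step I expect to be the main obstacle is the last one: verifying that for every farther-from-$1/2$ target $p'$ the population equation and the saturation condition admit a joint solution with $\theta,\psi\in[0,\pi/2]$. I would resolve this by an intermediate-value argument along the saturation locus $\sin 2\psi=\tfrac{p}{1-p}\sin 2\theta$, which is solvable for all $\theta$ since $p\le 1/2$: as $\theta$ runs from $0$ to $\pi/2$ the induced population decreases continuously from $p$ to $0$, covering the whole admissible range, with the symmetric range supplied by the bit flip. The degenerate cases where $\rho$ or $\sigma$ is incoherent follow from the same two maps specialized to $c=0$.
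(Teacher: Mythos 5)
Your proposal is correct and follows essentially the same route as the paper: necessity from the MIO-monotonicity of $C_R$ and (for qubits) $C_{\Delta,R}$ together with the inclusions SIO $\subset$ IO, DIO $\subset$ MIO, and sufficiency via a two-step SIO protocol that first reaches the target populations at maximal coherence with a diagonal-plus-antidiagonal Kraus pair and then partially dephases. Indeed, your population-transfer channel is exactly the paper's channel $\rho\mapsto J\rho J^\dagger+K\rho K^\dagger$ under the identification $j_0=\cos\theta$, $j_1=\cos\psi$, $k_0=\sin\theta$, $k_1=\sin\psi$; the only difference is that the paper writes the coefficients in closed form in terms of $(p,q)$, where you establish solvability by an intermediate-value argument along the saturation locus.
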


In conclusion, we have introduced the class of PIO as a physically consistent resource theory of quantum coherence.  Because of PIO's sharply limited abilities, it is desirable to enlarge the free operations to include SIO, IO, DIO, or MIO.  This desire may even be experimentally motivated if one is not be concerned with physical implementations, but instead just wants to know what can be accomplished with a ``black box'' that performs SIO, IO, DIO, or MIO.  We have introduced new monotones for these classes and shown them to all be equivalent for qubit systems.  

A large number of additional results are presented in the Appendix.  In particular, we show that the incoherent Schmidt rank of a pure state is a monotone for SIO/IO/DIO, while it can be increased arbitrarily large by MIO.  The majorization criterion for pure-state transformations is shown to hold true for SIO.  On the other hand, we identify mistakes in the published proof for the claim that majorization likewise characterizes transformations by IO \cite{Du-2015a}.  We also comment on asymmetry-based approaches to quantum coherence and develop the resource theory of $N$-asymmetry.  In the setting of $N$-asymmetry, we find necessary and sufficient conditions for single-copy state transformations. Somewhat surprisingly, these conditions are very similar to the ones obtained in the resource theory of athermality in the limit of zero temperature~\cite{Varun}.  Finally a list of open problems in the resource theory of coherence is given.

\medskip

\emph{Note Added:---}  In the preparation of this letter we became aware of independent work by Marvian and Spekkens \cite{RobIman}, where the physical meaning of incoherent operations is analyzed and the class of dephasing-covariant incoherent operations is presented.

\emph{Acknowledgments:---}
E.C. is supported by the National Science
Foundation (NSF) Early CAREER Award No. 1352326.  G.G. research is supported by NSERC.

\bigskip

\begin{center}
{\Large Appendix}
\end{center}

\tableofcontents

\begin{figure}[b]
\includegraphics[scale=0.45]{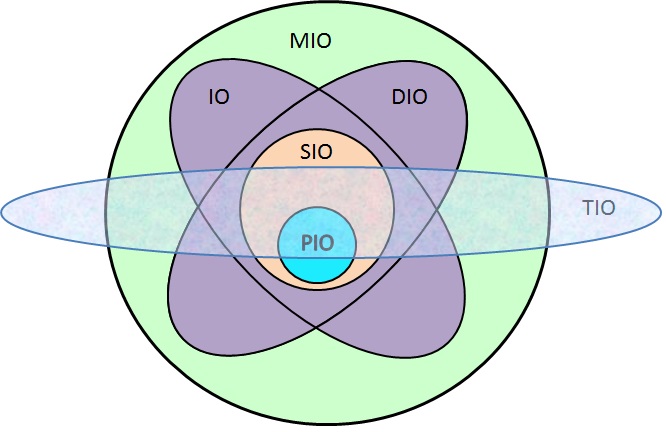}
\caption{\label{MIO}  Heuristic comparison between the 5 incoherence operations MIO/DIO/IO/SIO/PIO and the TIO. Clearly, PIO $\subset$ SIO, SIO $\subset$ IO, and SIO $\subset$ DIO. We also have  IO $\cup$ DIO $\subset$ MIO.
TIO is fundamentally different since the allowed operations in this class depend on the generator $H$ of translations.  In general, PIO will not be subset of TIO since PIO includes permutations. On the other hand, TIO $\not\subset$ MIO because TIO allows for decoherence-free subspaces.}
\end{figure}

\section{Five types of incoherent operations}

We introduce below five types of incoherence operations (IO): (1) Physically IO (PIO is introduced for the first time here)
(2) Strict IO (SIO was very recently introduced in~\cite{Yadin-2015a}, and here, among other things, we fix the error given for the form of the Kraus operators in~\cite{Yadin-2015a})) (3) IO (introduced first in~\cite{Baumgratz-2014a}) (4) Dephasing-covariant IO (DIO is introduced for the first time here) and finally (5) Maximal IO (MIO introduced initially in~\cite{Aberg-2004a} and is developed here). All these 5 types of IO are not completely independent as can be seen in Fig~1. Fig~1 demonstrates heuristically the facts that PIO $\subset$ SIO (i.e. PIO is a subset of SIO) and SIO $\subset$ IO as well as SIO $\subset$ DIO. Furthermore, DIO is not a subset of IO, and IO is not a subset of DIO. Both IO and DIO are subsets of MIO. 

There is another set of incoherent operations, called Translation Invariant Operations (TIO), which emerges in the resource theory of asymmetry.  These operations correspond yet to another type of coherence
that we discuss in Section~\ref{TIO}. A more comprehensive discussion on the distinction between 
TIO and the family of operations PIO/SIO/IO/DIO/MIO can be found in~\cite{RobIman}.

\subsection{Physical Incoherent Operations (PIO)}

We begin by characterizing the general form of a PIO map.  Recall that such a channel can be obtained by performing an incoherent unitary $U_{AB}$ on the input state $\rho_A$ and some fixed incoherent state $\hat{\rho}_B$, and then performing an incoherent projective measurement on system $B$.  Suppose that $\hat{\rho}_B=\sum_{y}p_y\op{y}{y}$ is an arbitrary incoherent state.  A joint incoherent unitary on $AB$ will take the form
\begin{equation}
U_{AB}=\sum_{xy}e^{i\theta_{xy}}\op{\pi_1(xy)\pi_2(xy)}{xy},
\end{equation}
where $(\pi_1(xy),\pi_2(xy))$ is the output of a permutation $\pi$ applied to $(x,y)$.  Let an incoherent projection $\{P_j\}_j$ be applied to system $B$, where $P_j=\sum_{y'\in S_j}\op{y'}{y'}$ for disjoint sets $S_j$.  Upon obtaining outcome $j$, the state of system $A$ is
\begin{align}
&\sum_y p_y \tr_B[\mbb{I}_A\otimes P_jU_{AB}(\rho^A\otimes\op{y}{y})U_{AB}^\dagger]\notag\\
&=\sum_yp_y\sum_{y'\in S_j} \notag\\
&\times\sum_{x,x':\atop \pi_2(xy)=\pi_2(x'y)=y'}e^{i(\theta_{xy}-\theta_{x'y})}\op{\pi_1(xy)}{x}\rho_A\op{x'}{\pi_1(x'y)}\notag\\
&=\sum_y p_y\sum_{y'\in S_j} U^{(y)}_{y'}P^{(y)}_{y'}\rho_A P^{(y)}_{y'}(U^{(y)}_{y'})^\dagger
\end{align}
where
\begin{equation}
U^{(y)}_{y'}=\sum_{x\atop \pi_2(xy)=y'}e^{i\theta_{x y}}\op{\pi_1(xy)}{x}+\sum_{x\atop \pi_2(xy)\not=y'}\op{\hat{\pi}^{(y)}_{y'}(x)}{x}
\end{equation}
for some suitably chosen permutation $\hat{\pi}^{(y)}_{y'}$ such that $U^{(y)}_{y'}$ is unitary, and
\begin{align}
P^{(y)}_{y'}&=\sum_{x:\atop \pi_2(x,y)=y'}\op{x}{x}.\notag
\end{align}
Notice that for each fixed $y$, the projectors $P^{(y)}_{y'}$ are orthogonal and satisfy $\sum_{y'}P_{y'}^{(y)}=\mbb{I}_A$.  This leads to the proposition:
\begin{proposition}
\label{Prop:PIO}
A CPTP map $\hat{\mc{E}}$ is a physically incoherent operation if and only if it can be expressed as a convex combination of maps each having Kraus operators $\{K_j\}_{j=1}^r$ of the form
\begin{equation}
\label{Eq:PIO-Kraus}
K_j=U_jP_j=\sum_{x}e^{i\theta_x}\op{\pi_j(x)}{x}P_{j},
\end{equation}
where the $P_j$ form an orthogonal and complete set of incoherent projectors on system $A$ and $\pi_j$ are permutations.
\end{proposition}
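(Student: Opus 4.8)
The plan is to prove both directions of the equivalence, treating the forward (``only if'') implication as essentially the computation already displayed and reserving the real work for the converse dilation.

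For the forward direction I would argue as follows. Since the free ancilla $\hat\rho_B$ is incoherent we may take it diagonal, $\hat\rho_B=\sum_y p_y\op{y}{y}$, and expand the joint incoherent unitary as the monomial operator $U_{AB}=\sum_{xy}e^{i\theta_{xy}}\op{\pi_1(xy)\pi_2(xy)}{xy}$ coming from a permutation $\pi=(\pi_1,\pi_2)$ of the product basis. Applying the incoherent projection $\{P_j\}$ on $B$ and tracing out $B$ gives exactly the displayed expression $\sum_y p_y\sum_{y'\in S_j}U^{(y)}_{y'}P^{(y)}_{y'}\rho_A P^{(y)}_{y'}(U^{(y)}_{y'})^\dagger$. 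The two things to check here are that each $U^{(y)}_{y'}$ is a genuine incoherent (monomial) unitary — which follows because, for fixed $y,y'$, the pairs $(\pi_1(xy),y')$ over $\{x:\pi_2(xy)=y'\}$ are distinct, so the phase-carrying terms map an orthonormal set to an orthonormal set and the completing permutation $\hat\pi^{(y)}_{y'}$ fills in the rest — and that, for each fixed $y$, the $\{P^{(y)}_{y'}\}_{y'}$ are orthogonal and sum to $\mbb{I}_A$. Finally, since classical post-processing $p_{k|j}$ satisfies $\sum_k p_{k|j}=1$, summing over the classical register leaves the reduced channel on $A$ unchanged, so the induced CPTP map is $\hat{\mc{E}}=\sum_y p_y\,\mc{E}^{(y)}$, a convex combination of maps whose Kraus operators $K^{(y)}_{y'}=U^{(y)}_{y'}P^{(y)}_{y'}$ have precisely the claimed form \eqref{Eq:PIO-Kraus}.

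For the converse I would build an explicit implementation. Given Kraus operators $K_j=U_jP_j$ with the $P_j$ orthogonal and complete, define the map $V$ on $A\otimes\ket{0}_B$ by $V\ket{x}_A\ket{0}_B=\sum_j K_j\ket{x}_A\otimes\ket{j}_B$. This $V$ is an isometry because $\sum_j K_j^\dagger K_j=\sum_j P_j U_j^\dagger U_j P_j=\sum_j P_j=\mbb{I}_A$. The key observation is that, because the $P_j$ are orthogonal, each $\ket{x}$ lies in the range of exactly one projector $P_{j(x)}$, so $V\ket{x}_A\ket{0}_B=e^{i\theta_x}\ket{\pi_{j(x)}(x)}_A\ket{j(x)}_B$ is a single phase times a product basis vector. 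Hence $V$ is a monomial partial isometry, and it is injective on the product basis because the output pairs $(\pi_{j(x)}(x),j(x))$ are distinct in $x$ (equal $B$-labels force equal $A$-labels via injectivity of the permutation). An injective monomial map of basis vectors to phased basis vectors extends to a full monomial — hence incoherent — unitary $U_{AB}$ on $A\otimes B$. Measuring $B$ in the incoherent basis $\{\ket{j}\}$ and discarding it then reproduces $\rho\mapsto\sum_j K_j\rho K_j^\dagger$, exhibiting this map as a PIO. Convex combinations are absorbed by letting the ancilla carry a classical label drawn with the mixing weights and then forgetting that label in the classical post-processing step permitted in the definition of PIO; thus any convex combination of maps of the stated form is again a PIO.

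The main obstacle is the converse, and within it the two structural points: verifying that the monomial partial isometry $V$ extends to a full \emph{incoherent} unitary (which rests on the injectivity argument above, itself a consequence of the orthogonality and completeness of the $P_j$ together with injectivity of the permutations), and checking that convex combinations can be realized inside the PIO scheme rather than only each extreme map. The forward direction is, by contrast, bookkeeping once one recognizes that the classical post-processing does not alter the reduced channel on $A$.
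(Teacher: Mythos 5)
Your proof is correct. The forward direction is essentially identical to the paper's own argument: the paper expands the diagonal ancilla $\hat\rho_B=\sum_y p_y\op{y}{y}$, the monomial joint unitary, and the incoherent projection $\{P_j\}$ on $B$, arrives at the same expression $\sum_y p_y\sum_{y'\in S_j}U^{(y)}_{y'}P^{(y)}_{y'}\rho_A P^{(y)}_{y'}(U^{(y)}_{y'})^\dagger$, notes that for each fixed $y$ the $P^{(y)}_{y'}$ are orthogonal and complete, and concludes the proposition from there; your added check that each $U^{(y)}_{y'}$ is genuinely unitary (via injectivity of $x\mapsto\pi_1(xy)$ on $\{x:\pi_2(xy)=y'\}$) is a detail the paper merely asserts by positing a ``suitably chosen'' completing permutation. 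The converse is where you go beyond the paper, which gives no explicit argument for the ``if'' direction: your dilation $V\ket{x}_A\ket{0}_B=\sum_j K_j\ket{x}_A\ket{j}_B$, the observation that orthogonality, completeness, and incoherence of the $P_j$ assign each basis vector $\ket{x}$ a unique index $j(x)$ so that $V$ is an injective monomial partial isometry extendable to an incoherent unitary $U_{AB}$, and the controlled-label implementation of convex combinations (a unitary controlled on an incoherent ancilla label is still monomial, and forgetting the label is permitted classical post-processing) are all sound, and together they supply the direction the paper leaves implicit.
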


\subsubsection{State Transformations}

Proposition \ref{Prop:PIO} shows that there is very little freedom in the allowable Kraus operators for a PIO map. The following lemma completely characterizes pure state transformations by PIO.
\begin{proposition}
For any two state $\ket{\psi}$ and $\ket{\phi}$, the transformation $\ket{\psi}\to\ket{\phi}$ is possible by PIO if and only if 
\begin{equation}
\label{Eq:PIO-state-transformation-form}
\ket{\psi}=\sum_{i=1}^k \sqrt{p_i} U_i\ket{\phi},
\end{equation}
where the $U_i$ are incoherent isometries such that $P_iU_i\ket{\phi}=U_i\ket{\phi}$ for an orthogonal and complete set of incoherent projectors $\{P_i\}_i$.
\end{proposition}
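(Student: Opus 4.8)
The plan is to prove both implications directly from the Kraus characterization of PIO given in Proposition~\ref{Prop:PIO}, reducing the deterministic pure-state transformation to a statement about how the incoherent projectors resolve $\ket{\psi}$ and $\ket{\phi}$.

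For the reverse (sufficiency) direction, I would assume the decomposition $\ket{\psi}=\sum_i\sqrt{p_i}U_i\ket{\phi}$ with the stated properties and exhibit an explicit PIO map. The natural candidate is the channel with Kraus operators $K_i=U_i^\dagger P_i$, each of which has the form of Eq.~\eqref{Eq:PIO-Kraus} since the adjoint of an incoherent unitary is again an incoherent unitary. The first step is to observe that the range condition $P_iU_i\ket{\phi}=U_i\ket{\phi}$ together with orthogonality $P_iP_j=\delta_{ij}P_i$ forces $P_iU_j\ket{\phi}=0$ for $j\neq i$, so that $P_i\ket{\psi}=\sqrt{p_i}\,U_i\ket{\phi}$. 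Consequently $K_i\ket{\psi}=U_i^\dagger P_i\ket{\psi}=\sqrt{p_i}\ket{\phi}$, and since the $U_i\ket{\phi}$ lie in mutually orthogonal ranges we have $\sum_ip_i=\ip{\psi}{\psi}=1$, giving $\sum_iK_i\op{\psi}{\psi}K_i^\dagger=\op{\phi}{\phi}$. Trace preservation is immediate from $\sum_iK_i^\dagger K_i=\sum_iP_iU_iU_i^\dagger P_i=\sum_iP_i=\mbb{I}$, so the map is a bona fide PIO implementing $\ket{\psi}\to\ket{\phi}$.

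For the forward (necessity) direction, I would start from an arbitrary PIO map $\hat{\mc{E}}$ with $\hat{\mc{E}}(\op{\psi}{\psi})=\op{\phi}{\phi}$ and write it, via Proposition~\ref{Prop:PIO}, as a convex combination of maps each with Kraus operators $K_j^{(\ell)}=U_j^{(\ell)}P_j^{(\ell)}$. Because the output $\op{\phi}{\phi}$ has rank one and is written as a nonnegative sum of positive rank-one terms, every summand must be proportional to $\op{\phi}{\phi}$; hence $K_j^{(\ell)}\ket{\psi}=\lambda_{j\ell}\ket{\phi}$ for each $j,\ell$. Fixing any single component $\ell$ of the convex decomposition and using that its projectors are complete, $\sum_jP_j^{(\ell)}=\mbb{I}$, I would invert each Kraus relation as $P_j^{(\ell)}\ket{\psi}=\lambda_{j\ell}(U_j^{(\ell)})^\dagger\ket{\phi}$ and sum over $j$ to obtain $\ket{\psi}=\sum_j\lambda_{j\ell}(U_j^{(\ell)})^\dagger\ket{\phi}$. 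Setting $U_j:=(U_j^{(\ell)})^\dagger$ (incoherent isometries, after absorbing the phase of $\lambda_{j\ell}$) and $p_j:=|\lambda_{j\ell}|^2$ recovers the asserted form; trace preservation of the $\ell$-th component gives $\sum_j|\lambda_{j\ell}|^2=\ip{\psi}{\psi}=1$, and applying $P_j^{(\ell)}$ once more to $P_j^{(\ell)}\ket{\psi}=\sqrt{p_j}U_j\ket{\phi}$ yields the range condition $P_jU_j\ket{\phi}=U_j\ket{\phi}$.

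The main obstacle is conceptual rather than computational and lies in the forward direction: one must recognize that the convex structure of a general PIO map is harmless here, because a rank-one deterministic output forces collapse onto $\ket{\phi}$ within each component separately, so that a single completing family of projectors already produces the decomposition. The remaining care is purely bookkeeping---tracking the phases $e^{i\theta_x}$ into the definition of the incoherent isometries and discarding any terms with $p_j=0$ before asserting the range condition.
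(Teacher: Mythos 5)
Your proof is correct and follows essentially the same route as the paper's: necessity by observing that every Kraus operator $U_jP_j$ must send $\ket{\psi}$ to a multiple of $\ket{\phi}$, then inverting to get $P_j\ket{\psi}=\sqrt{p_j}\,U_j^\dagger\ket{\phi}=\sqrt{p_j}\,P_jU_j^\dagger\ket{\phi}$ and summing over $j$; sufficiency via the incoherent measurement $\{P_i\}_i$ followed by the correcting isometry, i.e.\ Kraus operators $K_i=U_i^\dagger P_i$. Your explicit treatment of the convex-combination structure of a general PIO map (using purity of the output to collapse onto a single component) is a detail the paper's proof glosses over, but it is a refinement of the same argument rather than a different approach.
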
 
\begin{proof}
Necessity of this condition follows from the form of $K_j$ as given in Eq. \eqref{Eq:PIO-Kraus}.  Since $K_j\ket{\psi}\propto\ket{\phi}$ for every $j$, we must have $\frac{1}{\sqrt{p_j}}U_jP_j\ket{\psi}=\ket{\phi}$.  Thus, 
\[\frac{1}{\sqrt{p_j}}P_j\ket{\psi}=U_j^\dagger\ket{\phi}=P_jU_j^\dagger\ket{\phi}.\]  Sufficiency of Eq. \eqref{Eq:PIO-state-transformation-form} can likewise be seen.  Given the form of Eq. \eqref{Eq:PIO-state-transformation-form}, one performs the incoherent projection $\{P_i\}_i$ on $\ket{\psi}$.  Since $P_jU_i\ket{\phi}=0$ for $i\not=j$, outcome $P_j$ renders the post-measurement state $U_j\ket{\phi}$.  The transformation is complete by applying $U_j^\dagger$.
\end{proof}

A generic state $\ket{\psi}$ will not have a decomposition given by Eq. \eqref{Eq:PIO-state-transformation-form} for $k>1$.  Thus, most pure states cannot be transformed into any other outside of their respective incoherent unitary equivalence class.  This situation is highly reminiscent of multipartite entanglement in which most pure states cannot be transformed to any another other outside their respective LU equivalence class.

In the asymptotic setting of many copies, the power of PIO is greatly improved.  The following proposition shows that PIO is just as powerful as Maximally Incoherent Operations (MIO) in terms of distilling maximally coherent bits $\ket{+}=\sqrt{1/2}(\ket{0}+\ket{1})$ from many copies of a pure state.  The optimal distillation rate under MIO is given by $S[\Delta(\psi)]$, where $S[\rho]=-tr[\rho\log\rho]$ is the von Neumann entropy \cite{Winter-2015a}.  
\begin{proposition}[\cite{Winter-2015a}]
\label{Prop:Winter}
For any $\epsilon>0$ and $n$ sufficiently large, the transformation $\ket{\psi}^{\otimes n}\to\overset{\epsilon}{\approx}\ket{+}^{\otimes \lfloor nR\rfloor}$ is possible by PIO whenever $R<S[\Delta(\psi)]$.
\end{proposition}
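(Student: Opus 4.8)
The plan is to give a constructive PIO protocol, built from the method of types, that reduces the whole distillation to a single nontrivial move: converting a uniform (maximally coherent) superposition of non-power-of-two dimension into copies of $\ket{+}$. After applying a diagonal incoherent unitary we may assume $\ket{\psi}=\sum_i\sqrt{p_i}\ket{i}$ with real nonnegative amplitudes, so that $H:=S[\Delta(\psi)]=-\sum_ip_i\log p_i$. Writing $\ket{\psi}^{\otimes n}=\sum_{\mathbf{x}}\sqrt{p_{\mathbf x}}\ket{\mathbf x}$ and grouping the basis strings $\mathbf x$ by their empirical type $t$, every string in a type class $T_t$ carries the same weight, so $\ket{\psi}^{\otimes n}=\sum_t\sqrt{P(t)}\,\ket{\Psi_{M_t}}$, where $M_t=|T_t|$ and $\ket{\Psi_{M_t}}=M_t^{-1/2}\sum_{\mathbf x\in T_t}\ket{\mathbf x}$ is a uniform superposition. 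The first step is the type measurement $\{P_t\}$ with $P_t=\sum_{\mathbf x\in T_t}\op{\mathbf x}{\mathbf x}$; since each $P_t$ is diagonal in the product incoherent basis, this is an incoherent projective measurement and hence a legitimate PIO step.

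Next I would fix the target $k=\lfloor nR\rfloor$ and choose $\delta>0$ with $R<H-\delta$ (possible since $R<H$). By the standard size estimates for type classes, the typical types (those whose empirical distribution is within $\delta$ of $p$) carry total probability $\ge 1-\epsilon/2$ and satisfy $M_t\ge 2^{n(H-\delta)}$ for $n$ large. Conditioned on a typical outcome $t$ we hold the maximally coherent state $\ket{\Psi_{M_t}}$ of dimension $M_t\gg 2^k$. Here lies the one genuine obstacle: by the preceding Proposition (the PIO pure-state characterization), $\ket{\Psi_{M}}$ can be taken to $\ket{+}^{\otimes k}=\ket{\Psi_{2^k}}$ by PIO \emph{deterministically} only when $2^k\mid M$---unlike the LOCC analog, PIO cannot freely lower the dimension of a uniform superposition. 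I would circumvent this by a second incoherent projection that splits $T_t$ into a \emph{good} part consisting of the first $2^k\lfloor M_t/2^k\rfloor$ basis strings (a disjoint union of $q_t:=\lfloor M_t/2^k\rfloor$ blocks of size $2^k$) and a \emph{remainder} of fewer than $2^k$ strings. The good outcome occurs with probability $2^kq_t/M_t=1-r_t/M_t\ge 1-2^k/M_t\ge 1-2^{-n(H-\delta-R)}$, which tends to $1$ because $H-\delta-R>0$.

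Conditioned on the good outcome the post-measurement state is the uniform superposition $\ket{\Psi_{2^kq_t}}$, whose dimension is now divisible by $2^k$. I would finish by exhibiting the decomposition $\ket{\Psi_{2^kq_t}}=\sum_{i=1}^{q_t}\sqrt{1/q_t}\,U_i\ket{+}^{\otimes k}$, where $U_i$ is the incoherent permutation sending the computational basis of the target register onto the $i$-th block; since the $U_i\ket{+}^{\otimes k}$ are supported on the orthogonal blocks $P_i$ and tile $\ket{\Psi_{2^kq_t}}$ with uniform amplitude $(2^kq_t)^{-1/2}$, the PIO pure-state Proposition yields an \emph{exact} PIO conversion $\ket{\Psi_{2^kq_t}}\to\ket{+}^{\otimes k}$. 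The entire procedure is a composition of incoherent projective measurements and incoherent unitaries with classical conditioning, so it is a bona fide PIO map matching the scheme that defines PIO.

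It remains only to collect the error. The protocol fails to output $\ket{+}^{\otimes k}$ only on an atypical type (total probability $\le\epsilon/2$) or on the remainder branch (probability $\le 2^{-n(H-\delta-R)}$); outputting any fixed state on these branches, the averaged channel output obeys $\tfrac12\|\hat{\mathcal E}(\psi^{\otimes n})-\op{+^{\otimes k}}{+^{\otimes k}}\|_1\le \epsilon/2+2^{-n(H-\delta-R)}\le\epsilon$ for $n$ large, which is exactly $\ket{\psi}^{\otimes n}\overset{\epsilon}{\approx}\ket{+}^{\otimes\lfloor nR\rfloor}$. The main obstacle throughout is the divisibility constraint $2^k\mid M$ forced by the PIO pure-state Proposition---the point at which PIO's characteristic weakness surfaces---and the crux of the argument is that the type-size bound $M_t\ge 2^{n(H-\delta)}$ makes the discarded remainder exponentially unlikely, so that the full rate $H$ is recovered despite PIO's inability to merge amplitude at will.
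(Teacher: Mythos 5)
Your proof is correct, and it takes a more self-contained route than the paper, whose entire proof of this proposition is a citation of Theorem 3 of Winter and Yang together with the remark that the protocol there consists only of incoherent unitaries and projections and can therefore be accomplished by PIO. You reconstruct that protocol explicitly---type measurement, restriction to typical types, conversion of the resulting uniform superpositions---and, crucially, you verify each step against the paper's own characterization of PIO pure-state transformations (Proposition \ref{Prop:PIO} and the decomposition $\ket{\psi}=\sum_i\sqrt{p_i}\,U_i\ket{\phi}$). The genuinely new content in your write-up is the explicit handling of the divisibility obstruction: an IO protocol can convert the maximally coherent state $\ket{\Psi_{M_t}}$ into $\ket{+}^{\otimes k}$ whenever $2^k\le M_t$ (by majorization), but a deterministic PIO conversion exists only when $2^k\mid M_t$, so your extra incoherent projection onto the first $2^k\lfloor M_t/2^k\rfloor$ basis strings, with its exponentially small failure probability $2^{-n(H-\delta-R)}$, is precisely what is needed to make the paper's one-line claim rigorous inside the PIO framework---this is exactly the step at which a PIO protocol must deviate from an IO one. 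The paper's approach buys brevity and attribution; yours buys a complete, self-contained proof that the rate $S[\Delta(\psi)]$ survives the restriction to PIO. Two minor points to tidy: the block projectors $P_1,\dots,P_{q_t}$ in your final decomposition should be completed by the projector onto the orthogonal complement of the good subspace, so that the projector set is complete as the proposition formally requires; and the closing trace-distance estimate follows because the channel output is the mixture $p_{\mathrm{succ}}\op{+^{\otimes k}}{+^{\otimes k}}+(1-p_{\mathrm{succ}})\sigma$, whence the distance to the target is at most $1-p_{\mathrm{succ}}$, which is the failure probability you bounded.
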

\begin{proof}
The proof for this is presented in Theorem 3 of Ref. \cite{Winter-2015a} where the authors consider distillation using more general Incoherent Operations (IO).  However, their protocol consists of incoherent unitaries and projections, and therefore it can be accomplished using PIO.
\end{proof}

Rather surprisingly, the reverse transformation $\ket{+}^{\otimes m}\to\overset{\epsilon}{\approx}\ket{\psi}^{\otimes n}$ is not possible for any coherent state $\ket{\psi}$ that is not maximally coherent, i.e. if $\ket{\psi}$ is not of the form $\frac{1}{\sqrt{d}}\sum_{x=1}^de^{i\theta_x}\ket{x}$.  As described in the main text, a proof of this fact follows from communication complexity results in LOCC entanglement transformations.  The key idea is that a PIO transformation $\rho \to\sum_jp_j\rho_j\otimes \op{j}{j}$ can be converted into a bipartite LOCC transformation $\rho^{(mc)}\to\sum_jp_j\rho^{(mc)}_j\otimes \op{jj}{jj}$ with no communication, where 
\begin{equation}
\label{Eq:MC}
\rho=\sum_{xy}c_{xy}\op{x}{y}\quad\Leftrightarrow\quad\rho^{(mc)}=\sum_{xy}c_{cy}\op{xx}{yy},
\end{equation}
 and likewise for the $\rho_j\Leftrightarrow\rho_j^{(mc)}$.  Specifically, if $\{U_jP_j\}$ is the PIO measurement, then the corresponding LOCC protocol consists of Alice locally measuring $\{U_jP_j\}$, Bob learning the outcome of this measurement through the projective measurement $\{P_j\}$, and then him applying the corresponding $U_j$.  Therefore, if $\op{+}{+}^{\otimes m}\to\sum_jp_j\op{\psi_j}{\psi_j}\otimes \op{j}{j}$ by PIO with $\sum_j p_j\op{\psi_j}{\psi_j}\overset{\epsilon}{\approx}\op{\psi}{\psi}^{\otimes n}$ for arbitrarily small $\epsilon$ and $m$ is sufficiently large, then it is possible to transform sufficiently large copies of an EPR state arbitrarily close to $\ket{\psi^{(mc)}}^{\otimes n}$ by local operations and no communication.  However, as proven in Refs. \cite{Hayden-2003a, Harrow-2003a}, for any fixed $n$, there exists an $\epsilon$-dependent lower bound on the communication needed to perform such an entanglement dilution, provided $\ket{\psi^{(mc)}}$ is not maximally entangled or a product state.  

From this result we see that maximally coherent states are the weakest among all pure states, in terms of their ability to transform into other states.  Under asymptotic PIO, the entire hierarchy of coherent states gets turned upside down

\subsection{Strictly Incoherent Operations (SIO)}
\begin{definition}
Let $\mE^{A\to B}:\mL(\mH^A)\to\mL(\mH^B)$ be a CPTP map. Then, $\mE^{A\to B}$ is said to be a Strictly Incoherent Operation (SIO) if it can be represented by Kraus operators $\{M_j\}$ such that
\be
\Delta\left(M_j\rho M_j^\dagger\right)=
M_j\Delta(\rho)M_j^\dagger\;\;\;\forall j,\;\forall\;\rho\;.
\ee
\end{definition}
\begin{lemma}
\label{Lem:SIO-Characterization}
Let $\mE^{A\to B}:\mL(\mH^A)\to\mL(\mH^B)$ be a CPTP map. Then, $\mE^{A\to B}$ is SIO if and only if it can be represented by Kraus operators $\{M_j\}$ of the form
\begin{equation}
\label{Eq:SIO-Kraus-Form}
M_j=\sum_{x=1}^{d_A}c_{jx}\op{\pi_j(x)}{x}.
\end{equation}
\end{lemma}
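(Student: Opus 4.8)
The plan is to work entirely in the incoherent product basis and to reduce the defining operator identity $\Delta(M_j\rho M_j^\dagger)=M_j\Delta(\rho)M_j^\dagger$ to scalar conditions on the matrix entries of each $M_j$. Writing $M_j=\sum_{m,x}(M_j)_{mx}\op{m}{x}$, where $x$ ranges over the incoherent basis of $A$ and $m$ over that of $B$, I would expand both sides of the condition as bilinear forms in the entries $\rho_{xy}=\bra{x}\rho\ket{y}$.

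The key step is that both sides are linear in $\rho$ and the identity must hold for every $\rho$, so I can equate the coefficient of each independent entry $\rho_{xy}$. A direct computation gives, for the left-hand side, $\Delta(M_j\rho M_j^\dagger)=\sum_{x,y}\rho_{xy}\sum_m (M_j)_{mx}\overline{(M_j)_{my}}\,\op{m}{m}$, whereas only the diagonal terms $x=y$ survive on the right, yielding $M_j\Delta(\rho)M_j^\dagger=\sum_i\rho_{ii}\sum_{m,n}(M_j)_{mi}\overline{(M_j)_{ni}}\,\op{m}{n}$. Matching coefficients produces two conditions: (i) for $m\ne n$, $(M_j)_{mi}\overline{(M_j)_{ni}}=0$ for every $i$; and (ii) for $x\ne y$, $(M_j)_{mx}\overline{(M_j)_{my}}=0$ for every $m$.

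Condition (i) says that in each column $i$ at most one entry is nonzero, i.e. $M_j\ket{x}$ is proportional to a single basis vector; setting $\pi_j(x)$ to be the row of that nonzero entry and $c_{jx}=(M_j)_{\pi_j(x),x}$ immediately gives the claimed form $M_j=\sum_x c_{jx}\op{\pi_j(x)}{x}$. Condition (ii) then forces $\pi_j$ to be injective on its support, so distinct columns hit distinct rows and $\pi_j$ is a genuine permutation after a harmless extension. For the converse I would simply substitute an operator of this form back into (i) and (ii) and verify they hold, which is immediate since each column and each row then carries a single nonzero entry.

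The main obstacle is conceptual bookkeeping rather than depth: one must be careful that $\Delta$ acts on the output space $B$ on the left-hand side but on the input space $A$ on the right, and that the ``for all $\rho$'' quantifier legitimately lets one equate the coefficients of the off-diagonal entries $\rho_{xy}$ with $x\ne y$ (which appear only on the left) separately from the diagonal ones. Handling the degenerate case of identically zero columns, so that $\pi_j$ need only be defined on the support of $M_j$, is the one point requiring a small remark.
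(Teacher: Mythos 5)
Your proof is correct and takes essentially the same approach as the paper: equating the coefficients of the entries $\rho_{xy}$ is precisely the paper's step of evaluating the SIO identity on the basis operators $\op{x}{x}$ and $\op{x}{x'}$ (giving $\Delta(M_j\op{x}{x}M_j^\dagger)=M_j\op{x}{x}M_j^\dagger$ and $\Delta(M_j\op{x}{x'}M_j^\dagger)=0$), from which both arguments conclude that each column of $M_j$ has a single nonzero entry and that the resulting map $f_j$ is injective, hence a permutation. Your remarks on extending the ``for all $\rho$'' quantifier by linearity and on zero columns are the same implicit conventions the paper uses, and sufficiency is dismissed as immediate in both.
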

\begin{proof}
Sufficiency is obvious to check.  Suppose now that $\mE^{A\to B}$ is SIO.  Following same arguments of  Lemma \ref{Lem:DIO-Characterization}, there must exist Kraus operators $\{M_j\}$ with the properties that
\begin{align}
\Delta\left(M_j\op{x}{x} M_j^\dagger\right)&=M_j\op{x}{x} M_j^\dagger\;\;\text{ and}\label{cond1SIO}\\
\Delta \left(M_j |x\lr x'| M_j^\dagger\right)&=0\;\label{cond2SIO}
\end{align}
for all $x',x\in\{1,...,d_A\}$ with $x'\neq x$.  Eq. \eqref{cond1SIO} implies that
\begin{equation}
M_j=\sum_{x=1}^{d_A}c_{j,x}\op{f_j(x)}{x},
\end{equation}
where $f_j:\{1,\cdots,d_A\}\to\{1,\cdots,d_A\}$.  Eq. \eqref{cond2SIO} implies that 
\begin{equation}
\bra{y}M_j\op{x}{x'}M_j^\dagger\ket{y}=0\;\;\;\forall x,x',y,
\end{equation}
which is equivalent to the condition that $f_j$ is one-to-one.  Thus, $f_j$ is a permutation $\pi_j$ and $M_j$ takes the form of Eq. \eqref{Eq:SIO-Kraus-Form}.
\end{proof}

In Ref. \cite{Yadin-2015a} it is claimed that if a set of Kraus operators $\{M_j\}$ represents a strictly incoherent operation, then it has a physical implementation described by
\begin{align}
\rho_A\to &\sum_{j}M_j\rho_A M_j^\dagger\otimes\op{j}{j}\notag\\
=& \sum_j {}_B\bra{j}U_{AB}(\rho_A\otimes\op{0}{0})U_{AB}^\dagger\ket{j}_B \otimes\op{j}{j},
\end{align}
where $U_{AB}$ is a Stinespring dilation of the form
\begin{equation}
U_{AB}=\sum_{j} \op{\pi(j)}{j}_A\otimes \op{\psi_j}{0}_B
\end{equation}
and $\pi$ is a permutation and $\ket{\psi_i}$ is an arbitrary state.  However, this form is not general enough to fully characterize SIO Stinespring dilations.
\begin{proposition}
Let $\mE^{A\to B}:\mL(\mH^A)\to\mL(\mH^B)$ be a CPTP map. Then, $\mE^{A\to B}$ is SIO if and only if it has a Stinespring dilation
\begin{equation}
U_{AB}=\sum_{j,x} c_{jx}\op{\pi_j(x)}{x}\otimes \op{j}{0}.
\end{equation}
\end{proposition}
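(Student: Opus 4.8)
The plan is to leverage the Kraus-operator characterization of SIO already established in Lemma~\ref{Lem:SIO-Characterization} and to translate it into the language of Stinespring dilations, exploiting the correspondence between a fixed choice of Kraus operators $\{M_j\}$ and an isometric dilation $V:\ket{\psi}_A\mapsto\sum_j M_j\ket{\psi}\otimes\ket{j}_B$. Concretely, a map $U_{AB}$ of the stated form acts on the input block $\mH^A\otimes\ket{0}_B$ as $U_{AB}(\ket{\psi}_A\otimes\ket{0}_B)=\sum_{j}\big(\sum_x c_{jx}\op{\pi_j(x)}{x}\ket{\psi}\big)\otimes\ket{j}_B=\sum_j M_j\ket{\psi}\otimes\ket{j}_B$, so the columns $\op{j}{0}$ encode precisely the Kraus operators $M_j=\sum_x c_{jx}\op{\pi_j(x)}{x}$. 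Both directions of the equivalence then reduce to recognizing this SIO Kraus form and checking the isometry constraint.

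For the ``only if'' direction, I would start from an SIO map and invoke Lemma~\ref{Lem:SIO-Characterization} to fix Kraus operators of the form in Eq.~\eqref{Eq:SIO-Kraus-Form}, namely $M_j=\sum_{x}c_{jx}\op{\pi_j(x)}{x}$. I then define the linear map $V=\sum_{j,x}c_{jx}\op{\pi_j(x)}{x}_A\otimes\ket{j}_B$ on $\mH^A\otimes\ket{0}_B$ and verify it is an isometry. The key computation is that $V^\dagger V=\sum_x\big(\sum_j|c_{jx}|^2\big)\op{x}{x}$, using $\langle\pi_j(x)|\pi_j(x')\rangle=\delta_{xx'}$ (since each $\pi_j$ is a permutation) together with $\langle j|j'\rangle=\delta_{jj'}$ to collapse the cross terms. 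This equals $\mbb{I}_A$ precisely because the trace-preservation condition $\sum_j M_j^\dagger M_j=\mbb{I}_A$ is equivalent to $\sum_j|c_{jx}|^2=1$ for every $x$. Having established that $V$ is an isometry from $\mH^A$ (identified with $\mH^A\otimes\ket{0}_B$) into $\mH^A\otimes\mH^B$, I extend it to a full unitary $U_{AB}$ on $\mH^A\otimes\mH^B$ by the standard fact that any isometry completes to a unitary on a large enough space, enlarging $\mH^B$ if necessary; the extension leaves the $\op{j}{0}$ block untouched, giving $U_{AB}$ of the claimed form.

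For the ``if'' direction, I would take $U_{AB}=\sum_{j,x}c_{jx}\op{\pi_j(x)}{x}\otimes\op{j}{0}$ and read off the induced Kraus operators via $M_k={}_B\bra{k}U_{AB}\ket{0}_B=\sum_x c_{kx}\op{\pi_k(x)}{x}$. These are exactly of the form~\eqref{Eq:SIO-Kraus-Form}, so Lemma~\ref{Lem:SIO-Characterization} immediately certifies that the reduced channel $\rho_A\mapsto\tr_B[U_{AB}(\rho_A\otimes\op{0}{0})U_{AB}^\dagger]=\sum_j M_j\rho_A M_j^\dagger$ is SIO.

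The main obstacle is conceptual rather than computational: one must be clear that the expression $U_{AB}=\sum_{j,x}c_{jx}\op{\pi_j(x)}{x}\otimes\op{j}{0}$ only prescribes the action of $U_{AB}$ on the $B=\ket{0}$ input block, with the remaining matrix elements free to be completed into a unitary. This is exactly the loophole that the earlier ansatz $\sum_j\op{\pi(j)}{j}_A\otimes\op{\psi_j}{0}_B$ misses: that form forces a single global permutation of the $A$-register and a product output, whereas the correct dilation permits a \emph{different} permutation $\pi_j$ for each measurement outcome $j$. Making this distinction precise, and confirming that the per-outcome permutations are what simultaneously guarantee the isometry condition and reproduce the most general SIO Kraus structure, is the crux of the argument.
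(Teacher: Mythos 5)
Your proposal is correct and follows essentially the same route as the paper: both directions rest on Lemma~\ref{Lem:SIO-Characterization} combined with the standard correspondence $U_{AB}=\sum_j M_j\otimes\op{j}{0}$ between a Kraus representation and its Stinespring dilation. The paper states this in one line, while you additionally spell out the routine details it leaves implicit (the isometry check $V^\dagger V=\mbb{I}_A$ from trace preservation, the unitary completion off the $\ket{0}_B$ block, and reading off $M_k={}_B\bra{k}U_{AB}\ket{0}_B$ for the converse), all of which are accurate.
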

\begin{proof}
This follows directly from Lemma \ref{Lem:SIO-Characterization} and the fact that a general Stinespring dilation of $\mE^{A\to B}$ for SIO Kraus operators $\{M_j\}$ can be written as
\begin{align}
U_{AB}&=\sum_{j} M_j\otimes \op{j}{0}\notag\\
&=\sum_{j}\sum_{x=1}^{d_A}c_{jx}\op{\pi_j(x)}{x}\otimes\op{j}{0}.
\end{align}
\end{proof}

\subsubsection{Relating SIO to Maximally Correlated LOCC}
The discussion after Proposition \ref{Prop:Winter} describes how every PIO operation can be translated into a zero communication LOCC protocol.  A similar relationship holds for SIO and one-way LOCC. 
\begin{proposition}
\label{Prop:SIO-LOCC}
Using the notation of Eq. \eqref{Eq:MC}, if $\rho\to\sigma$ by SIO, then there exists a bipartite LOCC transformation $\rho^{(mc)}\to\sigma^{(mc)}$.
\end{proposition}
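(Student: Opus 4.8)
The plan is to exploit the rigid ``generalized permutation'' structure of SIO Kraus operators furnished by Lemma~\ref{Lem:SIO-Characterization}. Fix a Kraus representation $\{M_j\}$ of the SIO map realizing $\rho\to\sigma$, so that $M_j=\sum_x c_{jx}\op{\pi_j(x)}{x}$ with each $\pi_j$ a permutation, and $\sigma=\sum_j M_j\rho M_j^\dagger$. The central device is the embedding isometry $W=\sum_x\op{xx}{x}:\mH^A\to\mH^A\otimes\mH^B$, which by construction satisfies $\rho^{(mc)}=W\rho W^\dagger$ for every input; indeed $W$ simply sends $\ket{x}\mapsto\ket{xx}$, duplicating the incoherent index onto Bob's register.

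Next I would assign to Bob, for each measurement outcome $j$, the permutation unitary $V_j=\sum_x\op{\pi_j(x)}{x}$ acting on $\mH^B$. The key step---and the only place where the permutation (rather than merely function) structure of SIO is essential---is the intertwining relation
\[
(M_j\otimes V_j)\,W=W\,M_j,
\]
which is checked immediately on basis vectors, since both sides send $\ket{x}$ to $c_{jx}\ket{\pi_j(x)\pi_j(x)}$. Conjugating $\rho^{(mc)}=W\rho W^\dagger$ by $M_j\otimes V_j$ and using this relation gives $(M_j\otimes V_j)\rho^{(mc)}(M_j\otimes V_j)^\dagger=W\,(M_j\rho M_j^\dagger)\,W^\dagger=(M_j\rho M_j^\dagger)^{(mc)}$. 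Summing over $j$ and invoking linearity of the maximally correlated embedding of Eq.~\eqref{Eq:MC} then yields $\sum_j(M_j\otimes V_j)\rho^{(mc)}(M_j\otimes V_j)^\dagger=\big(\sum_j M_j\rho M_j^\dagger\big)^{(mc)}=\sigma^{(mc)}$.

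It remains to package this as a genuine one-way LOCC protocol. Alice performs the local instrument $\{M_j\}$ on system $A$---valid since $\sum_j M_j^\dagger M_j=\mbb{I}$ because the map is CPTP---obtains outcome $j$, and communicates it to Bob, who applies the unitary $V_j$ to system $B$. Because each $V_j$ is unitary, $\sum_j(M_j\otimes V_j)^\dagger(M_j\otimes V_j)=\mbb{I}_A\otimes\mbb{I}_B$, so this is a trace-preserving one-way LOCC operation whose outcome-averaged action is exactly the channel computed above, deterministically producing $\sigma^{(mc)}$ from $\rho^{(mc)}$.

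The main obstacle is conceptual rather than computational: one must recognize that SIO is precisely the class for which Alice's local action can be \emph{mirrored} by a unitary on Bob's copy, and this hinges on each $M_j$ being a monomial matrix (one nonzero entry per row \emph{and} per column). For the strictly larger class IO, the permutations $\pi_j$ are replaced by arbitrary functions $f_j$ that may be many-to-one, so no unitary $V_j$ on $B$ can reproduce the collapse of distinct indices, and the intertwining relation---hence the entire construction---fails. Verifying the intertwining identity and the trace-preservation bookkeeping is otherwise routine.
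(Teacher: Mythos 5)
Your proof is correct and takes essentially the same route as the paper's: Alice locally performs the SIO instrument $\{M_j\}$, communicates the outcome $j$, and Bob applies the matching permutation unitary ($\Pi_j$ in the paper, your $V_j$), which works precisely because SIO Kraus operators are monomial. Your intertwining identity $(M_j\otimes V_j)W = WM_j$ with the embedding isometry $W=\sum_x\op{xx}{x}$ is just a cleaner, more explicit packaging of the verification the paper does by direct computation of the post-measurement state.
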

\begin{proof}
Let $\{M_j\}$ be a set of SIO Kraus operators so that for state $\rho=\sum_{xy}d_{xy}\op{x}{y}$ the QC post-measurement state is
\begin{align}
\sigma&=\sum_j M_j\rho M_j^\dagger\otimes\op{j}{j}\notag\\
&=\sum_{x,y}c_{jx}c_{jy}^*d_{xy}\op{\pi_j(x)}{\pi_j(y)}\otimes\op{j}{j},
\end{align}
where we have used Eq. \eqref{Eq:SIO-Kraus-Form}.  Then the transformation $\rho^{(mc)}\to\sigma^{(mc)}$ can be accomplished by Alice performing the measurement $\{M_j\}$, announcing her result ``$j$'' to Bob, and then Bob performing the local permutation $\Pi_j:\ket{x}\to\ket{\pi_j(x)}$.
\end{proof}

\subsubsection{State Transformations}

Using Proposition \ref{Prop:SIO-LOCC}, we can completely classify pure state transformations under SIO.  The following is an analog to Nielsen's theorem for entanglement transformations of bipartite pure states \cite{Nielsen-1999a}. Consider two states 
\begin{align*}
\ket{\psi}&=\sum_{i=1}^m\sqrt{\psi_i^\downarrow}\ket{i},&\ket{\phi}&=\sum_{i=1}^n\sqrt{\phi_i^\downarrow}\ket{i}
\end{align*}
where we have assumed without loss of generality that the $\psi_i^\downarrow$ are non-negative and ordered such that $\psi_i^\downarrow\geq \psi_{i+1}^\downarrow$, and likewise for the $\phi_i^\downarrow$.  We say that $\ket{\phi}$ majorizes $\ket{\psi}$ (denoted by $\vec{\tau}(\psi)\prec\vec{\tau}(\phi)$) if $\sum_{i=1}^{k}\psi_i^\downarrow\leq\sum_{i=1}^k\phi_i^\downarrow$
for all $k=1,\cdots,\max\{m,n\}$, where a sufficient number of zeros are padded to the vector of shorter length so that both summations can be taken over $\max\{m,n\}$ elements.
\begin{lemma}\label{Lem:SIO-Majorization}
The state transformation $\ket{\psi}\to\ket{\phi}$ is possible by SIO iff $\vec{\tau}(\psi)\prec\vec{\tau}(\phi)$.
\end{lemma}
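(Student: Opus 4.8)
The plan is to establish the two directions by different means: necessity via the maximally-correlated LOCC correspondence already proved, and sufficiency via an explicit construction of SIO Kraus operators.

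For necessity, suppose $\ket{\psi}\to\ket{\phi}$ is achievable by SIO. Applying Proposition \ref{Prop:SIO-LOCC} with $\rho=\op{\psi}{\psi}$ and $\sigma=\op{\phi}{\phi}$ produces a bipartite LOCC transformation $\ket{\psi^{(mc)}}\to\ket{\phi^{(mc)}}$. Because $\ket{\psi^{(mc)}}=\sum_i\sqrt{\psi_i^\downarrow}\ket{ii}$ has squared Schmidt coefficients exactly $\psi_i^\downarrow$ (and likewise $\phi_i^\downarrow$ for $\ket{\phi^{(mc)}}$), Nielsen's theorem \cite{Nielsen-1999a} says this LOCC conversion exists iff the source Schmidt vector is majorized by the target's, which is precisely $\vec{\tau}(\psi)\prec\vec{\tau}(\phi)$. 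The zero-padding built into the definition of $\vec{\tau}$ is what makes the comparison line up when $m\neq n$.

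For sufficiency, assume $\vec{\tau}(\psi)\prec\vec{\tau}(\phi)$. By the Hardy--Littlewood--P\'{o}lya theorem there is a doubly stochastic matrix $D$ with $\vec{\tau}(\psi)=D\,\vec{\tau}(\phi)$, and by Birkhoff's theorem $D=\sum_j p_j\Pi_{\pi_j}$ is a convex combination of permutation matrices with $p_j\geq 0$ and $\sum_j p_j=1$. I would then take Kraus operators of the SIO form furnished by Lemma \ref{Lem:SIO-Characterization}, setting $M_j=\sum_x c_{jx}\op{\pi_j(x)}{x}$ with $c_{jx}=\sqrt{p_j\,\phi_{\pi_j(x)}^\downarrow/\psi_x^\downarrow}$ on the support of $\ket{\psi}$. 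A one-line computation gives $M_j\ket{\psi}=\sqrt{p_j}\,\ket{\phi}$ for every $j$, so each outcome deterministically yields $\ket{\phi}$; and since $M_j^\dagger M_j=\sum_x|c_{jx}|^2\op{x}{x}$, completeness $\sum_j M_j^\dagger M_j=\mbb{I}$ reduces exactly to the identity $(D\,\vec{\tau}(\phi))_x=\psi_x^\downarrow$.

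The step I expect to require the most care is the bookkeeping around vanishing probabilities and unequal support sizes. The coefficients $c_{jx}$ are only defined where $\psi_x^\downarrow>0$; on the kernel of $\op{\psi}{\psi}$ proportionality to $\ket{\phi}$ is automatic (those basis states are annihilated), but the completeness relation must still be closed off there, which I would handle by augmenting the $\{M_j\}$ with additional incoherent-isometry Kraus terms supported off the range of $\ket{\psi}$ so that $\sum_j M_j^\dagger M_j=\mbb{I}$ on all of $\mH$. One should also verify that all resulting operators retain the form of Lemma \ref{Lem:SIO-Characterization}, so that the constructed map is genuinely SIO rather than merely IO.
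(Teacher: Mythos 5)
Your proof is correct and follows essentially the same route as the paper's: necessity via Proposition \ref{Prop:SIO-LOCC} combined with Nielsen's theorem, and sufficiency via Birkhoff's decomposition of the doubly stochastic matrix into permutations, with your Kraus operators $M_j=\sum_x\sqrt{p_j\,\phi_{\pi_j(x)}^\downarrow/\psi_x^\downarrow}\,\op{\pi_j(x)}{x}$ being exactly the paper's $M_\alpha=\sqrt{p_\alpha}\,\Pi_\alpha\bullet S$ written in index notation. Your extra bookkeeping for vanishing entries of $\vec{\tau}(\psi)$ (closing the completeness relation off the support with an incoherent projector, which still has the form of Lemma \ref{Lem:SIO-Characterization}) is a legitimate refinement that the paper's proof silently omits.
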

\begin{proof}
Sufficency:  Suppose that $\vec{\tau}(\psi)\prec\vec{\tau}(\phi)$.  Then there exists a doubly stochastic matrix $D$ such that $\vec{\tau}(\psi)=D\vec{\tau}(\phi)$ \cite{Bhatia-2000a}.  Birkhoff's Theorem assures that $D=\sum_\alpha p_\alpha \Pi_\alpha$, where the $p_\alpha$ form a probability distribution and the $\Pi_\alpha$ are permutation matrices.  Then define the operators $M_\alpha:=\sqrt{p_\alpha}\Pi_\alpha\bullet S$, where the elements of $S$ are given by $[[S]]_{ij}=\sqrt{\phi_i}/\sqrt{\psi_j}$ and ``$\bullet$'' denotes the Hadamard product.  Recall that the Hadamard product of two matrices $A$ and $B$ is the matrix $A\bullet B$ with elements $[[A\bullet B]]_{ij}=[[A]]_{ij}[[B]]_{ij}$.  Note that each $M_\alpha$ has the form of Eq. \eqref{Eq:SIO-Kraus-Form}.  By construction $M_\alpha\otimes\Pi_\alpha \ket{\psi}\propto\ket{\phi}$ for every $\alpha$, and the relation $\vec{\tau}(\psi)=\sum_\alpha p_\alpha \Pi_\alpha\vec{\tau}(\phi)$ readily implies that $\sum_\alpha M_\alpha^\dagger M_\alpha=\mbb{I}$. \\ Necessity:  Now suppose that $\ket{\psi}\to\ket{\phi}$ by SIO.  By Prop. \ref{Prop:SIO-LOCC}, this means that $\ket{\psi^{(mc)}}\to\ket{\phi^{(mc)}}$ by bipartite LOCC.  However, a necessary condition for this is that $\vec{\tau}(\psi)\prec\vec{\tau}(\phi)$ \cite{Nielsen-1999a}.
\end{proof}
By the same arguments, additional statements about SIO pure-state transformations can be made that are analogous to statements in bipartite LOCC.  The following are the coherence versions of the results presented in \cite{Jonathan-1999a} and \cite{Vidal-1999a} respectively.
\begin{proposition}
The multi-outcome transformation $\ket{\psi}\to\{\ket{\phi_i},p_i\}$ is possible by SIO iff $\vec{\tau}(\psi)\prec\sum_ip_i\vec{\tau}(\phi_i)$.
\end{proposition}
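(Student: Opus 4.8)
The plan is to establish both directions by mirroring the deterministic case of Lemma \ref{Lem:SIO-Majorization}: necessity through the maximally correlated LOCC correspondence of Prop. \ref{Prop:SIO-LOCC} combined with the Jonathan--Plenio criterion for probabilistic LOCC transformations \cite{Jonathan-1999a}, and sufficiency by an explicit construction of SIO Kraus operators. The one identity I would use repeatedly is that, by Eq. \eqref{Eq:MC}, the incoherent population vector $\vec{\tau}(\psi)$ of a pure state equals the Schmidt vector $\vec{\lambda}(\psi^{(mc)})$ of its maximally correlated extension $\ket{\psi^{(mc)}}$, since $\ket{\psi}=\sum_x\sqrt{p_x}\ket{x}$ gives $\ket{\psi^{(mc)}}=\sum_x\sqrt{p_x}\ket{xx}$.

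For necessity, suppose $\ket{\psi}\to\{\ket{\phi_i},p_i\}$ is realized by SIO Kraus operators $\{M_j\}$, and group the outcomes into disjoint sets $S_i$ with $M_j\ket{\psi}\propto\ket{\phi_i}$ for $j\in S_i$ and total branch probability $\sum_{j\in S_i}\|M_j\ket{\psi}\|^2=p_i$. Because each $M_j$ has the weighted-permutation form of Eq. \eqref{Eq:SIO-Kraus-Form}, the argument of Prop. \ref{Prop:SIO-LOCC} applies verbatim: Alice measures $\{M_j\}$ on $\ket{\psi^{(mc)}}$, announces $j$, and Bob applies the local permutation $\Pi_j:\ket{x}\to\ket{\pi_j(x)}$, so that outcome $j\in S_i$ leaves the pair in $\ket{\phi_i^{(mc)}}$. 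Merging the classical labels within each $S_i$ yields an LOCC realization of $\ket{\psi^{(mc)}}\to\{\ket{\phi_i^{(mc)}},p_i\}$, and the Jonathan--Plenio necessary condition then gives $\vec{\lambda}(\psi^{(mc)})\prec\sum_ip_i\vec{\lambda}(\phi_i^{(mc)})$, which is exactly $\vec{\tau}(\psi)\prec\sum_ip_i\vec{\tau}(\phi_i)$ by the identity above.

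For sufficiency, write $\bar{\tau}:=\sum_ip_i\vec{\tau}(\phi_i)$ and assume $\vec{\tau}(\psi)\prec\bar{\tau}$. I would build the instrument as a composition of two SIO maps, which is again SIO since the product of two operators of the form Eq. \eqref{Eq:SIO-Kraus-Form} has the same form. First, since $\vec{\tau}(\psi)\prec\bar{\tau}$, Lemma \ref{Lem:SIO-Majorization} gives a deterministic SIO carrying $\ket{\psi}$ to the state $\ket{\bar{\psi}}=\sum_x\sqrt{\bar{\tau}_x}\ket{x}$. Second, letting $\ket{\phi_i^\downarrow}=\sum_x\sqrt{[\vec{\tau}(\phi_i)]_x}\ket{x}$ be the sorted representative of $\ket{\phi_i}$, I define the diagonal operators $M_i=\sqrt{p_i}\sum_x\sqrt{[\vec{\tau}(\phi_i)]_x/\bar{\tau}_x}\,\op{x}{x}$; each has SIO form with trivial permutation, satisfies $M_i\ket{\bar{\psi}}=\sqrt{p_i}\ket{\phi_i^\downarrow}$, and completeness $\sum_iM_i^\dagger M_i=\sum_x\big(\sum_ip_i[\vec{\tau}(\phi_i)]_x/\bar{\tau}_x\big)\op{x}{x}=\mbb{I}$ follows immediately from $\bar{\tau}=\sum_ip_i\vec{\tau}(\phi_i)$. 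Finally, conditioned on branch $i$, I apply the incoherent unitary $V_i\in G$ with $V_i\ket{\phi_i^\downarrow}=\ket{\phi_i}$ (itself a single SIO Kraus operator), so the composite instrument realizes $\ket{\psi}\to\{\ket{\phi_i},p_i\}$.

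The main obstacle is the bookkeeping in the necessity direction: I must verify that the maximally correlated translation of Prop. \ref{Prop:SIO-LOCC} genuinely extends to the ensemble setting, so that grouping the SIO outcomes reproduces the ensemble $\{\ket{\phi_i^{(mc)}},p_i\}$ with the correct probabilities, and that Jonathan--Plenio is invoked with the same majorization convention (target average majorizing the source) used here. Sufficiency is comparatively routine once closure of SIO under composition and the reduction to the sorted representatives $\ket{\phi_i^\downarrow}$ are in hand.
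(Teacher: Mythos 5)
Your proposal is correct and follows essentially the route the paper intends: the paper offers no explicit proof, asserting only that the proposition holds ``by the same arguments'' as Lemma \ref{Lem:SIO-Majorization}, namely the SIO-to-LOCC translation of Proposition \ref{Prop:SIO-LOCC} combined with the Jonathan--Plenio criterion, which is exactly what you carry out (your sufficiency construction --- a deterministic majorization step to the average state $\ket{\bar{\psi}}$ followed by a diagonal splitting measurement --- is the standard Jonathan--Plenio sufficiency argument transcribed into SIO Kraus form). The only nitpick is the harmless division by $\bar{\tau}_x$ when $\bar{\tau}_x=0$, fixable by setting those diagonal entries of $M_i$ to $\sqrt{p_i}$, which leaves $M_i\ket{\bar{\psi}}$ unchanged and restores $\sum_i M_i^\dagger M_i=\mbb{I}$.
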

\begin{proposition}
The maximum probability of converting $\ket{\psi}\to\ket{\phi}$ is given by
\begin{equation}
\min_{k\in\{1,\cdots,\max\{m,n\}\}}\frac{\sum_{i=k}^n\psi_i^\downarrow}{\sum_{i=k}^n\phi_I^\downarrow}.
\end{equation}
\end{proposition}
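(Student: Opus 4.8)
The plan is to reduce the probabilistic conversion problem to a pure majorization optimization by invoking the multi-outcome proposition stated just above, and then to recognize the resulting optimization as the one solved by Vidal \cite{Vidal-1999a}. Throughout I write $\psi_i^\downarrow$ and $\phi_i^\downarrow$ for the ordered squared coefficients, pad the shorter vector with zeros to length $\max\{m,n\}$, and abbreviate the tail sums $E_k(\psi)=\sum_{i=k}^n\psi_i^\downarrow$ and $E_k(\phi)=\sum_{i=k}^n\phi_i^\downarrow$, so that the quantity to be established is $P^\ast=\min_k E_k(\psi)/E_k(\phi)$.

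First I would model the probabilistic transformation as a two-branch multi-outcome transformation. A protocol that produces $\ket{\phi}$ with total probability $P$ and some collection of other states otherwise can always be coarse-grained: designate all success outcomes as yielding $\ket{\phi}$, and lump the remaining outcomes into a single effective ``failure'' state $\ket{\phi'}$ of probability $1-P$, with $\vec{\tau}(\phi')=\tfrac{1}{1-P}\sum_{\mathrm{fail}}p_i\vec{\tau}(\phi_i)$ a genuine probability vector. By the multi-outcome proposition, the transformation $\ket{\psi}\to\{(\ket{\phi},P),(\ket{\phi'},1-P)\}$ is realizable by SIO if and only if $\vec{\tau}(\psi)\prec P\,\vec{\tau}(\phi)+(1-P)\,\vec{r}$, where $\vec{r}=\vec{\tau}(\phi')$ ranges over all probability vectors. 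The maximal success probability is therefore the largest $P\in[0,1]$ for which some $\vec{r}$ satisfies this relation.

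Next I would solve this optimization. For the upper bound, observe that for every $k$ the smallest $\max\{m,n\}-k+1$ components of $w=P\,\vec{\tau}(\phi)+(1-P)\,\vec{r}$ sum to at least $P\,E_k(\phi)$, since discarding the nonnegative $(1-P)\vec{r}$ term and minimizing over index sets of that size can only lower the value down to $P\,E_k(\phi)$. Combining this with the tail form of the majorization relation, $E_k(\psi)\ge E_k(w)$, forces $P\le E_k(\psi)/E_k(\phi)$ for every $k$, hence $P\le P^\ast$. For the matching lower bound one must exhibit a probability vector $\vec{r}$ for which $P=P^\ast$ saturates the relation; this is exactly Vidal's explicit construction \cite{Vidal-1999a}, now read off for the diagonal-coefficient vectors $\vec{\tau}$ in place of Schmidt vectors, so I would invoke that argument rather than rederive the construction.

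The hard part will be the feasibility half of the optimization: producing a single $\vec{r}$ for which $\vec{\tau}(\psi)\prec P^\ast\vec{\tau}(\phi)+(1-P^\ast)\vec{r}$ holds simultaneously at \emph{all} indices $k$, not merely at the minimizing one, which requires checking that the tail inequalities at the non-minimizing indices are automatically satisfied after the correct ordering and padding. This combinatorial step is identical to the one underlying the optimal LOCC protocol of \cite{Vidal-1999a}. Notably, because the multi-outcome proposition already guarantees that any feasible $\vec{r}$ corresponds to genuine SIO Kraus operators of the form in Eq.~\eqref{Eq:SIO-Kraus-Form}, no separate verification of the incoherent structure of the optimal protocol is needed, and one need not route through the SIO-to-LOCC translation of Proposition~\ref{Prop:SIO-LOCC} at all.
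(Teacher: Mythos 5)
Your proof is correct, but it is organized differently from the paper's (largely implicit) argument. The paper disposes of this proposition with the remark ``by the same arguments,'' meaning: translate any probabilistic SIO protocol into a probabilistic LOCC protocol on the maximally correlated extensions via Proposition~\ref{Prop:SIO-LOCC} (which yields Vidal's upper bound for LOCC), and conversely implement Vidal's optimal protocol with incoherent Kraus operators of the form in Eq.~\eqref{Eq:SIO-Kraus-Form}, exactly as in the sufficiency half of Lemma~\ref{Lem:SIO-Majorization} --- i.e., it inherits both directions of \cite{Vidal-1999a} wholesale through the SIO$\leftrightarrow$LOCC dictionary. You instead take the Jonathan--Plenio analog (the multi-outcome proposition) as the key lemma, coarse-grain any protocol into a success branch and a single pure failure branch (legitimate, since a convex combination of decreasingly ordered vectors $\vec{\tau}(\phi_i)$ is again a decreasingly ordered probability vector, hence realizable as $\vec{\tau}(\phi')$ for some pure state), and thereby reduce the problem to the classical optimization $\max\{P:\exists\,\vec{r},\ \vec{\tau}(\psi)\prec P\vec{\tau}(\phi)+(1-P)\vec{r}\}$. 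Your tail-sum derivation of the bound $P\le E_k(\psi)/E_k(\phi)$ is clean and self-contained, and you are right that the multi-outcome proposition absorbs all verification of incoherent structure, so achievability only requires Vidal's classical construction of the residual vector $\vec{r}$. What your route buys is an elementary, purely classical upper bound and a clear separation of the quantum content (the multi-outcome criterion) from the combinatorial content (Vidal's feasibility construction); what it does not buy is independence from Proposition~\ref{Prop:SIO-LOCC}, since the multi-outcome proposition you invoke is itself established in the paper through that very SIO-to-LOCC translation --- so the correspondence is localized rather than eliminated from the logical chain. Both proofs defer the feasibility construction to \cite{Vidal-1999a}, which is consistent with the paper's own level of detail.
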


With Lemma \ref{Lem:SIO-Majorization}, the asymptotic transformation of pure states becomes reversible under SIO.  Indeed, the dilution protocol described in Ref. \cite{Winter-2015a} relies on being able to perform any pure state transformation provided the majorization condition is satisfied.  We thus have
\begin{corollary}[\cite{Winter-2015a}]
For any $\epsilon>0$ and $n$ sufficiently large, the transformation $\ket{\psi}^{\otimes \lfloor nR\rfloor}\to\overset{\epsilon}{\approx}\ket{\varphi}^{\otimes n}$ is possible whenever $R<S[\Delta(\psi)]/S[\Delta(\varphi)]$.
\end{corollary}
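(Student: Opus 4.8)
The plan is to establish the asymptotic reversibility by concatenating two SIO protocols that pass through the maximally coherent bit $\ket{+}$ as an intermediate ``currency'': first distill copies of $\ket{+}$ out of the copies of $\ket{\psi}$, and then dilute those back into copies of $\ket{\varphi}$. Since one copy of $\ket{\psi}$ yields $S[\Delta(\psi)]$ bits of coherence while one copy of $\ket{\varphi}$ costs $S[\Delta(\varphi)]$ bits, matching supply against demand produces $\ket{\varphi}$ at a rate of $S[\Delta(\psi)]/S[\Delta(\varphi)]$ copies per copy of $\ket{\psi}$, which is exactly the claimed rate $R$. Everything must be run inside SIO, and the error incurred in each half kept below $\epsilon$.

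For the distillation half I would simply invoke Proposition \ref{Prop:Winter}: for large $n$ and any $\delta>0$ one has $\ket{\psi}^{\otimes n}\to\overset{\epsilon/2}{\approx}\ket{+}^{\otimes m}$ with $m=\lfloor n(S[\Delta(\psi)]-\delta)\rfloor$, and because that protocol uses only incoherent unitaries and projections it is in fact PIO, hence SIO. This supplies $\approx nS[\Delta(\psi)]$ maximally coherent bits at arbitrarily small error.

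The dilution half is the crux. I must show $\ket{+}^{\otimes m}\to\overset{\epsilon/2}{\approx}\ket{\varphi}^{\otimes \ell}$ by SIO whenever $m\gtrsim \ell\,S[\Delta(\varphi)]$. Exact majorization is too weak here: by Lemma \ref{Lem:SIO-Majorization}, $\ket{+}^{\otimes m}\to\ket{\chi}$ holds exactly iff $\vec{\tau}(\chi)$ majorizes the uniform vector $\vec{\tau}(+^{\otimes m})$, which is equivalent to the coherence rank of $\ket{\chi}$ being at most $2^m$; applied directly to $\ket{\varphi}^{\otimes\ell}$ this would force the prohibitive rate $m\geq \ell\log d'$. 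The remedy is a typical-subspace truncation. Writing $\ket{\varphi}=\sum_y\sqrt{q_y}\ket{y}$, I would restrict $\ket{\varphi}^{\otimes\ell}$ to the $q$-typical strings to obtain a normalized state $\ket{\tilde\varphi_\ell}$ with $|\ip{\varphi^{\otimes\ell}}{\tilde\varphi_\ell}|^2\to 1$ whose coherence distribution is supported on at most $2^{\ell(S[\Delta(\varphi)]+\delta)}$ entries. Taking $m=\lceil \ell(S[\Delta(\varphi)]+\delta)\rceil$, the state $\ket{+}^{\otimes m}$ has dimension $2^m$ at least as large as this support, so $\vec{\tau}(\tilde\varphi_\ell)$ majorizes $\vec{\tau}(+^{\otimes m})$ and Lemma \ref{Lem:SIO-Majorization} yields an exact SIO map $\ket{+}^{\otimes m}\to\ket{\tilde\varphi_\ell}\approx\ket{\varphi}^{\otimes\ell}$.

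Finally I would splice the two halves: set $\ell=\lfloor nR\rfloor$ and pick $\delta$ small enough that the hypothesis $R<S[\Delta(\psi)]/S[\Delta(\varphi)]$ guarantees $\lfloor n(S[\Delta(\psi)]-\delta)\rfloor \geq \lceil \ell(S[\Delta(\varphi)]+\delta)\rceil$ for all large $n$, so the bits distilled in the first stage suffice for the second. Because SIO maps are trace-distance contractive, the two errors add at worst linearly and the triangle inequality bounds the total deviation of the composed channel from $\ket{\varphi}^{\otimes\ell}$ by $\epsilon$. The main obstacle I anticipate is exactly this dilution step: one must confirm that the typical-subspace-based conversion is genuinely SIO, which is where Lemma \ref{Lem:SIO-Majorization} is indispensable since naive exact majorization returns the wrong support-based rate, and that the smoothing recovers the Shannon rate $S[\Delta(\varphi)]$ rather than a max-entropy; by comparison, the bookkeeping of the accumulating errors and of the fluctuating number of intermediate copies of $\ket{+}$ is routine.
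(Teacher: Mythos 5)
Your proposal is correct and is essentially the paper's own argument: the paper proves this corollary simply by invoking the dilution protocol of Ref.~\cite{Winter-2015a} together with the observation that that protocol only ever performs pure-state transformations obeying the majorization condition, hence is SIO-implementable by Lemma~\ref{Lem:SIO-Majorization}, and your typical-subspace truncation followed by the majorization step is precisely that dilution protocol written out explicitly, with the distillation half being Proposition~\ref{Prop:Winter} used in the same way. One remark: what you prove (and what is actually true) is $\ket{\psi}^{\otimes n}\to\overset{\epsilon}{\approx}\ket{\varphi}^{\otimes\lfloor nR\rfloor}$ for $R<S[\Delta(\psi)]/S[\Delta(\varphi)]$; the corollary's literal placement of $\lfloor nR\rfloor$ on the input side is a typo (with $\ket{\psi}=\ket{\varphi}=\ket{+}$ and $R<1$ it would let SIO inflate $\lfloor nR\rfloor$ maximally coherent bits into $n$, contradicting monotonicity of the relative entropy of coherence), so your silent transposition of the exponents is the intended reading.
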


\subsection{Incoherent Operations}

\begin{definition}
Let $\mE^{A\to B}:\mL(\mH^A)\to\mL(\mH^B)$ be a CPTP map. Then, $\mE^{A\to B}$ is said to be an Incoherent Operation (IO) if it can be represented by Kraus operators $\{M_\alpha\}$ such that
\be
\Delta\left(M_\alpha\op{x}{x} M_\alpha^\dagger\right)=M_\alpha\op{x}{x} M_\alpha^\dagger\;\;\;\;\forall x.
\ee
\end{definition}
From this definition, it is easy to see that an arbitrary incoherent measurement has Kraus operators $\{M_\alpha\}_{\alpha}$ of the form
\begin{equation}
M_\alpha=\sum_{i=1}^dc_{\alpha,i}\op{f_\alpha(i)}{i}
\end{equation}
where $f_{\alpha}:\{1,\cdots,d\}\to\{1,\cdots,d\}$ and the completion identity demands
\begin{align}
\sum_{\alpha\text{ such that} \atop f_\alpha(i)=f_\alpha(j)}c_{\alpha,i}^*c_{\alpha,j}=\delta_{ij}.
\end{align}
Note that we could further decompose the sum as
\begin{equation}
\label{Eq:Completition}
\sum_{\alpha\text{ such that} \atop f_\alpha(i)=f_\alpha(j)}c_{\alpha,i}^*c_{\alpha,j}=\sum_{k=1}^d\sum_{\alpha\text{ such that} \atop i,j\in f^{-1}_\alpha(k)}c_{\alpha,i}^*c_{\alpha,j}=\delta_{ij}.
\end{equation}

Incoherent operations have been studied extensively in the literature, and here we only comment on pure state transformations.  It has been reported that the majorization condition characterizes pure state transformations under IO; i.e. that Lemma \ref{Lem:SIO-Majorization} can be extended to IO \cite{Du-2015a, Renes-2015a}.  However, as we now discuss, the proofs given in these references are not correct.  It is still an open question whether $\vec{\tau}(\psi)\prec\vec{\tau}(\phi)$ is necessary for an IO transformation $\ket{\psi}\to\ket{\phi}$.

\subsubsection{Mistakes in the Majorization Proofs}
Now for a state $\ket{\psi}=\sum_i\psi_i\ket{i}$, let us consider the action
\begin{align}
M_\alpha\ket{\psi}=\sum_{k=1}^d\left(\sum_{i\in f^{-1}(k)}c_{\alpha,i}\psi_i\right)\ket{k}.
\end{align}
What interests us are the diagonal elements of $\Delta\left(\sum_\alpha M_\alpha\op{\psi}{\psi}M_\alpha^\dagger\right)$.  They have undergone the transformation
\begin{align}
\left(|\psi_k|^2\right)_k &\to \left(\sum_\alpha\left(\sum_{i\in f_\alpha^{-1}(k)}c_{\alpha,i}\psi_i\right)\left(\sum_{j\in f_\alpha^{-1}(k)}c_{\alpha,j}^*\psi_j^*\right)\right)_k\notag\\
&=\left(\sum_\alpha\left(\sum_{i,j\in f_\alpha^{-1}(k)}\psi_i\psi_j^*c_{\alpha,i}c_{\alpha,j}^*\right)\right)_k\notag\\
&=\left(\sum_{i,j}\psi_i\psi_j^*\sum_{\alpha\text{ such that} \atop i,j\in f^{-1}_\alpha(k)} c_{\alpha,i}c_{\alpha,j}^*\right)_k.
\end{align}
In Ref. \cite{Du-2015a}, the authors assume that for each value of $k$, the cross terms vanish.  In other words, the assumption is that
\[\sum_{\alpha\text{ such that} \atop i,j\in f^{-1}_\alpha(k)} c_{\alpha,i}c_{\alpha,j}^*=\delta_{ij}\] when, in fact, the full condition is given by Eq. \eqref{Eq:Completition}. 

To bring this out more explicitly, we adopt the notation used in \cite{Du-2015a}.  From the completion identity, Eq. 18 of \cite{Du-2015a} gives
\begin{equation}
\label{Eq:Fix1}
\sum_{n}(\delta_{1,i(2)}\delta_{1,i(3)}+\delta_{2,i(2)}\delta_{2,i(3)})\overline{k_2^{(n)}}k_3^{(n)}=0.
\end{equation}
Note here the authors are assuming that the $\delta_{j,i(l)}$ do depend on $n$, which is not true in general.  Nevertheless, let us momentarily continue with the argument with $\delta_{j,i(l)}$ being independent of $n$.  Because the measurement is incoherent, we have that 
\begin{align}
\delta_{1,i(2)}\delta_{1,i(3)}&\not=0\quad\Rightarrow\quad\delta_{2,i(2)}\delta_{2,i(3)}=0\notag\\
\delta_{2,i(2)}\delta_{2,i(3)}&\not=0\quad\Rightarrow\quad \delta_{1,i(2)}\delta_{1,i(3)}=0.
\end{align}
This means that Eq. \eqref{Eq:Fix1} implies
\begin{equation}
\label{Eq:Fix2}
\sum_{n}\delta_{1,i(2)}\delta_{1,i(3)}\overline{k_2^{(n)}}k_3^{(n)}=\sum_n\delta_{2,i(2)}\delta_{2,i(3)}\overline{k_2^{(n)}}k_3^{(n)}=0.
\end{equation}
Therefore, when computing $\sum_n|\cdot|^2$ in their Eq. 21, the LHS of the second equation becomes
\begin{align}
&\sum_n|\delta_{2,i(2)}k_2^{(n)}\psi_2+\delta_{2,i(3)}k_3^{(n)}\psi_3|^2\notag\\
&=\delta_{2,i(2)}\psi_2^2+\delta_{2,i(3)}\psi_3^2\notag\\
&\qquad+\psi_2\psi_3\sum_n\delta_{2,i(2)}\delta_{2,i(3)}(\overline{k_2^{(n)}}k_3^{(n)}+\overline{k_3^{(n)}}k_2^{(n)})\notag\\
&=\delta_{2,i(2)}\psi_2^2+\delta_{2,i(3)}\psi_3^2,
\end{align}
where we use Eq. \eqref{Eq:Fix2}.  But now let us consider the most general IO measurement by allowing $\delta_{j,i(l)}$ to depend on $n$.  That is, we make the replacement $\delta_{j,i(j)}\to\delta_{j,i(j)}^{(n)}$.  Then Eq. \eqref{Eq:Fix1} becomes
\begin{equation}
\sum_{n}(\delta^{(n)}_{1,i(2)}\delta_{1,i(3)}^{(n)}+\delta_{2,i(2)}^{(n)}\delta_{2,i(3)}^{(n)})\overline{k_2^{(n)}}k_3^{(n)}=0.
\end{equation}
However, we no longer have Eq. \eqref{Eq:Fix2} because of the dependence on $n$.  In other words, in general $\sum_n\delta_{2,i(2)}\delta_{2,i(3)}\overline{k_2^{(n)}}k_3^{(n)}\not=0$.  Therefore,
\begin{align}
&\sum_n|\delta_{2,i(2)}^{(n)}k_2^{(n)}\psi_2+\delta_{2,i(3)}^{(n)}k_3^{(n)}\psi_3|^2\notag\\
&=\sum_n\delta_{2,i(2)}^{(n)}|k_2^{(n)}|^2\psi_2^2+\sum_n\delta_{2,i(3)}^{(n)}|k_3^{(n)}|^2\psi_3^2\notag\\
&\qquad+\psi_2\psi_3\sum_n\delta_{2,i(2)}^{(n)}\delta_{2,i(3)}^{(n)}(\overline{k_2^{(n)}}k_3^{(n)}+\overline{k_3^{(n)}}k_2^{(n)}).
\end{align}
The cross-term no longer vanishes.

An alternative proof for the majorization condition was presented in Ref. \cite{Renes-2015a}.  The proof technique used  is similar to the proof of Lemma \ref{Lem:SIO-Majorization} in which the incoherent transformation is mapped to a bipartite LOCC pure state transformation.  However, the LOCC measurement described in that paper is not trace-preserving, and it is not clear how this can be remedied \cite{Yang-2015a}.

\subsubsection{Majorization for a Special Subclass of IO}

We next introduce yet another class of incoherent operations for which majorization precisely captures pure-state convertibility.
\begin{definition}
Let $\mE^{A\to B}:\mL(\mH^A)\to\mL(\mH^B)$ be a CPTP map. Then, $\mE^{A\to B}$ is said to be a special Incoherent Operation (sIO) if it can be represented by Kraus operators $\{M_\alpha\}$ each having the form
\be
\label{Eq:Kraus-sIO}
M_\alpha=\sum_{x}c_{\alpha x}\Pi_\alpha\op{f(x)}{x},
\ee
where $f:\{1,\cdots,d\}\to\{1,\cdots,d\}$ and $\Pi_\alpha$ is a permutation.  Note that SIO $\subset$ sIO $\subset$ IO.
\end{definition}
We first show that the statement of Proposition \ref{Prop:SIO-LOCC} can be extended to sIO operations.  However, the corresponding LOCC transformation now uses two-way classical communication.
\begin{proposition}
\label{Prop:sIO-LOCC}
If $\rho\to\sigma$ by sIO, then there exists a bipartite LOCC transformation $\rho^{(mc)}\to\sigma^{(mc)}$.
\end{proposition}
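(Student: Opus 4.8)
The plan is to mirror the proof of Proposition~\ref{Prop:SIO-LOCC}, constructing an explicit bipartite protocol that reproduces the sIO transformation on the maximally correlated extensions. Writing the sIO Kraus operators of Eq.~\eqref{Eq:Kraus-sIO} as $M_\alpha=\sum_x c_{\alpha x}\op{g_\alpha(x)}{x}$ with $g_\alpha:=\Pi_\alpha\circ f$, the defining feature is that the index map $f$ is \emph{fixed} across all outcomes while only the permutation $\Pi_\alpha$ varies. I would first record the target: if $\rho=\sum_{xy}d_{xy}\op{x}{y}$, then the output QC state is $\sigma=\sum_\alpha M_\alpha\rho M_\alpha^\dagger\otimes\op{\alpha}{\alpha}$, so that $\sigma^{(mc)}=\sum_\alpha(M_\alpha\rho M_\alpha^\dagger)^{(mc)}\otimes\op{\alpha\alpha}{\alpha\alpha}$.

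The key reduction I would establish is that, on the correlated subspace where $\rho^{(mc)}$ is supported, the joint action is reproduced by the \emph{product} operators $M_\alpha\otimes B_\alpha$, where $B_\alpha:=\sum_x\op{g_\alpha(x)}{x}$ acts on Bob. A direct computation gives $(M_\alpha\otimes B_\alpha)\rho^{(mc)}(M_\alpha\otimes B_\alpha)^\dagger=\sum_{xy}d_{xy}c_{\alpha x}\overline{c_{\alpha y}}\op{g_\alpha(x)g_\alpha(x)}{g_\alpha(y)g_\alpha(y)}=(M_\alpha\rho M_\alpha^\dagger)^{(mc)}$, the extra cross terms of the product operator never arising because $\rho^{(mc)}$ is maximally correlated. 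Moreover, the sIO completion relations (which force $\sum_\alpha\overline{c_{\alpha y}}c_{\alpha x}=\delta_{xy}$ whenever $f(x)=f(y)$) yield $\sum_\alpha\tr[(M_\alpha^\dagger M_\alpha\otimes B_\alpha^\dagger B_\alpha)\rho^{(mc)}]=\sum_x d_{xx}=1$, so $\{M_\alpha\otimes B_\alpha\}$ is a valid, trace-preserving \emph{separable} instrument on the support of $\rho^{(mc)}$, which may be completed to a channel off this support. This shows the desired map is a separable operation, and it is the easy half of the argument.

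The substance of the proof, and the main obstacle, is to upgrade this separable instrument to a genuine LOCC protocol. Alice's part is unproblematic: $\{M_\alpha\}$ is a valid local POVM (indeed $\sum_\alpha M_\alpha^\dagger M_\alpha=\mbb{I}$), so she measures it, obtains $\alpha$, and announces it; her system then records only the \emph{coarse} label $g_\alpha(x)$, having collapsed the within-block information. The difficulty is that Bob's conditional operator $B_\alpha=\Pi_\alpha F$, with $F=\sum_x\op{f(x)}{x}$, is not trace-preserving when $f$ is many-to-one, so Bob cannot apply it deterministically: he must \emph{coherently merge} the basis states $\{\ket{x}:f(x)=k\}$ by adding their amplitudes. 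This is exactly the feature absent from SIO, where $f$ is a bijection and Bob need only perform the local permutation $\Pi_\alpha$, yielding a one-way protocol; here a single round cannot suffice, which is why two-way communication enters. I would realize the merge with a second round in which Bob performs a measurement over each block $f^{-1}(k)$ and reports its outcome so that Alice applies a correlated correction, the trace-preservation of the global instrument guaranteeing that the interaction closes deterministically. The cleanest justification that each block merge is LOCC is to recognize it as the discrimination-and-collapse of the maximally correlated block vectors $\ket{\Gamma_\alpha^{(k)}}\propto\sum_{x\in f^{-1}(k)}\overline{c_{\alpha x}}\ket{xx}$, an LOCC task handled by the protocols for distinguishing orthogonal states. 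The step I expect to be most delicate is verifying that this interactive merge terminates deterministically for blocks of size greater than two, where the relevant family of entangled vectors can be overcomplete rather than an orthonormal basis.
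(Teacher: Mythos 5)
Your first two steps (the product-operator identity on the maximally correlated subspace and the separability of the instrument) are fine, but the protocol you build in the third step has the order of operations backwards, and this is fatal rather than cosmetic. Once Alice measures $\{M_\alpha\}$ \emph{first}, her system collapses onto the coarse labels $\ket{g_\alpha(x)}$: at that point \emph{neither} party holds the fine-grained within-block index any longer once Bob performs his merging measurement, so there is no system left on which "a correlated correction" could act. Bob's block measurement necessarily has Kraus operators of the form $\sum_s \op{s}{v_{s,\vec{k}}}$ with $\{\ket{v_{s,\vec k}}\}_{\vec k}$ a resolution of the identity on each block; unless the block size is $1$, the vectors $\ket{v_{s,\vec k}}$ cannot all be the uniform superposition, so each outcome twists the within-block amplitudes in an outcome-dependent way, and after the merge this twist is irreversible. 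The paper's proof exists precisely to avoid this: there Bob performs the block-Fourier measurement $\{N_{\vec k}\}$ \emph{before} Alice touches her system, Bob announces $\vec k$, and Alice — who still holds the fine index $x=(s,j_s)$ — undoes the back-action with the diagonal unitary $U_{\vec k}=\sum_{s,j_s}e^{-i2\pi j_sk_s/\kappa_s}\op{s,j_s}{s,j_s}$; only then does she measure $\{M_\alpha\}$ and have Bob apply $\Pi_\alpha$. So the communication is Bob $\to$ Alice first, then Alice $\to$ Bob, opposite to your protocol.

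This is not merely a gap in justification: with your ordering the statement becomes false. Take $d_A=2$, $f(1)=f(2)=1$, $c=\cos\tfrac{\pi}{8}$, $s=\sin\tfrac{\pi}{8}$, and the sIO instrument $M_1=c\op{1}{1}+s\op{1}{2}$, $M_2=-s\op{1}{1}+c\op{1}{2}$ (a valid sIO channel once the outcome register is appended). On input $\ket{\psi}=\psi_1\ket{1}+\psi_2\ket{2}$ the target demands a final shared label $\beta$ with probabilities $p_1=|c\psi_1+s\psi_2|^2$, $p_2=|{-s}\psi_1+c\psi_2|^2$. After Alice measures $\{M_\alpha\}$ on $\ket{\psi^{(mc)}}$, her residual state is $\ket{1}$ (uncorrelated and $\psi$-independent), and Bob holds $\ket{u_1}\propto c\psi_1\ket{1}+s\psi_2\ket{2}$ or $\ket{u_2}\propto -s\psi_1\ket{1}+c\psi_2\ket{2}$; all further processing reduces to Bob applying a POVM $\{E^{(\alpha)}_\beta\}_\beta$ conditioned on $\alpha$, plus classical relabeling. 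Matching the $\op{1}{2}$ matrix element of the requirement $D_1E^{(1)}_1D_1+D_2E^{(2)}_1D_2=T_1$ (with $D_1=\mathrm{diag}(c,s)$, $D_2=\mathrm{diag}(-s,c)$, $T_1$ the rank-one form for $p_1$) forces $\bigl[E^{(1)}_1\bigr]_{12}-\bigl[E^{(2)}_1\bigr]_{12}=1$, while the diagonal equations together with $0\le E^{(\alpha)}_1\le \mbb{I}$ force $\bigl|[E^{(1)}_1]_{12}\bigr|+\bigl|[E^{(2)}_1]_{12}\bigr|\le 2\tan\tfrac{\pi}{8}<1$, a contradiction. (Your Hadamard-flavored intuition works only when the coefficient matrix $(c_{\alpha x})$ is closed under the Fourier phase twists, as it is for $C$ itself a Hadamard matrix; a generic rotation like this one is not.) The repair is exactly the paper's ordering, after which your final worry about "overcomplete families" and deterministic termination also disappears, since Bob's merge is then a single complete projective-type measurement whose only back-action is a phase that Alice can erase.
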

\begin{proof}
Suppose that $\rho\to\sigma=\sum_\alpha M_\alpha\rho M_\alpha^\dagger\otimes\op{\alpha}{\alpha}$ for sIO Kraus operators $\{M_\alpha\}$ given by Eq. \eqref{Eq:Kraus-sIO}.  Let $S\subset\{1,\cdots,d\}$ denote the range of $f$, $\kappa_s=|f^{-1}(s)|$ for $s\in S$, and $\kappa=\prod_{s\in S}\kappa_s$.  For each $s\in S$, let $\{\ket{s,j_s}:j_s=0,\cdots,|f^{-1}(s)|-1\}$ be a relabeling of the kets $\ket{x}$ with $x\in f^{-1}(s)$.  Next we want to define a generalized Hadamard basis with respect to the $\ket{s,j_s}$:
\[\bigg\{\ket{\widetilde{s,k_s}}:=\sum_{j_s=0}^{\kappa_s-1}e^{i2\pi j_sk_s/\kappa_s}\ket{s,j_s}\bigg\}_{k_s=0,\cdots,\kappa_s-1}.\]
Finally, for every sequence $\vec{k}=(k_1,k_2,\cdots, k_{|S|})$ with $k_s\in\{0,\cdots,\kappa_s-1\}$, define the operator
\begin{equation}
N_{\vec{k}}=\frac{1}{\sqrt{\kappa}}\sum_{s=1}^{|S|} \op{s}{\widetilde{s,k_s}}.
\end{equation} 
It can be seen that $\sum_{\vec{k}}N_{\vec{k}}^\dagger N_{\vec{k}}=\mbb{I}$.  The LOCC protocol then consists of Bob first performing the measurement $\{N_{\vec{k}}\}_{\vec{k}}$.  The state transformation corresponding to outcome $\vec{k}=(k_s)_{s=1}^{|S|}$ is 
\begin{align}
\rho^{(mc)}&=\sum_{xy}d_{xy}\op{xx}{yy}\notag\\
&=\sum_{ss'}\sum_{j_s,j_{s'}}d_{sj_s,s'j_{s'}}\op{s,j_s}{s',j_{s'}}_A\otimes\op{s,j_s}{s',j_{s'}}_B\notag\\
&\to\propto\sum_{ss'}\sum_{j_s,j_{s'}}d_{sj_s,s'j_{s'}}e^{i 2\pi (j_s-j_{s'})k_s/\kappa_s}\notag\\
&\qquad\qquad\qquad\times \op{s,j_s}{s',j_{s'}}_A\otimes\op{s}{s'}_B.
\end{align}
Bob then announces his outcome $\vec{k}=(k_s)_{s=1}^{|S|}$ to Alice who subsequently performs the unitary
\begin{equation}
U_{\vec{k}}=\sum_{s}\sum_{j_s}e^{-i2\pi j_s k_s/\kappa_s}\op{s,j_s}{s,j_s}.
\end{equation}
At this stage, Alice and Bob share the state
\begin{equation}
\hat{\rho}^{(mc)}=\sum_{xy}d_{x,y}\op{x}{y}_A\otimes\op{f(x)}{f(y)}_B,
\end{equation}
regardless of Bob's outcome $\vec{k}$.  Alice now locally performs the sIO measurement $\{M_\alpha\}$.  She announces her result to Bob who then performs the conditional permutation $\Pi_\alpha$ on his system.  Thus, the resulting QC state is
\begin{align}
&\sigma^{(mc)}=\notag\\
&\sum_{xy}d_{x,y}c_{\alpha,x}c_{\alpha,y}^*\notag\\
&\qquad\times(\Pi_\alpha\otimes\Pi_\alpha)\op{f(x)f(x)}{f(y)f(y)}_{A_1B_1}(\Pi_\alpha\otimes\Pi_\alpha)\notag\\
&\qquad\qquad\otimes\op{\alpha\alpha}{\alpha\alpha}_{A_2B_2}.
\end{align}
\end{proof}

\begin{corollary}\label{Cor:sIO-Majorization}
The state transformation $\ket{\psi}\to\ket{\phi}$ is possible by sIO iff $\vec{\tau}(\psi)\prec\vec{\tau}(\phi)$.
\end{corollary}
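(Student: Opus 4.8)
The plan is to prove both directions by exploiting the inclusions SIO $\subset$ sIO $\subset$ IO, mirroring exactly the argument used for Lemma \ref{Lem:SIO-Majorization}. The majorization relation $\vec{\tau}(\psi)\prec\vec{\tau}(\phi)$ serves as the bridge between coherence transformations and bipartite LOCC transformations of the maximally correlated extensions, so the whole corollary reduces to invoking two results already in hand.

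For sufficiency I would observe that nothing new needs to be constructed. Since every SIO map is in particular an sIO map, the explicit Kraus construction given in the proof of Lemma \ref{Lem:SIO-Majorization} already realizes $\ket{\psi}\to\ket{\phi}$ within SIO whenever $\vec{\tau}(\psi)\prec\vec{\tau}(\phi)$, hence a fortiori within the larger class sIO. Thus the sufficiency direction is inherited for free from the smaller operational class, with no appeal to the more elaborate two-way protocol.

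For necessity the key input is Proposition \ref{Prop:sIO-LOCC}. Assuming $\ket{\psi}\to\ket{\phi}$ by sIO, that proposition lifts the transformation to a bipartite LOCC transformation $\ket{\psi^{(mc)}}\to\ket{\phi^{(mc)}}$ of the maximally correlated extensions. The Schmidt coefficient vectors of $\ket{\psi^{(mc)}}$ and $\ket{\phi^{(mc)}}$ are precisely $\vec{\tau}(\psi)$ and $\vec{\tau}(\phi)$, so Nielsen's theorem \cite{Nielsen-1999a}, which requires the source state to be majorized by the target for a deterministic LOCC conversion, immediately yields $\vec{\tau}(\psi)\prec\vec{\tau}(\phi)$.

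The main obstacle has in effect already been cleared in establishing Proposition \ref{Prop:sIO-LOCC}; once that LOCC lift is available the corollary is essentially immediate. The only point requiring care is that the two-way protocol of Proposition \ref{Prop:sIO-LOCC} genuinely outputs the maximally correlated extension $\sigma^{(mc)}$ of the coherence output, so that Nielsen's theorem applies verbatim to the pure states $\ket{\psi^{(mc)}}$ and $\ket{\phi^{(mc)}}$, but this is exactly what that proposition guarantees.
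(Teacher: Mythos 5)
Your proposal is correct and follows essentially the same route the paper intends: sufficiency is inherited from the explicit SIO construction of Lemma \ref{Lem:SIO-Majorization} via the inclusion SIO $\subset$ sIO, and necessity follows by lifting the sIO transformation to an LOCC transformation of the maximally correlated extensions via Proposition \ref{Prop:sIO-LOCC} and then applying Nielsen's theorem. This is precisely the argument the paper leaves implicit when it states the corollary immediately after that proposition.
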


\subsection{Dephasing-covariant Incoherent Operations (DIO)}

\begin{definition}
Let $\mE^{A\to B}:\mL(\mH^A)\to\mL(\mH^B)$ be a CPTP map. Then, $\mE^{A\to B}$ is said to be a Dephasing-Covariant Incoherent operation (DIO) if
\be
[\Delta,\mE^{A\to B}]=0
\ee
which is equivalent to
\be
\Delta\left(\mE^{A\to B}(\rho)\right)=
\mE^{A\to B}\left(\Delta(\rho)\right)\;\;\forall\;\rho\;.
\ee
\end{definition}

\begin{lemma}
\label{Lem:DIO-Characterization}
Let $\mE^{A\to B}:\mL(\mH^A)\to\mL(\mH^B)$ be a CPTP map. Then, $\mE^{A\to B}$ is DIO if and only if
for all $x',x\in\{1,...,d_A\}$ with $x'\neq x$:
\begin{align}
&\mE^{A\to B}(|x\lr x|)\in \mI\;\;\text{ and}\label{cond1}\\
&\Delta \left(\mE^{A\to B}(|x\lr x'|)\right)=0\;.\label{cond2}
\end{align}
\end{lemma}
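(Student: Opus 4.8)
The plan is to exploit linearity together with the fact that a superoperator is determined by its action on a basis. Since the matrix units $\{\op{x}{x'}\}_{x,x'=1}^{d_A}$ span $\mL(\mH^A)$, the superoperator identity $\Delta\circ\mE^{A\to B}=\mE^{A\to B}\circ\Delta$ defining DIO holds if and only if the two sides agree on each generator $\op{x}{x'}$. First I would evaluate the inner composition $\mE^{A\to B}\circ\Delta$ on these generators using the two elementary facts $\Delta(\op{x}{x})=\op{x}{x}$ and $\Delta(\op{x}{x'})=0$ for $x\neq x'$, which reduce the commutation requirement to a condition involving only $\mE^{A\to B}$.

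The argument then splits along the diagonal/off-diagonal dichotomy. For the diagonal generators ($x=x'$), the commutation condition becomes $\Delta\bigl(\mE^{A\to B}(\op{x}{x})\bigr)=\mE^{A\to B}(\op{x}{x})$; since $\mE^{A\to B}$ is CPTP, the output $\mE^{A\to B}(\op{x}{x})$ is a genuine density operator, and such an operator is fixed by $\Delta$ precisely when it is diagonal in the incoherent basis, i.e. when $\mE^{A\to B}(\op{x}{x})\in\mI$. This is exactly \eqref{cond1}. For the off-diagonal generators ($x\neq x'$), the right-hand side vanishes because $\mE^{A\to B}(\Delta(\op{x}{x'}))=\mE^{A\to B}(0)=0$, so the commutation condition collapses to $\Delta\bigl(\mE^{A\to B}(\op{x}{x'})\bigr)=0$, which is \eqref{cond2}. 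For the converse I would simply reassemble: assuming \eqref{cond1} and \eqref{cond2}, these say precisely that $\Delta\circ\mE^{A\to B}$ and $\mE^{A\to B}\circ\Delta$ coincide on every matrix unit, whence by linearity they coincide on all of $\mL(\mH^A)$, giving $[\Delta,\mE^{A\to B}]=0$.

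There is no substantial obstacle in this proof; the only step demanding care is the equivalence between the operator identity $\Delta(\mE^{A\to B}(\op{x}{x}))=\mE^{A\to B}(\op{x}{x})$ and the membership $\mE^{A\to B}(\op{x}{x})\in\mI$, which rests on $\mE^{A\to B}(\op{x}{x})$ being a bona fide state so that ``fixed by $\Delta$'' and ``diagonal'' mean the same thing. The rest is the bookkeeping separating the diagonal and off-diagonal generators, plus the standard observation that verifying a superoperator identity on a spanning set suffices.
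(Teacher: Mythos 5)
Your proof is correct and is essentially the paper's own argument: the paper likewise derives \eqref{cond1} and \eqref{cond2} by applying the commutation relation to $\op{x}{x}$ and $\op{x}{x'}$, and proves sufficiency by writing $\rho=\Delta(\rho)+Z$ (diagonal part plus zero-diagonal part) and invoking linearity, which is just your matrix-unit expansion grouped into its diagonal and off-diagonal pieces. Your explicit remark that $\mE^{A\to B}(\op{x}{x})\in\mI$ requires the CPTP property (so that ``fixed by $\Delta$'' upgrades from ``diagonal operator'' to ``incoherent state'') is a point the paper leaves implicit.
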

\begin{proof}
The first condition in the equation above ensures that $\mE^{A\to B}$ is a MIO.
Therefore this is a necessary condition. The second condition is also necessary since
\begin{align*}
&\Delta \left(\mE^{A\to B}(|x\lr x'|)\right)=\\
&\mE^{A\to B}\left(\Delta(|x\lr x'|)\right)=\mE^{A\to B}\left(0\right)=0
\end{align*}

Now, to see that these two conditions are sufficient, note that 
any density matrix $\rho$ acting on $\mH^A$
can be decomposed as 
\be
\rho=\Delta(\rho)+Z
\ee
where $Z$ is an Hermitian matrix with zeros on the diagonal.
We therefore have
\begin{align}
\Delta\left(\mE^{A\to B}(\rho)\right) & =\Delta\left(\mE^{A\to B}(\Delta(\rho))\right)+\Delta\left(\mE^{A\to B}(Z)\right)\nonumber\\
& =\mE^{A\to B}(\Delta(\rho))+\Delta\left(\mE^{A\to B}(Z)\right)\nonumber\\
&=\mE^{A\to B}(\Delta(\rho))
\label{DIOMIO}
\end{align}
where the second equality follows from~\eqref{cond1}, and the third equality follows from~\eqref{cond2}. Hence, $\mE^{A\to B}$ is DIO iff~\eqref{cond1} and~\eqref{cond2} holds. 
\end{proof}

\begin{proposition}
Let $\mE$ be a CPTP map that is DIO. Then, its dual $\mE^*$ is a unital CP map satisfying
$[\mE^*,\Delta]=0$ (i.e. $\mE^*$ is also DIO but not necessarily trace preserving).
\end{proposition}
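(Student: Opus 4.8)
The plan is to work directly from the definition of the dual with respect to the Hilbert--Schmidt inner product and to exploit the self-duality of $\Delta$. I define $\mE^*$ by $\tr[A^\dagger\mE(B)]=\tr[\mE^*(A)^\dagger B]$ for all operators $A,B$, and I will use the standard fact that dualization reverses composition: $(\mF\circ\mc{G})^*=\mc{G}^*\circ\mF^*$.

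First I would establish the three asserted properties separately. Complete positivity of $\mE^*$ is immediate from a Kraus decomposition $\mE(\cdot)=\sum_k K_k(\cdot)K_k^\dagger$, whose dual is $\mE^*(\cdot)=\sum_k K_k^\dagger(\cdot)K_k$; this is again in Kraus form and hence CP. For unitality I invoke that $\mE$ is trace preserving: for every $B$,
\[
\tr[B]=\tr[\mE(B)]=\tr[\mbb{I}^\dagger\mE(B)]=\tr[\mE^*(\mbb{I})^\dagger B],
\]
and since this holds for all $B$ we conclude $\mE^*(\mbb{I})=\mbb{I}$.

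The covariance $[\mE^*,\Delta]=0$ is the crux, and the key observation is that $\Delta$ is self-dual. Writing $\Delta(\cdot)=\sum_i P_i(\cdot)P_i$ with $P_i=\op{i}{i}$ Hermitian, its dual is $\Delta^*(\cdot)=\sum_i P_i(\cdot)P_i=\Delta$. Now I take the dual of the hypothesis $\Delta\circ\mE=\mE\circ\Delta$; dualization reverses composition, so $\mE^*\circ\Delta^*=\Delta^*\circ\mE^*$, and substituting $\Delta^*=\Delta$ yields $\mE^*\circ\Delta=\Delta\circ\mE^*$, i.e. $[\mE^*,\Delta]=0$.

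There is really no deep obstacle here; the only point requiring care is the bookkeeping with the adjoint convention and with Hermiticity, so that $\mE^*(\mbb{I})$ and $\Delta^*$ are genuinely Hermitian and each manipulation of $(\cdot)^\dagger$ under the trace is legitimate. Once self-duality of $\Delta$ is noted, the covariance of $\mE^*$ follows formally from the covariance of $\mE$.
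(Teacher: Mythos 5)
Your proof is correct, but it takes a genuinely different route from the paper's. The paper works at the level of matrix elements: it invokes its characterization of DIO (Lemma~\ref{Lem:DIO-Characterization}, conditions~\eqref{cond1}--\eqref{cond2}) and verifies both conditions for $\mE^*$ by moving $\mE$ across the trace, computing $\la y|\mE^*(|x\lr x'|)|y\ra = \la x|\mE(|y\lr y|)|x'\ra = 0$ for $x\neq x'$ and $\la y|\mE^*(|x\lr x|)|y'\ra = \la x|\mE(|y'\lr y|)|x\ra = 0$ for $y\neq y'$, then re-applying the lemma to conclude $[\mE^*,\Delta]=0$. You instead dualize the commutation relation wholesale, using the two structural facts that $\Delta$ is self-adjoint with respect to the Hilbert--Schmidt inner product (since its Kraus operators $P_i=\op{i}{i}$ are Hermitian) and that taking adjoints reverses composition, so $\Delta\circ\mE=\mE\circ\Delta$ immediately gives $\mE^*\circ\Delta=\Delta\circ\mE^*$. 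Your argument is shorter, more transparent, and strictly more general: it shows that covariance with respect to \emph{any} self-dual map passes to the adjoint (for instance, every channel $\Delta_\lambda=(1-\lambda)\mathrm{id}+\lambda\Delta$ is self-dual, so PIO-covariance also dualizes), and it sidesteps a mild looseness in the paper, namely that Lemma~\ref{Lem:DIO-Characterization} is stated for CPTP maps while $\mE^*$ is merely unital CP. What the paper's computation buys in exchange is the explicit entrywise information --- that $\mE^*(|x\lr x'|)$ has vanishing diagonal for $x\neq x'$ and that $\mE^*(|x\lr x|)$ is diagonal --- which feeds directly into the corollary that follows it (the conditions on the vectors $\mbf{v}_{y|x}$ in Eqs.~\eqref{first1}--\eqref{second1}); with your approach that information would still need to be extracted afterwards by applying the characterization lemma to $\mE^*$. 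Your unitality and complete-positivity arguments coincide with the standard ones and are fine; as you note, the only care needed is the adjoint convention, and since $\mE$ is Hermiticity-preserving, so is $\mE^*$, which legitimizes the $(\cdot)^\dagger$ bookkeeping throughout.
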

\begin{proof}
For $x\neq x'$ we get
\begin{align*}
\la y\left(\mE^{*}(|x\lr x'|)\right)|y\ra &=\tr\left[|y\lr y|\mE^{*}(|x\lr x'|)\right]\\
&=\tr\left[\mE(|y\lr y|)|x\lr x'|\right]\\
&=\la x\left(\mE(|y\lr y|)\right)|x'\ra=0
\end{align*}
since $\mE$ is DIO. We therefore showed that $\Delta(\mE^{*}(|x\lr x'|))=0$ for $x\neq x'$.
Similarly, for $y\neq y'$ we get
\begin{align*}
\la y\left(\mE^{*}(|x\lr x|)\right)|y'\ra &=\tr\left[|y'\lr y|\mE^{*}(|x\lr x|)\right]\\
&=\tr\left[\mE(|y'\lr y|)|x\lr x|\right]\\
&=\la x\left(\mE(|y'\lr y|)\right)|x\ra=0
\end{align*}
since $\mE$ is DIO (we used $\Delta\mE(|y'\lr y|)=0$ from Lemma~\ref{Lem:DIO-Characterization}).
Therefore, from Lemma~\ref{Lem:DIO-Characterization} above it follows that $[\mE^*,\Delta]=0$.
\end{proof}

Note that if we denote by
\be\label{vnotation}
\mbf{v}_{y|x}\equiv\begin{pmatrix}
\la y|M_1|x\ra \\
\la y|M_2|x\ra \\
\vdots\\
\la y|M_m|x\ra
\end{pmatrix}\in\mathbb{C}^m\;,
\ee
we get the following corollary:

\begin{corollary}
Using the notation of~\eqref{vnotation}, a CPTP map $\mE^{A\to B}:\mL(\mH^A)\to\mL(\mH^B)$
is a DIO if and only if there exists conditional probabilities $r_{y|x}$ such that
\begin{align}
& \mbf{v}_{y'|x}^{\dag}\mbf{v}_{y|x}=r_{y|x}\delta_{yy'}\label{first1}\\
& \mbf{v}_{y|x}^{\dag}\mbf{v}_{y|x'}=r_{y|x}\delta_{xx'}\label{second1}\;.
\end{align}
\end{corollary}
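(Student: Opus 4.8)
The plan is to reduce the statement to the characterization of DIO already established in Lemma~\ref{Lem:DIO-Characterization}, by re-expressing its two conditions \eqref{cond1} and \eqref{cond2} directly in terms of the vectors $\mbf{v}_{y|x}$. The starting point is a single matrix-element identity. Writing $\mE^{A\to B}(\rho)=\sum_k M_k\rho M_k^\dagger$ and expanding, one obtains
\[
\la y|\mE^{A\to B}(|x\lr x'|)|y'\ra=\sum_k\la y|M_k|x\ra\,\overline{\la y'|M_k|x'\ra}=\mbf{v}_{y'|x'}^\dagger\mbf{v}_{y|x}.
\]
This is the only computation required; every subsequent step is a direct reading of Lemma~\ref{Lem:DIO-Characterization} through this identity.

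Next I would define $r_{y|x}:=\mbf{v}_{y|x}^\dagger\mbf{v}_{y|x}=\|\mbf{v}_{y|x}\|^2\geq 0$ and check that these numbers form a conditional probability distribution. For each fixed $x$, trace preservation $\sum_k M_k^\dagger M_k=\mbb{I}$ gives $\sum_y r_{y|x}=\sum_k\la x|M_k^\dagger M_k|x\ra=\la x|\mbb{I}|x\ra=1$, so $\{r_{y|x}\}_y$ is indeed a probability distribution for every $x$. With $r_{y|x}$ so defined, the diagonal instances ($y=y'$ in \eqref{first1}, and $x=x'$ in \eqref{second1}) hold automatically, since both reduce to $\mbf{v}_{y|x}^\dagger\mbf{v}_{y|x}=r_{y|x}$.

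It then remains to match the off-diagonal instances with the two conditions of the lemma. Setting $x'=x$ in the identity, condition \eqref{cond1}---that $\mE^{A\to B}(|x\lr x|)$ be incoherent, i.e. diagonal---is precisely the vanishing of $\la y|\mE^{A\to B}(|x\lr x|)|y'\ra=\mbf{v}_{y'|x}^\dagger\mbf{v}_{y|x}$ for $y\neq y'$, which is the off-diagonal case of \eqref{first1}. Likewise, setting $y'=y$, condition \eqref{cond2}---that $\Delta(\mE^{A\to B}(|x\lr x'|))=0$ for $x\neq x'$---is exactly the vanishing of $\la y|\mE^{A\to B}(|x\lr x'|)|y\ra=\mbf{v}_{y|x'}^\dagger\mbf{v}_{y|x}$ for every $y$, i.e. the off-diagonal case of \eqref{second1} (up to complex conjugation, which does not affect the vanishing). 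Reading these equivalences in both directions establishes the corollary: if $\mE^{A\to B}$ is DIO, the lemma supplies the off-diagonal relations while $r_{y|x}$ supplies the diagonal ones; conversely, \eqref{first1} and \eqref{second1} force \eqref{cond1} and \eqref{cond2}, so $\mE^{A\to B}$ is DIO.

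Since the whole argument is a bookkeeping translation, I expect no serious obstacle. The only points requiring care are keeping the conjugations and the pairing of the primed indices with the dagger straight in the key identity, and observing that the two defining equations encode the two lemma conditions in their off-diagonal parts while jointly fixing $r_{y|x}$ on the diagonal through normalization.
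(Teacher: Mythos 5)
Your proof is correct and takes essentially the same route as the paper: the corollary is intended there as an immediate consequence of Lemma~\ref{Lem:DIO-Characterization} via the identity $\la y|\mE^{A\to B}(|x\lr x'|)|y'\ra=\mbf{v}_{y'|x'}^{\dag}\mbf{v}_{y|x}$, which is exactly the translation you carry out (the paper spells out this computation explicitly only in the analogous MIO derivation leading to \eqref{first} and \eqref{second}). Your treatment of the conjugation when matching \eqref{cond2} to the off-diagonal case of \eqref{second1}, and the normalization $\sum_y r_{y|x}=1$ from trace preservation, are both accurate.
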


Consider now the equation $\sigma=\mE(\rho)$ where $\mE$ is DIO. We therefore have
\be
\sigma_{yy'}=\sum_{x,x'}\rho_{xx'}\la y|\mE(|x\lr x'|)|y'\ra
\ee
In the notations above, this is equivalent to
\be
\sigma_{yy'}=\sum_{x,x'}\rho_{xx'}\mbf{v}_{y'|x'}^{\dag}\mbf{v}_{y|x}
\ee
The diagonal terms have the form
\be
\sigma_{yy}=\sum_{x}r_{y|x}\rho_{xx}\;.
\ee

%{\color{blue}
%In general, determining state transformations under DIO is hard, but we give in Section~\ref{final} %sufficient conditions for the existence of DIO that converts one pure state to another.}

\subsection{Maximal Incoherent Operations (MIO)}

\begin{definition}
Let $\mE^{A\to B}:\mL(\mH^A)\to\mL(\mH^B)$ be a CPTP map. Then $\mE^{A\to B}$ is a Maximal Incoherent Operation (MIO) if
\be
\Delta\circ \mE^{A\to B}\circ \Delta=\mE^{A\to B}\circ \Delta\;.
\ee
\end{definition}

Let $\mE^{A\to B}:\mL(\mH^A)\to\mL(\mH^B)$ be a CPTP map with an operator sum representation $\{M_{j}\}_{j=1}^{m}$, and 
let $\mM$ denotes the set of MIOs. Then from the definition above, $\mE^{A\to B}\in\mM$ if and only if 
\be\label{maxi}
\sum_{j=1}^{m}\la y|M_j|x\ra\la x|M_{j}^{\dag}|y'\ra=0
\ee
for all $x\in\{1,...,d_A\}$ and $y\neq y'$ with $y,y'\in\{1,...,d_B\}$.  Using the notation of~\eqref{vnotation}
we get that then $\mE^{A\to B}\in\mM$ if and only if there exists $d_Ad_B$ vectors $\mbf{v}_{y|x}\in\mathbb{C}^m$,
and conditional probability distribution $r_{y|x}$ (i.e. $r_{y|x}\geq 0$ and $\sum_yr_{y|x}=1$) such that
\begin{align}
& \mbf{v}_{y'|x}^{\dag}\mbf{v}_{y|x}=r_{y|x}\delta_{yy'}\label{first}\\
& \sum_{y=1}^{d_B}\mbf{v}_{y|x}^{\dag}\mbf{v}_{y|x'}=\delta_{xx'}\label{second}\;,
\end{align}
where the first equation follows from~\eqref{maxi} and the second from $\sum_jM_{j}^{\dag}M_j=I$.

\subsubsection{Pure state transformations}

Consider a MIO that convert $|\psi\ra=\sum_{x}\sqrt{p_x}|x\ra$ to $|\phi\ra=\sum_{y}\sqrt{q_y}|y\ra$.
In this case, we have $|\phi\lr\phi|=\mE(|\psi\lr\psi|)$, where $\mE$ is MIO. 
 Then, there must exists coefficients $c_j$ such that $\sum_{j=1}^{m}|c_j|^2=1$ and  $M_j|\psi\ra=c_j|\phi\ra$. Denoting $\mbf{c}\equiv(c_j)_j\in\mathbb{C}^m$ gives
\be\label{pure}
\sqrt{q_y}\mbf{c}=\sum_x\sqrt{p_x}\mbf{v}_{y|x}\;\;\;\;\forall\;y\;.
\ee
Consider now the simpler case of $d_A=2$. We will also assume that $q_y>0$ and $d_B\geq 3$. The case $d_B=2$ is a specially case of the qubit mixed state transformation to be discussed later.
Denote by $r_y\equiv r_{y|0}$ and $t_y\equiv r_{y|1}$ the two probability distributions, and denote also $\mbf{v}_{y|0}\equiv\mbf{v}_y$ and $\mbf{v}_{y|1}\equiv\mbf{u}_y$. With these notations,
conditions~\eqref{first},~\eqref{second},~\eqref{pure} take the form:
\begin{align}
& \mbf{v}_{y}^{\dag}\mbf{v}_{y'}=r_y\delta_{yy'} \;\;,\;\;\mbf{u}_{y}^{\dag}\mbf{u}_{y'}=t_y\delta_{yy'}\nonumber\\
&\sum_{y=1}^{d_B}\mbf{v}_{y}^{\dag}\mbf{u}_y=0\;\;,\;\;\sqrt{q_y}\mbf{c}=\sqrt{p_0}\mbf{v}_y+\sqrt{p_1}\mbf{u}_y\label{simp}
\end{align}
The last equation can be written as:
\be
\sqrt{p_1}\mbf{u}_y=\sqrt{q_y}\mbf{c}-\sqrt{p_0}\mbf{v}_y
\ee
Hence, we must have
\begin{align}
& p_1t_y\delta_{yy'}=p_1\mbf{u}_{y}^{\dag}\mbf{u}_{y'}=\nonumber\\
&\sqrt{q_yq_{y'}}+p_0r_y\delta_{yy'}-\sqrt{p_0}\left(\sqrt{q_y}\mbf{c}^{\dag}\mbf{v}_{y'}+\sqrt{q_{y'}}\mbf{v}_{y}^{\dag}\mbf{c}\right)
\end{align}
where we have used the normalization of $\mbf{c}$ and the orthogonality of $\{\mbf{v}_y\}$ and of $\{\mbf{u}_y\}$. Therefore, after dividing both sides of the equation by $\sqrt{q_yq_{y'}}$ (which is non-zero) we get
\be\label{ynoty}
1=\sqrt{p_0}\left(\frac{\mbf{c}^{\dag}\mbf{v}_{y'}}{\sqrt{q_{y'}}}+\frac{\mbf{v}_{y}^{\dag}\mbf{c}}{\sqrt{q_y}}\right)\;\forall\;y\neq y'
\ee
and for $y=y'$
\be\label{yisy}
p_1t_y=q_y+p_0r_y-\sqrt{p_0q_y}\left(\mbf{c}^{\dag}\mbf{v}_{y}+\mbf{v}_{y}^{\dag}\mbf{c}\right)
\ee
From~\eqref{ynoty} we get that
\be
\sqrt{p_0}\frac{\mbf{v}_{y}^{\dag}\mbf{c}}{\sqrt{q_y}}\equiv a
\ee
where $a$ is some complex number independent of $y$ satisfying $a+\bar{a}=1$.
Substituting this into~\eqref{yisy} we get
\be
p_1t_y=q_y+p_0r_y-q_y
\ee
This equation holds iff
\be
p_0=p_1=\frac{1}{2}\;\;,\;\;\text{ and }\;\;t_y=r_y\;.
\ee
With these choices, the first equation of~\eqref{simp} gives
\be
0=\sum_{y=1}^{d_B}\mbf{v}_{y}^{\dag}\mbf{u}_y=\sum_{y=1}^{d_B}\mbf{v}_{y}^{\dag}\left(\sqrt{2q_y}\mbf{c}-\mbf{v}_y\right)
\ee
which is equivalent to
\be
1=\sum_{y=1}^{d_B}\sqrt{2q_y}\mbf{v}_{y}^{\dag}\mbf{c}=2a\;.
\ee
We therefore conclude that
\be\label{vecc}
\mbf{v}_{y}^{\dag}\mbf{c}=\sqrt{\frac{q_y}{2}}\;.
\ee
Since $q_y>0$ we get that $\mbf{v}_y\neq 0$ for all $y$ and therefore $r_y>0$ for all $y$. Together with the orthogonality relation of $\mbf{v}_y$, this implies that the set of vectors $\left\{\frac{1}{\sqrt{r_y}}\mbf{v}_y\right\}$ is orthonormal. Therefore, the number of Kraus operators $m$ (which is the dimension of $\mbf{v}_{y|x}$) must be at least $d_B$. Hence, the equation above gives:
$$
\sum_{y=1}^{d_B}\frac{q_y}{2r_y}=\sum_{y=1}^{d_B}\frac{\mbf{c}^\dag\mbf{v}_{y}\mbf{v}_{y}^{\dag}\mbf{c}}{r_y}
\leq \mbf{c}^{\dag}\mbf{c}=1\;.
$$
A simple calculation shows that $\sum_y q_y/r_y$ obtains its minimum value when
\be\label{ry}
r_y=\frac{\sqrt{q_y}}{\sum_{y'=1}^{d_B}\sqrt{q_{y'}}}\;.
\ee
Therefore, we get,
\be
1\geq \sum_{y=1}^{d_B}\frac{q_y}{2r_y}\geq \frac{1}{2}\left(\sum_{y=1}^{d_B}\sqrt{q_y}\right)^2 
\ee
We therefore arrive at the following theorem.
\begin{theorem}
\label{thm:MIO-pure}
Let $|\psi\ra=\sqrt{p_0}|0\ra+\sqrt{p_1}|1\ra$ and $|\psi\ra=\sum_{y=1}^{d_B}\sqrt{q_y}|y\ra$, where
$q_y>0$ and $d_B>2$. Then, $|\psi\ra$ can be converted to $|\phi\ra$ if and only if $p_0=p_1=1/2$ and
\be\label{assumption}
\sum_{y=1}^{d_B}\sqrt{q_y}\leq\sqrt{2}\;.
\ee
\end{theorem}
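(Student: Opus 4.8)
The statement is an equivalence, and the necessity half has effectively been carried out in the computation preceding it, so the plan is to record that direction briefly and reserve the real effort for sufficiency. For necessity I would chain together the deductions already available: the defining relations~\eqref{first}--\eqref{second} together with the pure-state condition~\eqref{pure} reduce, for $d_A=2$, to the system~\eqref{simp}; eliminating $\mbf u_y$ and separating the $y\neq y'$ and $y=y'$ components produces~\eqref{ynoty} and~\eqref{yisy}; equation~\eqref{ynoty} forces $\sqrt{p_0}\,\mbf v_y^\dagger\mbf c/\sqrt{q_y}$ to be a $y$-independent constant $a$ with $a+\bar a=1$, and substituting back into~\eqref{yisy} collapses it to $p_1 t_y=p_0 r_y$. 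Summing on $y$ and using $\sum_y r_y=\sum_y t_y=1$ gives $p_0=p_1=1/2$; then~\eqref{vecc} holds, the orthonormality of $\{\mbf v_y/\sqrt{r_y}\}$ yields $\sum_y q_y/(2r_y)\leq 1$ via the projector bound $\sum_y\mbf v_y\mbf v_y^\dagger/r_y\leq\mathbb{I}$, and minimizing the left side over probability vectors at the optimum~\eqref{ry} produces $\tfrac12\big(\sum_y\sqrt{q_y}\big)^2\leq 1$, i.e.~\eqref{assumption}.

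For sufficiency I would reverse-engineer a MIO from exactly this analysis. Assume $p_0=p_1=1/2$ and set $S:=\sum_y\sqrt{q_y}\leq\sqrt2$. Working in $\mathbb{C}^{m}$ with $m=d_B+1$ and an orthonormal basis $\{\mbf e_1,\dots,\mbf e_{d_B},\mbf e_0\}$, I take the optimal weights $r_y=\sqrt{q_y}/S$ from~\eqref{ry} and put $\mbf v_y:=\sqrt{r_y}\,\mbf e_y$, then $\mbf c:=\sum_{y=1}^{d_B}\sqrt{q_y/(2r_y)}\,\mbf e_y+\sqrt{1-\tfrac12 S^2}\,\mbf e_0$ and $\mbf u_y:=\sqrt{2q_y}\,\mbf c-\mbf v_y$. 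The first $d_B$ coordinates of $\mbf c$ are forced by~\eqref{vecc}, namely $\mbf v_y^\dagger\mbf c=\sqrt{q_y/2}$, while the extra coordinate $\mbf e_0$ is present solely to normalize $\mbf c$. Defining Kraus operators by $\la y|M_j|0\ra:=(\mbf v_y)_j$ and $\la y|M_j|1\ra:=(\mbf u_y)_j$ then gives a candidate channel $\mE$.

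Finally I would check that $\mE$ is a MIO realizing $\ket\psi\to\ket\phi$. The relation $\mbf v_y^\dagger\mbf v_{y'}=r_y\delta_{yy'}$ is immediate; a short expansion of $\mbf u_y^\dagger\mbf u_{y'}$ using only $\|\mbf c\|=1$, $\mbf v_y^\dagger\mbf c=\sqrt{q_y/2}$ and the orthogonality of the $\mbf v_y$ gives $\mbf u_y^\dagger\mbf u_{y'}=r_y\delta_{yy'}$, and $\sum_y\mbf v_y^\dagger\mbf u_y=\sum_y(q_y-r_y)=0$. These are precisely~\eqref{first} and~\eqref{second} with $r_{y|0}=r_{y|1}=r_y$, so $\mE$ is a trace-preserving MIO; moreover $\tfrac{1}{\sqrt2}(\mbf v_y+\mbf u_y)=\sqrt{q_y}\,\mbf c$ reproduces~\eqref{pure}, whence $M_j\ket\psi=c_j\ket\phi$ with $\sum_j|c_j|^2=\|\mbf c\|^2=1$ and therefore $\mE(\op\psi\psi)=\op\phi\phi$. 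The one place where the hypothesis is genuinely consumed, and the step I expect to be the crux, is the normalization $\|\mbf c\|^2=\sum_y q_y/(2r_y)+\big(1-\tfrac12 S^2\big)=1$: it requires $1-\tfrac12 S^2\geq 0$, i.e.~$S\leq\sqrt2$, so that the very minimization~\eqref{ry} bounding the necessity direction also governs how much slack the auxiliary coordinate must absorb. Everything else is routine verification.
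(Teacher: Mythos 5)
Your proposal is correct and follows essentially the same route as the paper's own proof: necessity via the chain \eqref{simp}, \eqref{ynoty}, \eqref{yisy}, \eqref{vecc} and the minimization at \eqref{ry}, and sufficiency via the identical construction with $m=d_B+1$ Kraus operators, $\mbf{v}_y=\sqrt{r_y}\,\mbf{e}_y$, $\mbf{u}_y=\sqrt{2q_y}\,\mbf{c}-\mbf{v}_y$, and an auxiliary coordinate absorbing the normalization slack $1-\tfrac{1}{2}S^2\geq 0$. Your explicit value $c_y=\sqrt{q_y/(2r_y)}$ is in fact the one consistent with \eqref{vecc}, and your verification of $\mbf{u}_y^{\dag}\mbf{u}_{y'}=r_y\delta_{yy'}$ and $\sum_y(q_y-r_y)=0$ spells out details the paper leaves implicit.
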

\begin{proof}
The necessity of this condition follows from the arguments above.
To prove sufficiency, take $m=d_B+1$ and $\mbf{v}_y=\sqrt{r_y}\mbf{e}_y$, where $\{\mbf{e}_y\}$ is the standard basis of $\mathbb{C}^m$, and $r_y$ is given in~\eqref{ry}.
To be consistent with~\ref{vecc} we define for $j=1,...,d_B$
\be
c_j=\frac{\sqrt{q_j}}{\sqrt{2}\sum_{y=1}^{d_B}\sqrt{q_{y}}}
\ee
and for $j=d_B+1$ we define
\be
c_{d_B+1}=\sqrt{1-\sum_{j=1}^{d_B+1}c_{j}^{2}}\;.
\ee
Note that the term inside the sum is positive due to~\eqref{assumption}.
Finally, we define for $y=1,...,d_B$
\be
\mbf{u}_y=\sqrt{2q_y}\mbf{c}-\mbf{v}_y
\ee
With these choices, all the conditions in~\eqref{simp} are satisfied. This completes the proof.
\end{proof}
\begin{example}
Consider the following two states:
\be
|+\ra=\sqrt{\frac{1}{2}}|0\ra+\sqrt{\frac{1}{2}}|1\ra
\ee
and
\be
|\psi\ra:=\sqrt{\frac{8}{9}}|0\ra+\sqrt{\frac{1}{18}}|1\ra+\sqrt{\frac{1}{18}}|2\ra\;.
\ee
We show that the transformation $|+\ra\to|\psi\ra$ is achievable by maximally incoherent operations. Indeed, consider
the following three Kraus operators:
\begin{align}
& M_1=\frac{\sqrt{2}}{3\sqrt{3}}\begin{pmatrix}
3 & 1\\
0 & 1\\
0 & 1
\end{pmatrix}\\
& M_2=\frac{1}{3\sqrt{6}}\begin{pmatrix}
0 & 4\\
3 & -2\\
0 & 1
\end{pmatrix}\\
& 
M_2=\frac{1}{3\sqrt{6}}\begin{pmatrix}
0 & 4\\
0 & 1\\
3 & -2
\end{pmatrix}
\end{align}
\end{example}
It is strightforward to check that $\sum_{j=1}^{3}M_{j}^{\dag}M_j=I_2$ where $I_2$ is the $2\times 2$ identity matrix.
Furthermore, note that
\be
M_j|+\ra\propto 4|0\ra+|1\ra+|2\ra\propto |\psi\ra\;\;\forall\;j=1,2,3
\ee
To see that it is a maximal incoherent operation, note that
\be
\sum_{j=1}^{3}M_j|0\lr 0|M_{j}^{\dag}=\sum_{j=1}^{3}M_j|1\lr 1|M_{j}^{\dag}
=\frac{1}{6}\begin{pmatrix}
4 & 0 & 0\\
0 & 1 & 0\\
0 & 0 & 1
\end{pmatrix}\;.
\ee
\begin{figure}[t]
\includegraphics[scale=0.2]{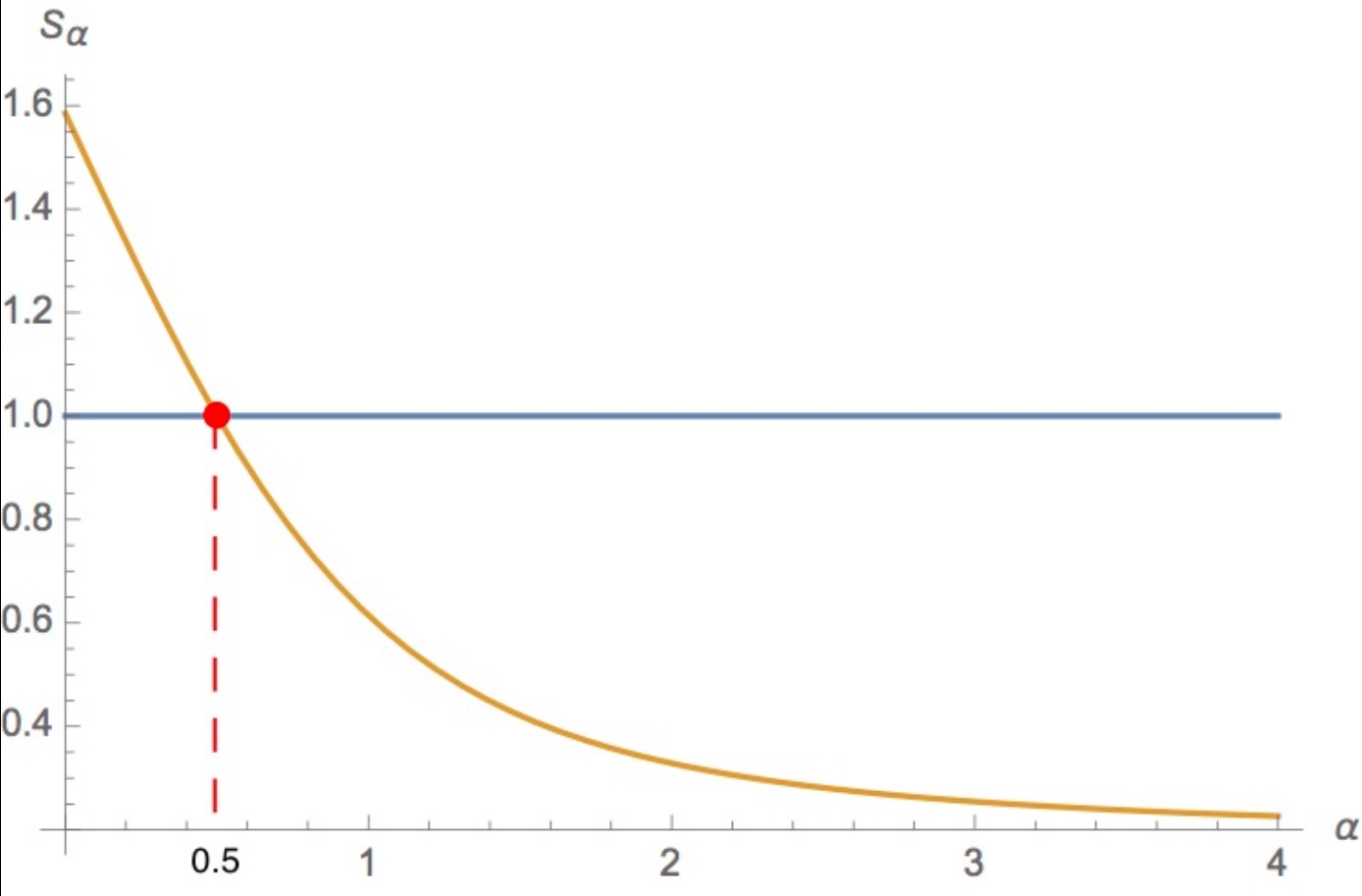}
\caption{\label{Fig1}  Comparison of $S_\alpha(|+\rangle)=1$ (the blue line) and $S_\alpha(|\psi\ra)$ (the yellow line) as a function of $\alpha$. For $0\leq \alpha<1/2$, $S_\alpha(|\psi\ra)> S_\alpha(|+\ra)$, and for $\alpha>1/2$, $S_\alpha(|\psi\ra)< S_\alpha(|+\ra)$.}
\end{figure}

In Fig.1 we plot the Renyi entropies of these two states. From the graph it is clear that 
$S_\alpha(|\psi\ra)> S_\alpha(|+\ra)=1$ for $\alpha\in[0,1/2)$. Therefore,
this example also demonstrate that all the Renyi entropies with $\alpha\in[0,1/2)$ are \emph{not} monotones and therefore are not measures of coherence. Furthermore, it provides an independant proof that the Renyi divergences $D_\alpha$ and $D_\alpha^{(q)}$ do \emph{not} satisfy the data processing inequality in the $\alpha$-ranges
$(2,\infty]$ and $[0,1/2)$, respectively.

\section{New family of monotones}
\begin{theorem}
Let $D(\rho\|\sigma)$ be a contractive function; i.e. $D(\mE(\rho)\|\mE(\sigma))\leq D(\rho\|\sigma)$
if $\mE$ is a CPTP map. Let $A_\rho$ be a set of density matrices acting on $\mbb{C}^d$. Note that the set $A_\rho$
can depend of the state $\rho$. If $\mE(A_\rho)\subseteq A_{\mE(\rho)}$ for all free operations $\mE$, then
then the two functions
\begin{align}
C_{A}^{R}(\rho)&= \min_{\sigma\in A_\rho}D(\rho\|\sigma)\nonumber\\
C_{A}^{L}(\rho) &=\min_{\sigma\in A_\rho}D(\sigma\|\rho)
\end{align}
are monotonic under the set of free operations.
\end{theorem}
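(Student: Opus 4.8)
The plan is to give a direct argument built from three ingredients: the optimizer of the defining minimization, the contractivity of $D$, and the compatibility hypothesis $\mE(A_\rho)\subseteq A_{\mE(\rho)}$. Both functions are handled by the same template, so I would prove the statement for $C_A^R$ in full and then obtain $C_A^L$ by swapping the two arguments of $D$ throughout. The goal in each case is to establish the monotonicity inequality $C_A^R(\mE(\rho))\leq C_A^R(\rho)$ for every free operation $\mE$ and every state $\rho$.

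First I would fix a free operation $\mE$ and an input state $\rho$, and choose $\sigma^*\in A_\rho$ attaining the minimum in the definition of $C_A^R$, so that $C_A^R(\rho)=D(\rho\|\sigma^*)$. The key observation is that the transformed optimizer $\mE(\sigma^*)$ is a feasible competitor for the minimization defining $C_A^R(\mE(\rho))$: since $\sigma^*\in A_\rho$, the hypothesis $\mE(A_\rho)\subseteq A_{\mE(\rho)}$ yields $\mE(\sigma^*)\in A_{\mE(\rho)}$. I would then chain two inequalities. Because the minimum over $A_{\mE(\rho)}$ cannot exceed the value of $D$ at the particular feasible point $\mE(\sigma^*)$, we have $C_A^R(\mE(\rho))\leq D(\mE(\rho)\|\mE(\sigma^*))$. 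Contractivity of $D$ under the CPTP map $\mE$ then gives $D(\mE(\rho)\|\mE(\sigma^*))\leq D(\rho\|\sigma^*)=C_A^R(\rho)$, and combining the two bounds delivers the claim. For $C_A^L$ I would repeat verbatim with $\sigma^*$ minimizing $D(\sigma\|\rho)$ and invoke contractivity in the form $D(\mE(\sigma^*)\|\mE(\rho))\leq D(\sigma^*\|\rho)$, again using $\mE(\sigma^*)\in A_{\mE(\rho)}$ to guarantee feasibility.

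There is no genuine obstacle here; the proof is essentially bookkeeping, and the only conceptual point worth emphasizing is that the entire argument hinges on the compatibility condition $\mE(A_\rho)\subseteq A_{\mE(\rho)}$, which is exactly what ensures that the image of the optimizer for $\rho$ remains an admissible point in the constraint set for $\mE(\rho)$. Without it the two minimizations would range over unrelated sets and contractivity alone would be powerless. The one technical caveat I would flag is the assumed attainment of the minimum: if $A_\rho$ fails to be compact one should read $\min$ as $\inf$ and run the identical chain of inequalities with an $\epsilon$-optimal $\sigma^*$, letting $\epsilon\to 0$ at the end, which affects nothing of substance.
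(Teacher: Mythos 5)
Your proof is correct and follows essentially the same route as the paper's: both bound the minimum over $A_{\mE(\rho)}$ by restricting to the image $\mE(A_\rho)$ (you via the transformed optimizer $\mE(\sigma^*)$, the paper via minimizing over the subset), then apply contractivity. The $\inf$-versus-$\min$ caveat you flag is sensible but not needed beyond what the paper implicitly assumes.
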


\begin{proof}
\begin{align}
C_{A}^{R}(\mE(\rho)) 
& =\min_{\tau\in A_{\mE(\rho)}}D(\mE(\rho)\|\tau)\;\nonumber\\
&\leq\min_{\tau\in\mE( A_{\rho})}D(\mE(\rho)\|\tau)\;\nonumber\\
& =\min_{\sigma\in A_\rho} D(\mE(\rho)\|\mE(\sigma))\;\nonumber\\
& \leq\min_{\sigma\in A_\rho} D(\rho\|\sigma)=
C_{A}^{R}(\rho) 
\end{align}
Similar arguments prove that $C_{A}^{L}$ is also a monotone.
\end{proof}

\subsection{MIO Monotones}
\textbf{Example 1}: Take $A_\rho=\mI$ the set of incoherent diagonal states.
In this case, $A_\rho$ is independent of $\rho$ so we get trivially that
\be
\mE(A_\rho)=\mE(\mI)\subseteq\mI=A_{\mE(\rho)}\;
\ee
for any DIO (or MIO) $\mE$. Moreover, in this case,
\be
C_{A}^{R}(\rho)= \min_{\sigma\in \mI}D(\rho\|\sigma)
\ee
which is the well know measure we already discussed before. However, note that under PIO,SIO,IO,DIO or MIO
\be
C_{A}^{L}(\rho)= \min_{\sigma\in \mI}D(\sigma\|\rho)
\ee
is also a monotone.

\subsubsection{Relative R\'{e}nyi $\alpha$-monotones}

For $\alpha\in[0,\infty]$ the relative Renyi entropy is defined by
\be\label{type1alpha}
D_{\alpha}(\rho\|\sigma):=\frac{1}{\alpha-1}\log\tr(\rho^\alpha\sigma^{1-\alpha})\;.
\ee
This quantity is contractive (or equivalently satisfies the data processing inequality) for all $\alpha\in[0,2]$.
We will therefore be interested here only in this range of $\alpha$.
Define the $\alpha$-Coherence monotone by ($0\leq\alpha\leq 2$):
\be
C_\alpha(\rho):=\min_{\sigma\in\mI} D_{\alpha}(\rho\|\sigma)
\ee
We can compute this monotone explicitly, and part of the following work overlaps with independent work conducted by Rastegin in Ref. \cite{Rastegin-2015a}.  Let $\sigma=\sum_x q_x |x\lr x|$ be some free state.
Then,
\be
C_\alpha(\rho):=\min_{\{q_x\}} \frac{1}{\alpha-1}\log\sum_{x} q_{x}^{1-\alpha}\la x|\rho^\alpha|x\ra
\ee
Denote,
\be
r_{x}\equiv\frac{\left(\la x|\rho^\alpha|x\ra\right)^{1/\alpha}}{r}\;\;\;\text{where }\;\;r\equiv\sum_{x}\left(\la x|\rho^\alpha|x\ra\right)^{1/\alpha}
\ee
By definition, $\sum_x r_x=1$ and $r_x\geq 0$. Therefore,
\begin{align}
C_\alpha(\rho)&=\frac{\alpha}{\alpha-1}\log r+\min_{\{q_x\}}\frac{1}{\alpha-1}\log\sum_{x} q_{x}^{1-\alpha}r_{x}^{\alpha}\nonumber\\
&=\frac{\alpha}{\alpha-1}\log r+\min_{\{q_x\}}D_{\alpha}(\{r_x\}\|\{q_x\})\nonumber\\
& =\frac{\alpha}{\alpha-1}\log r\;,
\end{align}
where $D_{\alpha}(\{r_x\}\|\{q_x\})$ is the classical Renyi-divergence.
We therefore conclude that for $\alpha\in[0,2]$ the quantities
\be\label{alpha}
C_{\alpha}(\rho)=\frac{\alpha}{\alpha-1}\log \sum_{x}\left(\la x|\rho^\alpha|x\ra\right)^{1/\alpha}\;.
\ee
are coherence monotones.
Note that in the limit $\alpha\to 1$ we get $C_\alpha(\rho)\to C_{rel}(\rho)$.
Furthermore, in terms of the completely dephasing map $\Delta(\rho):=\sum_{x}\la x|\rho |x\ra\;|x\lr x|$, we have
\begin{align}\label{alpha}
C_\alpha(\rho)& =\frac{\alpha}{\alpha-1}\log\tr\left[\left(\Delta(\rho^\alpha)\right)^{1/\alpha}\right]\nonumber\\
&=
\frac{1}{\alpha-1}\log\tr\left[\left\|\Delta(\rho^\alpha)\right\|_{1/\alpha}\right].
\end{align}
$C_\alpha(\rho)$ can also be written in terms of the eigenvalues of $\rho$ as follows.
Suppose the spectrum decomposition of $\rho$ is given by
\be
\rho=\sum_{y=1}^{n}\lambda_y|v_y\lr v_y|
\ee
where $\lambda_y$ are the eigenvalues of $\rho$, with corresponding eigenvectors $|v_y\ra$.
Denote by $D$ the $n\times n$ doubly-stochastic matrix whose elements are $D_{xy}\equiv|\la x|v_y\ra|^2$. 
Then, Eq.~\eqref{alpha} takes the form
\be
C_{\alpha}(\rho)=\frac{\alpha}{\alpha-1}\log \sum_{x}\left(\sum_{y}D_{xy}\lambda_{y}^{\alpha}\right)^{1/\alpha}\;.
\ee

Note that for a pure state $\rho=|\psi\lr\psi|$ we have
\be
C_{\alpha}(\psi)=\frac{\alpha}{\alpha-1}\log\sum_jp_{j}^{1/\alpha}= S_{1/\alpha}(p)
\ee
where $S_{1/\alpha}$ is the R\'{e}nyi entropy with parameter $1/\alpha\in[1/2,\infty]$.
\begin{example}
Consider $\alpha=2$ in~\eqref{alpha}. Then, this monotone has a particular simple expression. Denoting by $\rho_{xy}$ the components of $\rho$ we get:
\be
C_{\alpha=2}(\rho)=2\log \sum_{x}\sqrt{\la x|\rho^2|x\ra}=2\log\sum_{x}\left(\sum_{y}|\rho_{xy}|^2\right)^{1/2}
\ee
We now apply this to the qubit case where
\be
\rho=\begin{pmatrix}
p & r \\
r & 1-p
\end{pmatrix}
\ee
Then,
\be
C_{\alpha=2}(\rho)=2\log\left(\sqrt{p^2+r^2}+\sqrt{(1-p)^2+r^2}\right)
\ee
\end{example}

\subsubsection{Quantum Relative R\'{e}nyi $\alpha$-monotones}

For $\alpha\in[1/2,\infty]$ the quantum relative Renyi entropy is given by
\be\label{type2}
D_{\alpha}^{(q)}(\rho\|\sigma):=\frac{1}{\alpha-1}\log\tr\left[\left(\sigma^{\frac{1-\alpha}{2\alpha}}\rho\sigma^{\frac{1-\alpha}{2\alpha}}\right)^\alpha\right]\;.
\ee
Define the quantum $\alpha$-Coherence monotone by:
\be
C_{\alpha}^{(q)}(\rho):=\min_{\sigma\in\mI} D^{(q)}_{\alpha}(\rho\|\sigma)
\ee
The minimization in this case is harder to perform.
However, for a pure state $\rho=|\psi\lr\psi|$ we have
\begin{align*}
D_{\alpha}^{(q)}(\rho\|\sigma)& =\frac{1}{\alpha-1}\log\tr\left[\left(\sigma^{\frac{1-\alpha}{2\alpha}}|\psi\lr\psi|\sigma^{\frac{1-\alpha}{2\alpha}}\right)^\alpha\right]\\
& = \frac{\alpha}{\alpha-1}\log\la\psi|\sigma^{\frac{1-\alpha}{\alpha}}|\psi\ra
\end{align*}
which is very similar to the expression we get for the relative Renyi entropy.
We therefore conclude that for pure states:
\be
C_{\alpha}^{(q)}(\psi)=\frac{2\alpha-1}{\alpha-1}\log\left(\sum_jp_{j}^{\frac{\alpha}{2\alpha-1}}\right)\;.
\ee
Denoting $\gamma\equiv\frac{\alpha}{2\alpha-1}$ we can rewrite the expression above as:
\be
C_{\alpha}^{(q)}(\psi)=\frac{1}{1-\gamma}\log\left(\sum_jp_{j}^{\gamma}\right)\equiv S_\gamma(\mbf{p})\;.
\ee
 
Note that the range of $\gamma$ is also $[1/2,\infty]$. Also, the other two parameter quantum divergences
introduced in~\cite{Datta15} lead to the same R\'{e}nyi entropies for pure states. Therefore, one may be tempted to conjecture that the transformation
\be
|\psi\rangle\to|\phi\rangle
\ee
is possible by MIO if and only if
\be
S_\alpha(\mbf{p})\geq S_\alpha(\mbf{q})\;\;\forall\;\alpha\in[1/2,\infty]\;,
\ee
where the probability vectors $\mbf{p}$ and $\mbf{q}$ corresponds to $|\psi\ra$ and $|\phi\ra$, respectively.
However, note that the requirements $p_0=p_1=\frac{1}{2}$ in Theorem \ref{thm:MIO-pure} 
shows that this conjecture is false. That is, the above equation is necessary 
but not sufficient for the existence of a MIO from $|\psi\ra\to|\phi\ra$.

\begin{example}
Consider the case $\alpha=\infty$ in~\eqref{type2}.
In this case, $D^{(q)}_\alpha$, is known to be equal to the max relative entropy given by
\be
\label{Eq:Renyi-Rel-Entropy-Inf1}
D^{(q)}_\infty(\rho\|\sigma)=\log\min\{\lambda\;:\;\rho\leq\lambda\sigma\}
\ee
The corresponding monotone is therefore
\be
C_\infty^{(q)}(\rho)=\log\min\left\{\tr(\sigma)\;:\;\rho\leq\sigma\;\;;\;\;\frac{\sigma}{\tr(\sigma)}\in\mI\right\}
\ee
To calculate this expression, observe that it can be rewritten as
\be
\label{Eq:Renyi-Rel-Entropy-Inf2}
C^{(q)}_\infty(\rho)=\log\min\left\{\tr(\sigma)\;:\;\rho\leq\Delta(\sigma)\;\;;\;\;\sigma\geq 0\right\}
\ee
Next, we recall the dual formulation in linear programming (see, e.g. Renes' paper on sub-relative-majorization \cite{Renes-2015a}, as well as recent work by Piani \textit{et al.} \cite{Piani-2016a}).
Consider the following setting of linear programming. Let $V_1$ and $V_2$ be two (inner product) vector spaces with two cones $K_1\subset V_1$ and $K_2\subset V_2$. Consider two vectors $v_1\in V_1$ and $v_2\in V_2$, and a linear map
$\mathcal{T}:V_1\to V_2$. Then, the primal form:
\be
\max_{\substack{x\in K_1\\
v_2-\mT(x)\in K_2}}\langle v_1,x\rangle_1
\ee
The dual form involve $\mT^*:V_2\to V_1$:
\be
\min_{\substack{y\in K_2\\
\mT^*(y)-v_1\in K_1}}\langle v_2,y\rangle_2
\ee
Applying this to our formulation, take $V_1=V_2=H_n$ the vector space of $n\times n$ Hermitian matrices.
Take $K_1=K_2=H_{n,+}$ be the cone of positive semi-definite matrices in $H_n$. Take $\mT=\Delta$ which is self-adjoint. Finally, take $v_2=I$, $v_1=\rho$, $y=\sigma$, $x=\tau$. With this choices the dual is our original expression for $C_\infty$
and the primal is the following expression
\begin{align}
\label{Eq:Cdual1}
C_\infty^{(q)}(\rho)&=\log\max\left\{\tr(\rho\tau)\;:\;\Delta(\tau)\leq I\;\;;\;\;\tau\geq 0\right\}\\
&=\log\max\left\{\tr(\rho\tau)\;:\;\Delta(\tau)= I\;\;;\;\;\tau\geq 0\right\}
\end{align}
Note that for $j\neq k$, $|\tau_{jk}|\leq 1$. Otherwise, if $|\tau_{jk}|>1$, one can find $\theta\in[0,2\pi]$ such that
for $|\psi\ra=|j\ra+e^{i\theta}|k\ra$, the expectation value $\la\psi|\tau|\psi\ra<0$. We therefore conclude that
\begin{align}
\tr(\rho\tau)=1+\sum_{j\neq k}\rho_{jk}\tau_{kj}&\leq 1+\sum_{j\neq k}|\rho_{jk}|\;\notag\\
&=1+C_{\ell_1}(\rho),
\end{align}
where 
\begin{equation}
C_{\ell_1}(\rho)=\sum_{j\neq k}|\rho_{jk}|
\end{equation}
is the so called $\ell_1$ coherence measure \cite{Baumgratz-2014a}.  This bound can be saturated in the case where $\rho$ is real with non-negative off-diagonal terms, in which case we take $\tau=|\psi\lr\psi|$ with $|\psi\ra=\sum_{x}|x\ra$. 

Note the relation between $C_\infty^{(q)}$ and the Robustness of Coherence $C_R$, which is defined as
\begin{align}\label{robust}
C_{R}(\rho)&=\min_{t\geq 0}\left\{t\;\Big|\;\frac{\rho+t\sigma}{1+t}\in\mI,\;\sigma\geq 0\right\}.
\end{align}
Letting $\hat{\sigma}=\rho+t\sigma$ so that $t=\tr[\hat{\sigma}]-1$, we can rewrite this as
\begin{align}
C_R(\rho)=\min_{\hat{\sigma}}\left\{\tr[\hat{\sigma}]-1\;\Big|\;\frac{\hat{\sigma}}{\tr[\hat{\sigma}]}\in\mI,\;\hat{\sigma}\geq \rho\right\}.
\end{align}
Putting everything together, we obtain
\begin{proposition}
\label{Prop:robustness}
\begin{equation}
C_\infty^{(q)}(\rho)=\log[1+C_R(\rho)].
\end{equation}
Moreover, $C_R(\rho)=C_{\ell_1}(\rho)$ for pure states, qubit mixed states, and any state $\rho$ with non-negative real matrix elements when expressed in the incoherent basis.
\end{proposition}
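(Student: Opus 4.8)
The plan is to treat the two assertions in turn: first the exact identity $C_\infty^{(q)}(\rho)=\log[1+C_R(\rho)]$, and then the equality $C_R=C_{\ell_1}$ on the three special families. For the identity I would simply match the two optimization problems already assembled above, without any further duality work. Equation \eqref{Eq:Renyi-Rel-Entropy-Inf2} expresses $C_\infty^{(q)}(\rho)=\log\min\{\tr(\sigma):\rho\leq\Delta(\sigma),\ \sigma\geq 0\}$, while the rewriting of the robustness \eqref{robust} reads $1+C_R(\rho)=\min\{\tr(\hat\sigma):\hat\sigma/\tr(\hat\sigma)\in\mI,\ \hat\sigma\geq\rho\}$. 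The constraint $\hat\sigma/\tr(\hat\sigma)\in\mI$ forces $\hat\sigma$ to be diagonal and positive semidefinite, so $\Delta(\hat\sigma)=\hat\sigma$. The feasible sets then correspond in both directions: given any $\sigma\geq 0$ with $\rho\leq\Delta(\sigma)$, the matrix $\Delta(\sigma)$ is diagonal, positive, dominates $\rho$, and shares the trace of $\sigma$, hence is feasible for the robustness program; conversely every feasible $\hat\sigma$ is itself diagonal and positive with $\Delta(\hat\sigma)=\hat\sigma$, hence feasible for the first program with the same trace. Since $\tr(\cdot)$ is preserved under $\sigma\mapsto\Delta(\sigma)$, the two minima coincide and the identity follows.

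For the moreover part I would first record the universal inequality. Using the dual form \eqref{Eq:Cdual1} together with the identity just proved, $1+C_R(\rho)=e^{C_\infty^{(q)}(\rho)}=\max\{\tr(\rho\tau):\Delta(\tau)=I,\ \tau\geq 0\}$, and the bound $\tr(\rho\tau)\leq 1+C_{\ell_1}(\rho)$ derived above immediately yields $C_R(\rho)\leq C_{\ell_1}(\rho)$ for every state. The reverse direction is where the structural hypotheses enter. When $\rho$ has non-negative real entries, the explicit witness $\tau=\op{\psi}{\psi}$ with $\ket{\psi}=\sum_x\ket{x}$ is feasible, since $\Delta(\tau)=I$ and $\tau\geq 0$, and it attains $\tr(\rho\tau)=1+\sum_{j\neq k}\rho_{jk}=1+C_{\ell_1}(\rho)$; thus the dual bound is saturated and $C_R=C_{\ell_1}$ on this family.

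To cover pure states and qubit mixed states I would reduce them to the non-negative real case by an incoherent diagonal unitary. Both $C_{\ell_1}$ and $C_R$ are invariant under a diagonal unitary $U=\sum_x e^{i\theta_x}\op{x}{x}$, because $U$ only rephases off-diagonal elements, leaving $|(U\rho U^\dagger)_{jk}|=|\rho_{jk}|$ and mapping $\mI$ to itself (so $C_\infty^{(q)}$, and hence $C_R$, is unchanged). For a pure state $\ket{\phi}=\sum_x|\phi_x|e^{i\theta_x}\ket{x}$ one chooses $U=\sum_x e^{-i\theta_x}\op{x}{x}$ to obtain a representative with non-negative real entries, and for a qubit the single off-diagonal phase is removed the same way; invariance then transports the equality back to the original state. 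The main obstacle I anticipate is precisely this reverse inequality $C_{\ell_1}\leq C_R$ in the special cases, namely exhibiting a feasible $\tau$ saturating the dual bound: the all-ones witness settles the non-negative real class, and the genuine (if modest) insight for pure and qubit states is that their diagonal-unitary orbit always contains such a non-negative real representative. By contrast, the first identity and the universal direction $C_R\leq C_{\ell_1}$ are essentially bookkeeping once the primal and dual programs are in hand.
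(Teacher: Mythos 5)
Your proof is correct and follows essentially the same route as the paper: the identity $C_\infty^{(q)}(\rho)=\log[1+C_R(\rho)]$ by matching the feasible sets of the two minimization programs (via $\sigma\mapsto\Delta(\sigma)$), and the equality $C_R=C_{\ell_1}$ from the dual bound $\tr(\rho\tau)\leq 1+C_{\ell_1}(\rho)$ saturated by the all-ones witness $\tau=\op{\psi}{\psi}$, $\ket{\psi}=\sum_x\ket{x}$. Your only additions are to spell out details the paper leaves implicit, namely the explicit feasible-set correspondence and the reduction of pure and qubit states to the non-negative real case by diagonal (incoherent) unitaries, both of which are exactly the intended arguments.
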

\end{example}
It is still an open problem whether $C_{\ell_1}$ is a MIO monotone in general, although it is a known monotone under IO \cite{Baumgratz-2014a}.

\subsection{DIO Monotones}

Take $A_\rho=\{\Delta(\rho)\}$ which contains only a single state.
Note that under DIO $\mE$ we have
\be
\mE(A_\rho)=\{\mE(\Delta(\rho))\}=\{\Delta(\mE(\rho))\}=A_{\mE(\rho)}\;.
\ee
Therefore, both the functions
\begin{align}
C_{A}^{R}(\rho)= D(\rho\|\Delta(\rho))\;\;\;\;,\;\;\;\;
C_{A}^{L}(\rho) =D(\Delta(\rho)\|\rho)
\end{align}
are monotones.
If we take $D(\rho,\sigma)=\|\rho-\sigma\|$, where $\|\cdot\|$ is the trace norm, we get
\be
C_{A}^{R}(\rho)=C_{A}^{L}(\rho)=\|\rho-\Delta(\rho)\|\;,
\ee
which is a function only of the off-diagonal terms. 

If we choose $D$ as in~\eqref{type1alpha} then we get the following monotones:
\begin{align}
C_{\alpha}^{R}(\rho)= \frac{1}{\alpha-1}\log\tr\left[\rho^\alpha\left(\Delta(\rho)\right)^{1-\alpha}\right]\nonumber\\
C_{\alpha}^{L}(\rho) =\frac{1}{\alpha-1}\log\tr\left[\left(\Delta(\rho)\right)^{\alpha}\rho^{1-\alpha}\right]
\end{align}
For a pure state $\rho=|\psi\lr\psi|$ with $|\psi\ra=\sum_{x}\sqrt{p_x}|x\ra$ we have
\begin{align}
C_{\alpha}^{R}(\rho)
&=\frac{1}{\alpha-1}\log\la\psi|\left(\Delta(\rho)\right)^{1-\alpha}|\psi\ra\nonumber\\
&=\frac{1}{\alpha-1}\log\sum_x p_{x}^{2-\alpha}\equiv\frac{1}{1-\gamma}\log\sum_x p_{x}^{\gamma}=S_\gamma(\rho)\;.
\end{align}
where we denoted $\gamma\equiv 2-\alpha$. Since the $C_\alpha^R$ is a DIO monotone for $\alpha\in[0,2]$, together with the fact that DIO$\subset$MIO, we have that all R\'{e}nyi entropies are DIO monotones. This is in contrast with the set of MIO for which $S_\gamma$ is a monotone only for $\gamma\geq 1/2$.

\subsubsection{$\Delta$-Robustness of Coherence}
Take 
\begin{align}
& A_\rho=\nonumber\\
& \left\{\frac{(1+t)\Delta(\rho)-\rho}{t}\;\Big |\;t>0\;;\;\;(1+t)\Delta(\rho)-\rho\geq 0\right\}
\end{align}
In this case, it is straightforward to check that $\mE(A_\rho)\subseteq A_{\mE(\rho)}$ for all $\mE\in$ DIO.  We consider the quantum R\'{e}nyi relative entropy $C_{\Delta,\alpha}^{(q)}(\rho):=\min_{\sigma\in A_\rho}D^{(q)}(\rho||\sigma)$.  Then in the limit $\alpha\to\infty$, we obtain analogs to Eqns. \eqref{Eq:Renyi-Rel-Entropy-Inf1} and \eqref{Eq:Renyi-Rel-Entropy-Inf2}:

\begin{align}
&C_{\Delta,\infty}^{(q)}(\rho)\notag\\
&=\log\min\left\{\tr(\sigma)\;\Big |\;\rho\leq\sigma\;\;;\;\;\frac{\sigma}{\tr(\sigma)}\in\ A_\rho\right\}\notag\\
&=\min_{t,\lambda>0}\left\{\lambda\;\Big |\;\rho\leq\lambda\frac{(1+t)\Delta(\rho)-\rho}{t}\;\;;\;\;(1+t)\Delta(\rho)\geq\rho\right\}\notag\\
&=\min_{t,\lambda>0}\left\{\lambda\;\Big |\;\frac{t+\lambda}{\lambda}\rho\leq (1+t)\Delta(\rho)\;\;;\;\;(1+t)\Delta(\rho)\geq\rho\right\}\nonumber\\
&=\min_{t,\lambda>0}\left\{\lambda\;\Big |\;\frac{t+\lambda}{\lambda}\rho\leq (1+t)\Delta(\rho)\right\}\;,
\end{align}
where the last equality follows from the fact that $\frac{t+\lambda}{\lambda}\geq 1$.  Note that $0\leq\lambda\frac{(1+t)\Delta(\rho)-\rho}{t}-\rho$, which means that $0\leq(\lambda-1)\Delta(\rho)$; thus, $\lambda\geq 1$.  Then the minimum above can be written as
\be
\min_{t,\lambda>0}\left\{\lambda\;:\;\frac{t+\lambda}{1+t}\rho\leq \lambda\Delta(\rho)\right\}
\ee
But since $\frac{t+\lambda}{1+t}>1$ we must have $\rho\leq \lambda\Delta(\rho)$. On the otherhand, taking the limit $t\to\infty$ in the above minimum gives $\rho\leq \lambda\Delta(\rho)$. We therefore conclude that the above minimum is equal to 
\be
\min_{\lambda>0}\left\{\lambda\;:\;\rho\leq \lambda\Delta(\rho)\right\}
\ee
or equivalently
\be
1+\min_{t>0}\left\{t\;:\;\rho\leq (1+t)\Delta(\rho)\right\}\;.
\ee
Finally, note that $t\geq 0$ satisfies $\rho\leq (1+t)\Delta(\rho)$ iff there exists a matrix $\sigma$ such that (i) $\frac{\rho+t\sigma}{1+t}\in\mI$, (ii) $\sigma\geq 0$, and (iii) $\Delta(\sigma)=\Delta(\rho)$.  Therefore we have the $\Delta$ analog of Prop. \ref{Prop:robustness}:
\begin{equation}
C_{\Delta,\infty}^{(q)}(\rho)=\log[1+C_{\Delta,R}(\rho)],
\end{equation}
where $C_{\Delta,R}(\rho)$ is a quantity we shall call the $\Delta$-Robustness of Coherence:
 \begin{align}
& C_{\Delta,R}(\rho):=\nonumber\\
& \min\left\{t\geq 0\;\Big|\;\frac{\rho+t\sigma}{1+t}\in\mI\;,\;\sigma\geq 0\;,\;\Delta(\sigma)=\Delta(\rho)\right\}.
 \end{align}
By construction, $C_{\Delta,R}$ is a DIO monotone.
\begin{example}
Consider the qubit state
\be
\rho=\begin{pmatrix}
p & r \\
r & 1-p
\end{pmatrix}
\ee
Then, the matrix $\sigma$ must have the form
\be
\sigma=\begin{pmatrix}
p & -\frac{r}{t} \\
-\frac{r}{t} & 1-p
\end{pmatrix}\;,
\ee
to ensure that $\rho+t\sigma$ is diagonal and $\Delta(\sigma)=\Delta(\rho)$. 
Now, the condition $\sigma\geq 0$ gives a lower bound on $t$. We therefore conclude
that for $0<p<1$
\be
C_R(\rho)=\frac{r}{\sqrt{p(1-p)}}
\ee
and otherwise, for $p=0$ or $p=1$, $C_R(\rho)=0$.
\end{example}

The form of $\sigma$ above can be generalized to any dimention.
That is, for $\rho=\Delta(\rho)+Z$, $\sigma$ must have the form 
\be
\sigma=\Delta(\rho)-\frac{1}{t}Z
\ee
Hence, $C_R(\rho)$ equals the minimum values of $t\geq 0$ such that $\sigma$ above is positive semidefinite.
Note that the positivity of $\sigma$ is equivalent to the positivity of
\be
t\Delta(\rho)-Z=t\Delta(\rho)-\left(\rho-\Delta(\rho)\right)=(1+t)\Delta(\rho)-\rho
\ee
We therefore arrive at the following expression for $C_R$:
\begin{align}
C_{\Delta,R}(\rho)& =\min\left\{t\geq 0\;\Big|\;(1+t)\Delta(\rho)-\rho\geq 0\right\}\nonumber\\
& =\max\left\{\frac{\la\phi|\rho|\phi\ra}{\la\phi|\Delta(\rho)|\phi\ra}\;\Big|\;|\phi\ra\in\mathbb{C}^d\;,\;\la\phi|\phi\ra=1\right\}
\end{align}

\begin{theorem}\label{cp}
Consider the linear map 
\be
\Phi_t(\rho)\equiv(1+t)\Delta(\rho)-\rho\;.
\ee
The following are equivalent:

(1) $\Phi_t(\rho)$ is positive

(2) $\Phi_t(\rho)$ is completely positive

(3) The parmeter $t\geq d-1$
\end{theorem}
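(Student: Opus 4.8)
The plan is to collapse all three conditions onto a single spectral computation and then close the cycle of implications $(2)\Rightarrow(1)\Rightarrow(3)\Rightarrow(2)$. The implication $(2)\Rightarrow(1)$ is immediate, since complete positivity implies positivity, so the real content lies in $(3)\Rightarrow(2)$ and $(1)\Rightarrow(3)$. The central tool will be the Choi matrix of $\Phi_t$, together with the standard fact that a linear map is completely positive precisely when its Choi matrix is positive semidefinite.

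For $(3)\Rightarrow(2)$ I would form the Choi matrix $J(\Phi_t):=\sum_{i,j}\op{i}{j}\otimes\Phi_t(\op{i}{j})$. Since $\Delta(\op{i}{j})=\delta_{ij}\op{i}{i}$, one reads off $\Phi_t(\op{i}{i})=t\op{i}{i}$ and $\Phi_t(\op{i}{j})=-\op{i}{j}$ for $i\neq j$, so that $J(\Phi_t)=(t+1)\sum_{i}\op{ii}{ii}-\op{\Omega}{\Omega}$ with $\ket{\Omega}=\sum_i\ket{ii}$. The key observation is that this operator is supported entirely on the $d$-dimensional subspace $\mathrm{span}\{\ket{ii}\}$ and annihilates its orthogonal complement; on that subspace it is $(t+1)$ times the identity minus the single term $\op{\Omega}{\Omega}$, whose only nonzero eigenvalue is $\ip{\Omega}{\Omega}=d$. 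Hence the spectrum of $J(\Phi_t)$ consists of $t+1$ with multiplicity $d-1$, the value $t+1-d$ once, and $0$ on the complement. Positivity of $J(\Phi_t)$ is therefore equivalent to $t+1-d\geq 0$, i.e. $t\geq d-1$, which establishes $(3)\Leftrightarrow(2)$ in one stroke.

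For $(1)\Rightarrow(3)$ a single witnessing state suffices. Testing the positive map $\Phi_t$ on the maximally coherent pure input $\rho=\op{\phi}{\phi}$ with $\ket{\phi}=\tfrac{1}{\sqrt d}\sum_i\ket{i}$ gives $\Delta(\rho)=I/d$, so that $\Phi_t(\rho)=\tfrac{1+t}{d}I-\op{\phi}{\phi}$. Its smallest eigenvalue, attained in the direction $\ket{\phi}$, equals $\tfrac{1+t}{d}-1$, and demanding $\Phi_t(\rho)\geq 0$ forces $t\geq d-1$. Combined with the previous step this closes the cycle.

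The only delicate point — and the step I would write out most carefully — is the spectral analysis of $J(\Phi_t)$: one must verify that $J(\Phi_t)$ vanishes off $\mathrm{span}\{\ket{ii}\}$ and that on that subspace the uniform vector $\ket{\Omega}$ has squared norm $d$, since it is exactly the gap between the ``$+1$'' and the ``$-d$'' contributions that produces the threshold $d-1$. Everything else is routine. As a consistency remark, this computation also recovers the expression for $C_{\Delta,R}$ obtained above: condition $(1)$ says precisely that $t$ is feasible in that minimization for \emph{every} pure input, and the extremal input is the uniform superposition, for which $C_{\Delta,R}$ attains its maximal value $d-1$.
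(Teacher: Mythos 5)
Your proof is correct and takes essentially the same route as the paper: both establish the equivalence of (2) and (3) by computing the Choi matrix $(1+t)\sum_i\op{ii}{ii}-\op{\Omega}{\Omega}$ with $\ket{\Omega}=\sum_i\ket{ii}$, and both establish (1)$\Rightarrow$(3) by testing $\Phi_t$ on the maximally coherent state $\ket{\phi}=\tfrac{1}{\sqrt{d}}\sum_i\ket{i}$. Your explicit spectral decomposition of the Choi matrix (eigenvalues $t+1$ with multiplicity $d-1$, $t+1-d$ once, and $0$ on the complement) is merely a more careful writeup of the step the paper states tersely.
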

\begin{proof}
The Choi matrix
\begin{align*}
&I\otimes\Phi_t(|\psi^+\lr\psi^+|)\\
&=\sum_{j,k}|j\lr k|\otimes \Phi_t(|j\lr k|)\\
& = \sum_{j}|j\lr j|\otimes \Phi_t(|j\lr j|)+\sum_{j\ne k}|j\lr k|\otimes \Phi_t(|j\lr k|)\\
& =t\sum_{j}|j\lr j|\otimes |j\lr j|-\sum_{j\ne k}|j\lr k|\otimes |j\lr k|\\
&=(1+t)\sum_{j}|j\lr j|\otimes |j\lr j|-|\psi^+\lr\psi^+|
\end{align*}
Finally, note that the last term is positive if and only if $1+t\geq d$. This complete the proof that $(2)$ and $(3)$ are equivalent. It is therefore left to show that $(1)$ implies $(3)$. To see it, note that
\be
\Phi_t(|+\lr+|)=\frac{1+t}{d}I-|+\lr+|
\ee
where $|+\ra\equiv\frac{1}{\sqrt{d}}\sum_{j}|j\ra$. Since we assume that $\Phi_t$ is positive, it follows that $1+t\geq d$.
\end{proof}
\begin{corollary}
The function
\be
R_D(\rho):=\log\left(1+C_R(\rho)\right)
\ee
which we call \emph{logarithmic robustness of dephasing} is a faithful measure of coherence (i.e. $R_D(\rho)=0$ iff $\Delta(\rho)=\rho$) satisfying 
\be
0\leq R_D(\rho)\leq\log d
\ee 
\end{corollary}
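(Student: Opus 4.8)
The plan is to reduce all three assertions to statements about the quantity inside the logarithm, since $R_D=\log\!\left(1+C_{\Delta,R}\right)$ and $\log(1+\cdot)$ is strictly increasing on $[0,\infty)$ with a unique zero at the origin. (Here I read the ``$C_R$'' of the statement as the $\Delta$-robustness $C_{\Delta,R}$ constructed in this subsection, consistent with the preceding Example.) The essential inputs are the variational formula derived just above Theorem~\ref{cp},
\be
C_{\Delta,R}(\rho)=\min\left\{t\geq 0\;\Big|\;(1+t)\Delta(\rho)-\rho\geq 0\right\},
\ee
together with Theorem~\ref{cp}, which pins down exactly when $\Phi_t(\rho)=(1+t)\Delta(\rho)-\rho$ is positive.

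For the lower bound, $R_D(\rho)\geq 0$ is immediate: the minimization defining $C_{\Delta,R}$ ranges over $t\geq 0$, so $C_{\Delta,R}(\rho)\geq 0$ and hence $1+C_{\Delta,R}(\rho)\geq 1$. For the upper bound I would specialize Theorem~\ref{cp} to $t=d-1$. By the equivalence (1)$\Leftrightarrow$(3) of that theorem, $\Phi_{d-1}$ is positive, so $\Phi_{d-1}(\rho)=d\Delta(\rho)-\rho\geq 0$ for \emph{every} density matrix $\rho$. Thus $t=d-1$ is feasible in the minimization, giving $C_{\Delta,R}(\rho)\leq d-1$, whence $1+C_{\Delta,R}(\rho)\leq d$ and $R_D(\rho)\leq\log d$.

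Faithfulness is the only step needing a small extra observation. Since the feasible set $\{t\geq 0:(1+t)\Delta(\rho)-\rho\geq 0\}$ is upward closed (raising $t$ only adds the positive term $\Delta(\rho)$) and closed, the minimum is attained, so $C_{\Delta,R}(\rho)=0$ holds iff $t=0$ is already feasible, i.e.\ iff $\Delta(\rho)-\rho\geq 0$. Writing $Z=\rho-\Delta(\rho)$, this reads $-Z\geq 0$; but $Z$ is Hermitian with vanishing diagonal, so $\tr Z=0$. A negative semidefinite matrix of zero trace must vanish, forcing $Z=0$, i.e.\ $\rho=\Delta(\rho)$. Conversely $\rho=\Delta(\rho)$ makes $t=0$ feasible, so $C_{\Delta,R}(\rho)=0$. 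Because $\log(1+\cdot)$ vanishes only at $0$, this yields $R_D(\rho)=0\iff\Delta(\rho)=\rho$.

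The computation is short, so the main obstacle is conceptual rather than technical. One must invoke Theorem~\ref{cp} in its \emph{positive-map} reading, namely that $\Phi_{d-1}(\rho)\geq 0$ for all inputs and not merely for one fixed $\rho$, to secure the \emph{uniform} bound $C_{\Delta,R}\leq d-1$; and one must supply the zero-trace argument for faithfulness rather than treating $-Z\geq 0$ as automatically forcing $Z=0$. It would also be worth recording the saturation case: by the above, $R_D(\rho)=\log d$ holds iff $d\Delta(\rho)-\rho$ is singular, and the maximally coherent state attains it, matching the intuition that $\log d$ is the peak coherence value.
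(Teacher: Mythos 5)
Your proof is correct and follows exactly the route the paper intends: the corollary is stated as an immediate consequence of Theorem~\ref{cp} together with the variational formula $C_{\Delta,R}(\rho)=\min\{t\geq 0 \mid (1+t)\Delta(\rho)-\rho\geq 0\}$, which is precisely how you argue (and your reading of the statement's ``$C_R$'' as the $\Delta$-robustness $C_{\Delta,R}$ is the right one, given the notation of this subsection). The details you supply that the paper leaves implicit --- the positive-map reading of Theorem~\ref{cp} for the uniform bound $C_{\Delta,R}\leq d-1$, and the zero-trace argument showing $\Delta(\rho)\geq\rho$ forces $\Delta(\rho)=\rho$ --- are exactly the missing steps, so nothing further is needed.
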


\begin{conjecture}
$R_D$ is additive. It is true for pure states (see below), unknown for mixed states.
\end{conjecture}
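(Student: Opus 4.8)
The plan is to prove additivity in the one regime where it is accessible---pure states---by putting $R_D$ in closed form and recognizing it as the logarithm of a quantity that factorizes under tensor products. First I would use the variational characterization of the $\Delta$-robustness established immediately above the corollary, namely that
\[
1+C_{\Delta,R}(\rho)=\max_{\la\phi|\phi\ra=1}\frac{\la\phi|\rho|\phi\ra}{\la\phi|\Delta(\rho)|\phi\ra}\;.
\]
Specializing to $|\psi\ra=\sum_x\sqrt{p_x}|x\ra$, the numerator is $|\la\phi|\psi\ra|^2$ and the denominator is $\la\phi|\Delta(\psi)|\phi\ra=\sum_x p_x|\la x|\phi\ra|^2$; since both vanish on components of $|\phi\ra$ outside the support of $\{p_x\}$, I may restrict $|\phi\ra$ to that support, where $\Delta(\psi)$ is invertible. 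The substitution $|\tilde\phi\ra=\Delta(\psi)^{1/2}|\phi\ra$ then converts the scale-invariant ratio into an ordinary Rayleigh quotient whose maximum, by Cauchy--Schwarz, is $\|\Delta(\psi)^{-1/2}|\psi\ra\|^2=\sum_{x:p_x>0}1=s$, the number of nonzero amplitudes. Hence $1+C_{\Delta,R}(\psi)=s$ and $R_D(\psi)=\log s$, the logarithm of the incoherent (diagonal) rank.

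Second, I would read off additivity from this formula. For a product $|\psi\ra\otimes|\varphi\ra$ with $|\varphi\ra=\sum_y\sqrt{q_y}|y\ra$, the amplitudes in the product incoherent basis $\{|x\ra|y\ra\}$ are $\sqrt{p_xq_y}$, so the support is the Cartesian product of the individual supports and has cardinality $s_\psi s_\varphi$; taking logarithms gives $R_D(\psi\otimes\varphi)=\log s_\psi+\log s_\varphi=R_D(\psi)+R_D(\varphi)$. The whole argument thus reduces additivity to the multiplicativity of support size. (Should the corollary instead be read literally with the ordinary robustness $C_R$, Proposition \ref{Prop:robustness} gives $C_R(\psi)=C_{\ell_1}(\psi)=(\sum_x\sqrt{p_x})^2-1$, so $R_D(\psi)=2\log\sum_x\sqrt{p_x}$; since $\sum_{xy}\sqrt{p_xq_y}=(\sum_x\sqrt{p_x})(\sum_y\sqrt{q_y})$, additivity follows in exactly the same way, so the conclusion is insensitive to this ambiguity.)

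The step I expect to be the obstacle is precisely the one that leaves the mixed-state case open. Using product test vectors $|\phi\ra\otimes|\chi\ra$ together with $\Delta(\rho\otimes\sigma)=\Delta(\rho)\otimes\Delta(\sigma)$ shows that the Rayleigh quotient factorizes on product inputs, which immediately yields superadditivity, $R_D(\rho\otimes\sigma)\ge R_D(\rho)+R_D(\sigma)$. For pure states equality holds because the optimizer $\Delta(\psi)^{-1/2}|\psi\ra$ itself factorizes across the two factors. For general mixed states, however, the optimal $|\phi\ra$ for $\rho\otimes\sigma$ need not be a product vector, so there is no reason for $1+C_{\Delta,R}(\rho\otimes\sigma)$ to equal $(1+C_{\Delta,R}(\rho))(1+C_{\Delta,R}(\sigma))$; controlling entangled optimizers to establish the missing subadditive inequality is exactly where the argument stalls, which is why the statement is only conjectured beyond pure states.
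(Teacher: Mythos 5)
Your pure-state argument is correct and lands exactly where the paper's own justification (the lemma below the conjecture) does: $1+C_{\Delta,R}(\psi)$ equals the number $s$ of nonzero amplitudes, whence $R_D(\psi)=\log s$ and additivity reduces to multiplicativity of the diagonal support. But the computation is genuinely different. The paper parametrizes the test vector as $|\phi\ra=\sum_x\sqrt{q_x}e^{i\theta_x}|x\ra$, rewrites the ratio as $\mbf{u}^{\dag}A\mbf{u}$ for the $n\times n$ matrix $A$ with zeros on the diagonal and ones off it, and exhibits the explicit optimizer $q_x\propto 1/p_x$ attaining the top eigenvalue $n-1$. You instead substitute $|\tilde\phi\ra=\Delta(\psi)^{1/2}|\phi\ra$ and obtain the maximum $\|\Delta(\psi)^{-1/2}|\psi\ra\|^2=s$ in one line from Cauchy--Schwarz. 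Your route is shorter and makes the tensor-product step transparent (the implicit optimizer $\sum_{x:p_x>0}|x\ra$ factorizes across subsystems), and you spell out the additivity step that the paper leaves implicit; the paper's route buys an explicit optimal incoherent test state. Your handling of the $C_R$-versus-$C_{\Delta,R}$ notational slip in the corollary is also sound---either reading yields additivity on pure states.

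One correction to your closing paragraph: you have the easy and hard directions backwards, and in fact the obstacle you describe is not real. Subadditivity does not require controlling entangled optimizers; it is immediate once you switch from the Rayleigh-quotient (supremum) formulation to the operator (minimum-$\lambda$) formulation. If $\rho\le\lambda_1\Delta(\rho)$ and $\sigma\le\lambda_2\Delta(\sigma)$, then
\begin{equation}
\lambda_1\lambda_2\,\Delta(\rho)\otimes\Delta(\sigma)-\rho\otimes\sigma=\bigl(\lambda_1\Delta(\rho)-\rho\bigr)\otimes\lambda_2\Delta(\sigma)+\rho\otimes\bigl(\lambda_2\Delta(\sigma)-\sigma\bigr)\ge 0,
\end{equation}
and $\Delta(\rho)\otimes\Delta(\sigma)=\Delta(\rho\otimes\sigma)$ because the joint incoherent basis is the product basis; hence $R_D(\rho\otimes\sigma)\le R_D(\rho)+R_D(\sigma)$ for arbitrary mixed states. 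Combined with the superadditivity you already derived from product test vectors, this would resolve the conjecture affirmatively in general: indeed $R_D(\rho)=D_{\max}(\rho\|\Delta(\rho))$, a max-relative entropy to a reference state that is multiplicative under tensor products, and such quantities are additive. So the entangled-optimizer difficulty you point to is an artifact of staying in the primal sup formulation, not a genuine barrier.
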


\begin{lemma}
For a pure state $|\psi\ra=\sum_{x=1}^{n}\sqrt{p_x}|x\ra$, with  $n\leq d$ and $p_x>0$,
\be
C_{R}(|\psi\ra)=n-1\;.
\ee
\end{lemma}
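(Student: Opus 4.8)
\emph{Proof proposal.} The plan is to work directly from the variational characterization of the $\Delta$-robustness established immediately above, namely $C_{\Delta,R}(\rho)=\min\{t\geq 0\mid (1+t)\Delta(\rho)-\rho\geq 0\}$, and to show that the least admissible value of $\lambda:=1+t$ equals $n$. Writing $\rho=\op{\psi}{\psi}$ and $D:=\Delta(\rho)=\sum_{x=1}^{n}p_x\op{x}{x}$, the task reduces to finding the smallest $\lambda>0$ for which $\op{\psi}{\psi}\leq\lambda D$; then $C_{\Delta,R}(\ket{\psi})=\lambda_{\min}-1$.

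First I would restrict attention to the support $V:=\mathrm{span}\{\ket{1},\dots,\ket{n}\}$. Since $p_x>0$ precisely for $x\leq n$, the operator $D$ is positive definite on $V$ and vanishes on $V^{\perp}$, while $\ket{\psi}\in V$; hence $\lambda D-\op{\psi}{\psi}$ is block diagonal with the $V^{\perp}$ block identically zero. Thus positivity on all of $\mbb{C}^d$ is equivalent to positivity of the restriction to $V$, where $D$ is invertible. On $V$ I would conjugate by $D^{-1/2}$: the inequality $\op{\psi}{\psi}\leq\lambda D$ is equivalent to $\op{\eta}{\eta}\leq\lambda\,\mbb{I}$ with $\ket{\eta}:=D^{-1/2}\ket{\psi}$, the identity here being that of $V$. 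Because $\op{\eta}{\eta}$ is rank one, its largest eigenvalue is $\ip{\eta}{\eta}$, so the condition collapses to $\lambda\geq\ip{\eta}{\eta}$.

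It then remains to evaluate $\ip{\eta}{\eta}=\braket{\psi}{D^{-1}}{\psi}=\sum_{x=1}^{n}\frac{|\ip{x}{\psi}|^2}{p_x}=\sum_{x=1}^{n}\frac{p_x}{p_x}=n$, the pleasant point being that this sum is independent of the actual weights $p_x$. Hence $\lambda_{\min}=n$ and $C_{\Delta,R}(\ket{\psi})=n-1$. The computation is essentially a one-line Rayleigh-quotient argument, so there is no serious analytic difficulty; the only step I would flag as a potential pitfall is the reduction to the support $V$ when $n<d$, where $D$ is singular and one must confirm that the vanishing $V^{\perp}$ block neither obstructs nor artificially improves the bound. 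Equivalently, one could run the same argument through the dual expression $\lambda_{\min}=\max_{\ket{\phi}}\frac{\braket{\phi}{\rho}{\phi}}{\braket{\phi}{\Delta(\rho)}{\phi}}$ and invoke Cauchy--Schwarz, the extremal $\ket{\phi}$ having components proportional to $1/\sqrt{p_x}$ on $V$, which again yields $n$.
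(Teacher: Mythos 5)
Your proof is correct, and it takes a genuinely different route from the paper's. The paper argues through the dual (Rayleigh-quotient) characterization $1+C_{\Delta,R}(\rho)=\max_{\ket{\phi}}\braket{\phi}{\rho}{\phi}/\braket{\phi}{\Delta(\rho)}{\phi}$: writing $\ket{\phi}=\sum_{x}\sqrt{q_x}e^{i\theta_x}\ket{x}$, it expresses the off-diagonal contribution as $\mbf{u}^{\dag}A\mbf{u}$ for a nonnegative unit vector $\mbf{u}$ and the $n\times n$ matrix $A$ with zeros on the diagonal and ones elsewhere, bounds this by $\lambda_{\max}(A)=n-1$, and then exhibits the achieving witness $q_x\propto 1/p_x$ --- exactly the extremal $\ket{\phi}$ with components proportional to $1/\sqrt{p_x}$ that you mention in your closing remark, so your ``alternative'' route is the paper's actual proof. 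Your primary route instead uses the primal characterization $C_{\Delta,R}(\rho)=\min\{t\geq 0 : (1+t)\Delta(\rho)-\rho\geq 0\}$: you reduce to the support $V$ of $\Delta(\rho)$ (a step you rightly flag and correctly justify, since both $\rho$ and $\Delta(\rho)$ annihilate $V^{\perp}$, so that block of $\lambda\Delta(\rho)-\rho$ vanishes identically and neither obstructs nor helps), conjugate by $\Delta(\rho)^{-1/2}$, and read off $\lambda_{\min}=\braket{\psi}{\Delta(\rho)^{-1}}{\psi}=\sum_{x=1}^{n}p_x/p_x=n$. Your argument is shorter and more self-contained: it needs neither the spectrum of $A$ nor a guessed optimizer, and it makes transparent why the answer depends only on the number of nonvanishing components and not on the weights $p_x$. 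What the paper's route buys is an explicit extremal state certifying the bound in the max formulation; your congruence argument produces the same optimizer only implicitly (as $\Delta(\rho)^{-1/2}\ket{\eta}$ up to normalization). Finally, you correctly read the lemma's $C_R$ as the $\Delta$-robustness $C_{\Delta,R}$; the notation in that subsection of the paper is loose, and the ordinary robustness of this pure state would instead be $\bigl(\sum_{x}\sqrt{p_x}\bigr)^{2}-1$.
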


\begin{proof}
Let $|\phi\ra=\sum_{x=1}^{n}\sqrt{q_x}e^{i\theta_x}|x\ra$
then
\begin{align}
\frac{\la\psi|\rho|\psi\ra}{\la\psi|\Delta(\rho)|\psi\ra}& =\frac{\sum_{x\neq x'}\sqrt{p_xq_xp_{x'}q_{x'}}e^{i(\theta_x-\theta_{x'})}}{\sum_xp_xq_x}\nonumber\\
& \leq \frac{\sum_{x\neq x'}\sqrt{p_xq_xp_{x'}q_{x'}}}{\sum_xp_xq_x}\nonumber\\
&=\mbf{u}^{\dag}A\mbf{u}
\end{align}
where $\mbf{u}$ is a unit vector in $\mbf{C}^n$ with components
\be
u_x\equiv \frac{\sqrt{p_xq_x}}{\sqrt{\sum_{x'=1}^{n}p_{x'}q_{x'}}}
\ee
and $A$ is the $n\times n$ matrix
\be
A=\begin{pmatrix}
0 & 1 & 1 & \cdots & 1\\
1 & 0 & 1 & \cdots & 1\\
1 & 1 & 0 & \cdots & 1\\
\vdots & \vdots & \vdots & \ddots & \vdots\\
1 & 1 & 1 & \cdots & 0\\
\end{pmatrix}\;.
\ee
Hence, by taking $\theta_x=0$ and 
\be
q_x=\frac{1}{p_x}\Big/\sum_{x'=1}^{n}\frac{1}{p_{x'}}
\ee
we get that $\mbf{u}=\frac{1}{\sqrt{n}}(1,...,1)^T$ corresponds to the maximal eigenvalue
of $A$; i.e. for this choice $\mbf{u}^{\dag}A\mbf{u}=n-1$. This completes the proof.
\end{proof}

\section{Qubit Coherence}

In this section we focus exclusively on maps whose input/output space consists of single qubit density matrices.  We will say that a qubit state $\rho$ is in \textbf{standard form} when expressed as
\begin{equation}
\rho=\begin{pmatrix} p&r\\r&1-p\end{pmatrix}\quad p\geq 1/2,\;r\geq 0
\end{equation}
in the incoherent basis.  Any state $\rho$ can always be transformed into standard form by an incoherent unitary transformation, and thus each state can be uniquely parametrized by the tuple $(p,r)$ with $p\geq 1/2$, $r\geq 0$.

\subsection{Channels: IO-MIO Equivalence}

The main result we prove here is that every MIO channel $\mc{E}$ has a Kraus operator implementation that belongs to IO.  
\begin{theorem}
\label{Thm:IO=DIO}
IO=MIO for CPTP maps $\mc{E}:\mc{B}(\mbb{C}^2)\to\mc{B}(\mbb{C}^2)$.
\end{theorem}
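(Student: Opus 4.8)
The plan is to establish the nontrivial inclusion MIO $\subseteq$ IO, the reverse being immediate from the definitions. I would argue at the level of the Choi matrix. Using the vectors $\mbf{v}_{y|x}$ of Eq.~\eqref{vnotation}, the MIO requirements that $\mE(\op{0}{0})$ and $\mE(\op{1}{1})$ be diagonal become the two orthogonality relations $\mbf{v}_{0|0}^\dagger\mbf{v}_{1|0}=0$ and $\mbf{v}_{0|1}^\dagger\mbf{v}_{1|1}=0$, while trace preservation forces $\mbf{v}_{0|0}^\dagger\mbf{v}_{0|1}+\mbf{v}_{1|0}^\dagger\mbf{v}_{1|1}=0$. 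Equivalently, the $4\times4$ Choi matrix $C\ge0$ has its two ``off-basis'' entries $C_{(0,0),(0,1)}$ and $C_{(1,0),(1,1)}$ equal to zero.

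Next I would use the fact that a single-qubit IO Kraus operator is of exactly one of four types (diagonal, antidiagonal, and the two rank-one collapses onto $\op{0}{0}$ and $\op{1}{1}$). Each type contributes a Choi vector living in one of four two-dimensional coordinate subspaces, and -- this is the clean structural point -- the index pairs carrying those subspaces are precisely the four pairs complementary to the two entries that MIO forces to vanish. Hence realizing $\mE$ by IO is equivalent to writing $C=\sum_k|\eta_k\rangle\langle\eta_k|$ with each $\eta_k$ supported on one of these four allowed pairs. I would reduce this to a finite allocation problem: split each population budget $\|\mbf{v}_{y|x}\|^2$ between the two types adjacent to it so that, within every type, the required coherent transfer amplitude satisfies the Cauchy--Schwarz bound relating it to the product of the two allotted norms; each such $2\times2$ Gram block is then realized by one or two Kraus operators.

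The crux -- and the step I would watch most carefully -- is whether $C\ge0$ actually guarantees a feasible allocation. This is not automatic: the four allowed index pairs form a \emph{4-cycle}, which is a non-chordal pattern, and positive semidefiniteness of a matrix with 4-cycle sparsity does \emph{not} in general permit a decomposition into edge-supported positive terms. So feasibility cannot come from the pairwise $2\times2$ minors alone; it must be wrung out of the full positivity of $C$ (the $3\times3$ and $4\times4$ conditions) together with the trace-preservation identity, which is exactly what makes the two ``coherent'' diagonal amplitudes of $\mE(\op{0}{1})$ equal in magnitude. I would try to exhibit an explicit split whose binding case saturates precisely one $3\times3$ positivity condition, or, absent a closed form, cast feasibility as a small linear program and settle it by a Farkas/LP-duality argument. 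Because of the non-chordality I would regard this feasibility claim as carrying the entire content of the theorem, and I would treat any shortcut through the $2\times2$ conditions -- or any rank-$2$ channel whose Choi range happens to miss all four allowed subspaces -- as a likely source of trouble that must be ruled out.
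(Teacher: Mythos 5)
Your reduction is correct, and your suspicion about the non-chordal $4$-cycle is not merely a point requiring care---it is fatal. The feasibility claim that you correctly identified as carrying the entire content of the theorem is false even with trace preservation in hand, and therefore the theorem itself is false. Consider $\mc{E}(\rho)=M_1\rho M_1^\dagger+M_2\rho M_2^\dagger$ with
\[
M_1=\frac{1}{2}\begin{pmatrix}1&\sqrt{2}\\-1&0\end{pmatrix},\qquad
M_2=\frac{1}{2}\begin{pmatrix}1&0\\1&-\sqrt{2}\end{pmatrix}.
\]
One checks $M_1^\dagger M_1+M_2^\dagger M_2=\mbb{I}$ and $\mc{E}(\op{0}{0})=\mc{E}(\op{1}{1})=\tfrac{1}{2}\mbb{I}$, so $\mc{E}$ is MIO (indeed unital), while $\mc{E}(\op{0}{1})=\tfrac{1}{2\sqrt{2}}\left(\begin{smallmatrix}1&-1\\-1&-1\end{smallmatrix}\right)$. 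Now recall that any Kraus representation of $\mc{E}$ consists of operators in $\mathrm{span}\{M_1,M_2\}$, and the generic element $c_1M_1+c_2M_2=\tfrac{1}{2}\left(\begin{smallmatrix}c_1+c_2&\sqrt{2}\,c_1\\c_2-c_1&-\sqrt{2}\,c_2\end{smallmatrix}\right)$ has incoherent columns only if both ($c_1=\pm c_2$) and ($c_1=0$ or $c_2=0$), i.e., only if it vanishes. So no Kraus representation of $\mc{E}$ contains even a single incoherent Kraus operator: $\mc{E}$ is MIO but not IO. This is a rank-two channel whose Choi range avoids all four allowed two-dimensional subspaces---exactly the configuration you flagged as the likely source of trouble. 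Equivalently, in your allocation problem: all four diagonal Choi entries equal $\tfrac{1}{2}$ and all four edge amplitudes have modulus $\tfrac{1}{2\sqrt{2}}$, so each edge needs the product of its two allotments to be at least $\tfrac{1}{8}$; since every allotment is at most $\tfrac{1}{2}$, each must be at least $\tfrac{1}{4}$, the four vertex sums then force all eight allotments to equal exactly $\tfrac{1}{4}$, and the products come out to $\tfrac{1}{16}<\tfrac{1}{8}$. Your proposed Farkas/LP test would thus have returned an infeasibility certificate, not a proof.

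You should also know that the paper's own proof takes a completely different, Kraus-operator-level route, and it fails at step 5. Steps 1--4 are sound: as long as two operators have nonzero cross terms $\bra{0}M_j\ket{0}\overline{\bra{1}M_j\ket{0}}$, a single $2\times 2$ rotation annihilates one of them, and a lone surviving cross term is excluded by the MIO sum rule, so one reaches a representation in which every operator maps $\ket{0}$ to a multiple of a basis vector. But ``repeating with $x=1$'' requires mixing operators whose images of $\ket{0}$ lie along \emph{different} basis vectors, which destroys the property just established. The counterexample shows this is a genuine obstruction rather than a reparable defect: $M_1,M_2$ act incoherently on $\ket{1}$ but not on $\ket{0}$, the rotated pair $(M_1\pm M_2)/\sqrt{2}$ acts incoherently on $\ket{0}$ but not on $\ket{1}$, and by the span argument no representation handles both inputs simultaneously. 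The correct conclusion is that IO $\subsetneq$ MIO already for qubit channels, so any downstream result invoking this theorem---in particular the proof of MIO-monotonicity of $C_{\Delta,R}$ in Theorem \ref{Thm:Robustness-MIO-Monotone}---requires an independent argument.
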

\begin{proof}
Consider an arbitrary MIO CPTP map $\mc{E}$ with Kraus operator representation $\{M_j\}_{j=0}^t$.  We want to prove that $\mE$ has another Kraus operator representation with each operator having one of the forms given in Eq. \eqref{Eq:Kraus-IO}.  Since $\mc{E}$ is MIO, we have
\begin{equation}
\label{Eq:CPcons1}
\sum_{j=0}^{m-1}\bra{y}M_j\op{x}{x}M_j^\dagger\ket{y\oplus 1}=0\qquad \forall x,y\in\{0,1\}.
\end{equation}
Our goal is to find another Kraus operator representation $\{\tilde{M}_j\}_{j=0}^{\tilde{t}}$ of the channel $\mc{E}$ such that 
\begin{equation}
\label{Eq:KrausQubit}
\bra{y}\tilde{M}_j\op{x}{x}\tilde{M}_j^{\dagger}\ket{y\oplus 1}=0\qquad\forall x,y\in\{0,1\},\forall j.\end{equation}
We describe iteratively how this can always be done.  In the following recall that Kraus operators $\{\tilde{M}_j\}_{j=0}^{\tilde{t}}$ generate the same channel $\mc{E}$ iff $\tilde{M}_j=\sum_{k=0}^{m-1}u_{jk}M_k$ for some unitary matrix $u_{jk}$.
\begin{enumerate}
\item Take $x=0$.  Find two distinct values $(j,j')$ such that $\bra{0}M_j\op{x}{x}M_j^\dagger\ket{1}\not=0$ and $\bra{0}M_{j'}\op{x}{x}M_{j'}^\dagger\ket{1}\not=0$; relabel and denote these by $(j,j')=(0,1)$.  If two distinct values cannot be found, then by Eq. \eqref{Eq:CPcons1} we must have that $\bra{0}M_j\op{x}{x}M_j^\dagger\ket{1}=0$ for all $j$, and in which case set $\tilde{M}_j=M_j$ for all $j$ and proceed to step 4.  Otherwise, proceed to step 2.
\item  Consider an $m\times m$ unitary matrix whose only non-trivial action consists of a $2\times 2$ block $\left(\begin{smallmatrix}u_{00}&u_{01}\\u_{10}&u_{11}\end{smallmatrix}\right)$.  Then a different Kraus operator representation for $\mc{E}$ is realized by the elements $\tilde{M}_i=u_{i0}M_0+u_{i1}M_1$ for $i=0,1$ and $\tilde{M}_i=M_i$ for $i=2,\cdots, m-1$.  The unitary matrix is chosen such that $(u_{00},u_{01})$ is the normalized vector of $(-\bra{0}M_1\ket{x},\bra{0}M_0\ket{x})$.  With this choice, we have 
\begin{align}
\bra{0}\tilde{M}_0\ket{x}=u_{00}\bra{0}M_0\ket{x}+u_{01}\bra{0}M_1\ket{x}=0.
\end{align}
\item Repeat step 1. with the updated set of Kraus operators $\{\tilde{M}_0,\tilde{M}_1,\tilde{M}_i\}_{i=2}^{m-1}$.
\item At this step in the procedure, we have a Kraus representation $\{\tilde{M}_j\}_{j=0}^{m-1}$ for $\mc{E}$ such that either $\bra{0}\tilde{M}_j\ket{x}=0$ or $\bra{1}\tilde{M}_j\ket{x}=0$ for all $j$.
\item Repeat the previous steps except with choosing $x=1$.  In the end, we obtain an ensemble satisfying Eq. \eqref{Eq:KrausQubit}.  This completes the procedure.
\end{enumerate}
 \end{proof}

\subsection{Transformations: SIO-DIO-IO-MIO Equivalence}

We now proceed to show that in terms of a single incoherent transformation $\rho\to\sigma$, MIO is just as powerful as SIO. Since SIO is both a subset of IO and DIO it follows that SIO=OI=DIO=MIO on qubits.   As demonstrated above, the Robustness of Coherence and the $\Delta$-Robustness of Coherence for qubits can be computed explicitly:
\begin{align}
C_R(\rho)&=2r\notag\\
C_{\Delta,R}(\rho)&=\frac{r}{\sqrt{p(1-p)}}.
\end{align}
In general $C_R$ is a MIO monotone while $C_{\Delta, R}$ is DIO monotone.  However,  we will now show that $C_{\Delta,R}$ is also a MIO monotone for qubits.  

\begin{theorem}
\label{Thm:Robustness-MIO-Monotone}
$C_{\Delta,R}$ is monotonic under MIO channels $\mc{E}:\mc{B}(\mbb{C}^2)\to\mc{B}(\mbb{C}^2)$.
\end{theorem}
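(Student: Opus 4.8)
The plan is to reduce everything to the explicit qubit formula $C_{\Delta,R}(\rho)=|\rho_{01}|/\sqrt{\rho_{00}\rho_{11}}$, which follows by minimizing $t$ in the constraint $(1+t)\Delta(\rho)-\rho\geq 0$ exactly as in the qubit example computed above, and then to exploit that for qubit channels MIO coincides with IO (Theorem~\ref{Thm:IO=DIO}). First I would observe that $C_{\Delta,R}$ is invariant under incoherent unitaries and that pre-composing an MIO map with an incoherent-unitary conjugation $\rho\mapsto U\rho U^\dagger$ is again MIO (since $U$ preserves $\mc{I}$). Hence without loss of generality the input $\rho$ is in standard form, with $r=\rho_{01}\geq 0$ real and $p=\rho_{00}$. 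By Theorem~\ref{Thm:IO=DIO} the map admits a Kraus representation in which each operator, being an IO qubit Kraus operator, is either diagonal, $M=\mathrm{diag}(a,b)$, or antidiagonal, $M=\left(\begin{smallmatrix}0&c\\ d&0\end{smallmatrix}\right)$.

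Next I would compute $\sigma=\mc{E}(\rho)$ entrywise. Writing $\alpha=\sum_k|a_k|^2$, $\beta=\sum_k|b_k|^2$ for the diagonal Kraus operators and $\gamma=\sum_l|c_l|^2$, $\delta=\sum_l|d_l|^2$ for the antidiagonal ones, trace preservation gives $\alpha+\delta=\beta+\gamma=1$, while the output entries become $\sigma_{00}=\alpha p+\gamma(1-p)$, $\sigma_{11}=\beta(1-p)+\delta p$, and $\sigma_{01}=\mu\,r$ with $\mu=\sum_k a_k\bar b_k+\sum_l c_l\bar d_l$. Using the explicit formula, the desired monotonicity $C_{\Delta,R}(\sigma)\leq C_{\Delta,R}(\rho)$ is then equivalent, for $r>0$ (the case $r=0$ is trivial, and if $\sigma_{00}\sigma_{11}=0$ positivity of $\sigma$ forces $\sigma_{01}=0$), to the scalar inequality $|\mu|^2\,p(1-p)\leq\sigma_{00}\sigma_{11}$.

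To establish this inequality I would bound the two sides separately. The triangle inequality followed by Cauchy--Schwarz gives $|\mu|\leq\sum_k|a_k||b_k|+\sum_l|c_l||d_l|\leq\sqrt{\alpha\beta}+\sqrt{\gamma\delta}$. Expanding $\sigma_{00}\sigma_{11}=(\alpha\beta+\gamma\delta)p(1-p)+\alpha\delta p^2+\beta\gamma(1-p)^2$ and subtracting $(\sqrt{\alpha\beta}+\sqrt{\gamma\delta})^2 p(1-p)$, the term $(\alpha\beta+\gamma\delta)p(1-p)$ cancels and what remains is $\alpha\delta p^2+\beta\gamma(1-p)^2-2\sqrt{\alpha\beta\gamma\delta}\,p(1-p)$, which is nonnegative by AM--GM applied to the nonnegative quantities $\alpha\delta p^2$ and $\beta\gamma(1-p)^2$. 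Chaining the two bounds yields $\sigma_{00}\sigma_{11}\geq(\sqrt{\alpha\beta}+\sqrt{\gamma\delta})^2 p(1-p)\geq|\mu|^2 p(1-p)$, which is precisely what was required.

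The main obstacle is conceptual rather than computational: for a genuine MIO map the output diagonal can depend on the input off-diagonal (this is exactly what separates MIO from DIO), so the slick DIO argument---propagating $\rho\leq(1+t)\Delta(\rho)$ through a $\Delta$-commuting channel---is unavailable here. The crucial move that unlocks the proof is invoking IO$=$MIO for qubit channels, which restricts the Kraus operators to the diagonal/antidiagonal forms and thereby makes $\sigma_{01}$ depend on $\rho_{01}$ alone through the single scalar $\mu$; once that structure is secured, the monotonicity collapses to a one-line AM--GM estimate.
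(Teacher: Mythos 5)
Your proof has a genuine gap, and it sits exactly at the step you call ``the crucial move.'' Theorem~\ref{Thm:IO=DIO} gives an IO Kraus representation of the MIO channel, but for qubits an IO Kraus operator is \emph{not} restricted to the diagonal/antidiagonal forms: the defining condition (each $M\op{x}{x}M^\dagger$ diagonal) also admits the two ``collapsing'' types $L=l_{0}\op{0}{0}+l_{1}\op{0}{1}$ and $M=m_{0}\op{1}{0}+m_{1}\op{1}{1}$, i.e.\ the cases where the underlying function $f$ on $\{0,1\}$ is constant rather than a permutation (these are the operators $L_\gamma,M_\delta$ in Eq.~\eqref{Eq:Kraus-IO}). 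Diagonal-or-antidiagonal is the characterization of \emph{SIO}, not IO. The omission matters: the channel with Kraus operators $\tfrac{1}{\sqrt{2}}\left(\op{0}{0}+\op{0}{1}\right)$ and $\tfrac{1}{\sqrt{2}}\left(\op{1}{0}-\op{1}{1}\right)$ is IO, yet it maps $\rho$ (standard form $(p,r)$) to $\tfrac{1}{2}(1+2r)\op{0}{0}+\tfrac{1}{2}(1-2r)\op{1}{1}$, whose diagonal depends on $r$; hence it admits no representation by diagonal/antidiagonal operators at all. Once the $L,M$ types are included, $\sigma_{00}$ and $\sigma_{11}$ pick up terms linear in $r$ (with either sign), the cancellation in your expansion of $\sigma_{00}\sigma_{11}$ no longer occurs, and the AM--GM step does not close. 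So what you have actually proved is monotonicity under SIO, which is already immediate (SIO $\subset$ DIO and $C_{\Delta,R}$ is a DIO monotone by construction); the maps your argument excludes are precisely the ones that make this theorem nontrivial. Indeed, your closing claim that the IO form ``makes $\sigma_{01}$ depend on $\rho_{01}$ alone'' conflates IO with DIO: for IO the off-diagonal output does factor this way, but the \emph{diagonal} output does not, and that is where your formula-based argument breaks.

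The paper's proof keeps all four Kraus types and never invokes the explicit qubit formula; instead it propagates the operator inequality. If $(1+t)\Delta(\rho)-\rho\geq 0$, then using $\Delta\left(J\rho J^\dagger\right)=J\Delta(\rho)J^\dagger$, $\Delta\left(K\rho K^\dagger\right)=K\Delta(\rho)K^\dagger$ together with $\Delta\left(L\rho L^\dagger\right)=L\rho L^\dagger$ and $\Delta\left(M\rho M^\dagger\right)=M\rho M^\dagger$, one gets
\begin{align}
(1+t)\Delta[\mc{E}(\rho)]-\mc{E}(\rho)&=\sum_\alpha J_\alpha[(1+t)\Delta(\rho)-\rho]J_\alpha^\dagger+\sum_\beta K_\beta[(1+t)\Delta(\rho)-\rho]K_\beta^\dagger\notag\\
&\qquad+t\left(\sum_\gamma L_\gamma\rho L_\gamma^\dagger+\sum_\delta M_\delta\rho M_\delta^\dagger\right)\;\geq 0,\notag
\end{align}
so the collapsing operators only help, contributing a manifestly positive term. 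If you want to salvage your computational route, you would need to prove $|\mu|^2 p(1-p)\leq\sigma_{00}\sigma_{11}$ including the $r$-dependent contributions, using the extra trace-preservation constraint $\sum_\gamma \bar{l}_{\gamma 0}l_{\gamma 1}+\sum_\delta \bar{m}_{\delta 0}m_{\delta 1}=0$ and positivity $r^2\leq p(1-p)$ --- possible in principle, but far messier than the positivity-propagation argument above.
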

\begin{proof}
By Theorem \ref{Thm:IO=DIO}, it suffices to prove that $C_{\Delta,R}$ is an IO monotone.  For qubits, any CP map $\mc{E}$ that belongs to IO can always be expressed as
\begin{align}
\label{Eq:StandardCP}
\sigma=\mc{E}(\rho)&=\sum_\alpha J_\alpha \rho J_\alpha^\dagger +\sum_{\beta} K_\beta\rho K_\beta^\dagger \notag\\
&+\sum_{\gamma} L_\gamma\rho L_\gamma^\dagger +\sum_{\delta} M_\delta \rho M_\delta^\dagger, 
\end{align}
where the Kraus operators $\{J_\alpha,K_\beta,L_\gamma, M_\delta\}_{\alpha,\beta,\gamma,\delta}$ have the general form
\begin{align}
\label{Eq:Kraus-IO}
J_\alpha&=j_{\alpha 0}\op{0}{0}+j_{\alpha 1}\op{1}{1}\notag\\
K_\beta&=k_{\beta 0}\op{1}{0}+k_{\beta 1}\op{0}{1}\notag\\
L_\gamma&=l_{\gamma 0}\op{0}{0}+l_{\gamma 1}\op{0}{1}\notag\\
M_\delta&=m_{\delta 0}\op{1}{0}+m_{\delta 1}\op{1}{1}.
\end{align}
Crucially, these operators share the following relationships with $\Delta$:
\begin{align}
\Delta\left(J_\alpha\rho J_\alpha^\dagger\right)&=J_\alpha\Delta\left(\rho\right)J_\alpha^\dagger\notag\\
\Delta\left(K_\beta\rho K_\beta^\dagger\right)&=K_\beta\Delta\left(\rho\right)J_\beta^\dagger\notag\\
\Delta\left(L_\gamma\rho L_\gamma^\dagger\right)&=L_\gamma\rho L_\gamma^\dagger\notag\\
\Delta\left(M_\delta\rho M_\delta^\dagger\right)&=M_\delta\rho M_\delta^\dagger
\end{align}
for all $\rho$.  Suppose now that $t\geq 0$ satisfies $(1+t)\Delta(\rho)-\rho\geq 0$.  Then for an IO channel $\mc{E}$ we have
\begin{align}
(1+t)\Delta[\mc{E}(\rho)]-\mc{E}(\rho)=t \omega &+\sum_{\alpha}J_\alpha[(1+t)\Delta(\rho)-\rho]J_\alpha^\dagger\notag\\
&+\sum_{\beta}K_\beta[(1+t)\Delta(\rho)-\rho]K_\beta^\dagger,\notag
\end{align}
where
\[\omega=t \left(\sum_\gamma L_\gamma\rho L_\gamma^\dagger+\sum_\delta M_\delta\rho M_\delta^\dagger\right)\geq 0.\]
By the assumption $(1+t)\Delta(\rho)-\rho\geq 0$ we likewise have $(1+t)\Delta[\mc{E}(\rho)]-\mc{E}(\rho)\geq 0$.  From the definition of $C_{\Delta,R}$, it therefore follows that
\begin{equation}
C_{\Delta,R}(\rho)\geq C_{\Delta,R}(\mc{E}(\rho)).
\end{equation}
\end{proof}

Next, we prove that monotonicity of $C_{\Delta,R}(\rho)$ is also sufficient for an SIO (and therefore also MIO) transformation.
\begin{lemma}
\label{Lem:DIO-Transformation}
Let $\rho$ and $\sigma$ have standard-form parametrizations $(p,r)$ and $(q,t)$ respectively.  Then $\rho$ can be transformed into $\sigma$ by SIO if and only if
\begin{equation}
\label{Eq:qubit-robustness-conditions}
C_R(\rho)\geq C_R(\sigma)\qquad\text{and}\qquad C_{\Delta,R}(\rho)\geq C_{\Delta,R}(\sigma).
\end{equation}
\end{lemma}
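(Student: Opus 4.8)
The plan is to prove necessity directly from the monotonicity results already established, and to prove sufficiency by explicitly constructing an SIO channel, reducing that construction to a single constrained optimization over $2\times 2$ column-stochastic matrices.

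For necessity, recall that by Proposition \ref{Prop:robustness} we have $C_\infty^{(q)}(\rho)=\log[1+C_R(\rho)]$, so $C_R$ inherits MIO-monotonicity from $C_\infty^{(q)}$, while $C_{\Delta,R}$ is an MIO monotone on qubits by Theorem \ref{Thm:Robustness-MIO-Monotone}. Since SIO $\subset$ MIO, both inequalities in \eqref{Eq:qubit-robustness-conditions} are forced whenever $\rho\to\sigma$ by SIO, so this direction needs no further work. For sufficiency I would start from Lemma \ref{Lem:SIO-Characterization}: on a qubit every SIO Kraus operator is either diagonal, $D_j=\mathrm{diag}(a_j,b_j)$, or anti-diagonal, $A_k=c_k\op{0}{1}+d_k\op{1}{0}$. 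Writing $\rho$ in standard form $(p,r)$, a direct computation shows the output populations are governed by the column-stochastic matrix $T=\left(\begin{smallmatrix}P_a & P_c\\ P_d & P_b\end{smallmatrix}\right)$, with $P_a=\sum_j|a_j|^2$, $P_b=\sum_j|b_j|^2$, $P_c=\sum_k|c_k|^2$, $P_d=\sum_k|d_k|^2$ and trace preservation enforcing $P_a+P_d=P_b+P_c=1$; the off-diagonal element is scaled by $S=\sum_j a_j\bar b_j+\sum_k c_k\bar d_k$, which by Cauchy--Schwarz obeys $|S|\le\sqrt{P_aP_b}+\sqrt{P_cP_d}=:f(T)$ and can be tuned to any nonnegative value up to $f(T)$. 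Reaching $\sigma=(q,t)$ thus reduces to finding a feasible $T$ with $q=pP_a+(1-p)P_c$ and $f(T)\ge t/r$ (the case $r=0$ forces $t=0$ and is handled by a classical stochastic map).

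The heart of the argument is then the claim that $\max_T f(T)=\min\{1,\sqrt{q(1-q)/(p(1-p))}\}$ over all feasible $T$. Using $P_c=1-P_b$, $P_d=1-P_a$, one has $f\le 1$ always with equality along $P_a=P_b$; this line meets the population constraint inside the box exactly when $q\le p$, which is precisely the regime where $\sqrt{q(1-q)/(p(1-p))}\ge 1$. When $q>p$, a Lagrange-multiplier computation pins the optimum at the point satisfying $\frac{P_b(1-P_b)}{P_a(1-P_a)}=\frac{p^2}{(1-p)^2}$, where $f$ evaluates to $\sqrt{q(1-q)/(p(1-p))}$. Combining the two regimes gives the stated maximum, and the hypotheses $C_R(\rho)\ge C_R(\sigma)$ and $C_{\Delta,R}(\rho)\ge C_{\Delta,R}(\sigma)$ are exactly $t/r\le 1$ and $t/r\le\sqrt{q(1-q)/(p(1-p))}$, i.e. $t/r\le\max_T f(T)$. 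Having secured a feasible $T$ with $f(T)\ge t/r$, I would realize it concretely with one diagonal and one anti-diagonal Kraus operator, $D=\mathrm{diag}(\sqrt{P_a},\sqrt{P_b})$ and $A=\sqrt{P_c}\op{0}{1}+\sqrt{P_d}\op{1}{0}$, producing the intermediate state $(q,rf(T))$, and then compose with a pure-dephasing SIO channel (two real diagonal Kraus operators with overlap $\eta=t/(rf(T))\le 1$) to shrink the coherence to exactly $t$; the composition is again SIO, completing the construction.

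The step I expect to be the main obstacle is the constrained maximization of $f(T)$: verifying that the interior Lagrange critical point lies in the box $P_a,P_b\in[0,1]$ and yields the clean value $\sqrt{q(1-q)/(p(1-p))}$, and that the two boundary regimes ($q\le p$ versus $q>p$) glue into the single formula $\min\{1,\sqrt{q(1-q)/(p(1-p))}\}$. Everything else is bookkeeping, but it is precisely here that the explicit forms $C_R=2r$ and $C_{\Delta,R}=r/\sqrt{p(1-p)}$ must be matched against the extremal coherence transfer achievable by a qubit SIO channel.
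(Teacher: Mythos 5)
Your proposal is correct and follows essentially the same route as the paper's proof: necessity via the MIO-monotonicity of $C_R$ and $C_{\Delta,R}$ together with SIO $\subset$ MIO, and sufficiency via the same two-step construction---one diagonal plus one anti-diagonal Kraus operator reaching the maximally coherent intermediate state $(q,t_{\max})$ with $t_{\max}=r\min\{1,\sqrt{q(1-q)/(p(1-p))}\}$, followed by a diagonal dephasing channel to shrink the coherence to $t$. The only substantive difference is that you derive the optimal Kraus coefficients by solving the constrained maximization of $f(T)$ rather than exhibiting them outright, which is in fact the more reliable path: the paper's stated coefficients for the branch $p\geq q$ do not satisfy the population constraint (the correct choice is $j_0^2=j_1^2=\frac{p+q-1}{2p-1}$, i.e.\ exactly your $P_a=P_b$ point), whereas your two regimes produce the correct values in both cases.
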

\begin{proof}
We will describe a channel $\mc{E}$ consisting exclusively of Kraus operators having the form $J_\alpha$ and $K_\beta$ as given in Eq. \eqref{Eq:KrausQubit}.  The transformation will consist of two steps $\rho\to\sigma_{max}\to\sigma$, where $\sigma_{max}$ has parameters $(q,t_{max}(q))$ with
\be
t_{\max}(q)=\begin{cases} r\quad\text{if $p\geq q$}\\ r\sqrt{\frac{q(1-q)}{p(1-p)}}\quad\text{if $q\geq p$}.
\end{cases}
\ee
The channel attaining $t_{\max}$ is given by $\rho\mapsto \sigma_{max}=J\rho J^\dagger+ K\rho K^\dagger$, where
\begin{align}
j_0^2&=\begin{cases} \frac{p+q-1}{2p-1}\quad\text{if $p\geq q$}\\
\frac{q}{p}\frac{p+q-1}{2q-1}\quad\text{if $q\geq p$}\end{cases}\notag\\
j_1^2&=\begin{cases} \frac{p-q}{2p-1}\quad\text{if $p\geq q$}\\
\frac{1-q}{1-p}\frac{p+q-1}{2q-1}\quad\text{if $q\geq p$}\end{cases}\notag\\
k_0^2&=1-j_0^2\notag\\
k_1^2&=1-j_1^2.
\end{align}
Finally, the transformation $\sigma_{max}\to\sigma$ can be seen as SIO feasible by noting that any $t<t_{\max}(q)$ can be reached for a fixed value of $q$ by applying a dephasing channel $\rho=J_1\rho J^\dagger_1+J_2\rho J_2^\dagger$ where $J_1=\left(\begin{smallmatrix}\cos\theta&0\\0&\sin\theta\end{smallmatrix}\right)$ and $J_2=\left(\begin{smallmatrix}\sin\theta&0\\0&\cos\theta\end{smallmatrix}\right)$, for some appropriately chosen $\theta$.
\end{proof}
Combining Theorem \ref{Thm:Robustness-MIO-Monotone} with Lemma \ref{Lem:DIO-Transformation}, we therefore obtain the main result:
\begin{theorem}
\label{Thm:Qubit-Coherence}
For qubit states $\rho$ and $\sigma$, the transformation $\rho\to\sigma$ is possible by either DIO, IO, or MIO if and only if both $C_R(\rho)\geq C_R(\sigma)$ and $C_{\Delta,R}(\rho)\geq C_{\Delta, R}(\sigma)$.
\end{theorem}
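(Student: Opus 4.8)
The plan is to prove the two directions by squeezing the three operational classes between SIO and MIO, so that necessity follows from monotonicity under the \emph{largest} class (MIO) while sufficiency follows from achievability under the \emph{smallest} class (SIO). The structural facts I would rely on are the inclusions SIO $\subset$ IO $\subset$ MIO and SIO $\subset$ DIO $\subset$ MIO: any transformation feasible by SIO is automatically feasible by DIO, IO, and MIO, and any function that is monotone under MIO is automatically monotone under each of DIO, IO, and SIO.

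For necessity, suppose $\rho\to\sigma$ is achievable by one of DIO, IO, or MIO. Since each of these is a subclass of MIO, the transformation is in particular achievable by MIO, and I would then invoke two monotonicity facts already in hand. First, $C_R$ is a MIO monotone; this follows from Proposition~\ref{Prop:robustness} together with the general monotone construction (Example~1 of the monotone family), since $C_\infty^{(q)}(\rho)=\min_{\sigma\in\mI}D_\infty^{(q)}(\rho\|\sigma)$ is built from a contractive divergence and the incoherent set $\mI$ is preserved by MIO. Second, $C_{\Delta,R}$ is a MIO monotone \emph{on qubits} by Theorem~\ref{Thm:Robustness-MIO-Monotone}. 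Applying both monotones to the MIO map realizing $\rho\to\sigma$ yields $C_R(\rho)\ge C_R(\sigma)$ and $C_{\Delta,R}(\rho)\ge C_{\Delta,R}(\sigma)$, as required.

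For sufficiency, suppose both inequalities hold. Here I would appeal directly to Lemma~\ref{Lem:DIO-Transformation}, which states precisely that the two robustness inequalities are sufficient for an SIO transformation $\rho\to\sigma$. Because SIO is contained in each of DIO, IO, and MIO, the same transformation is feasible in all three classes, completing the argument.

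The content of this theorem is essentially a synthesis, so the only point requiring care is the uniform treatment of all three classes at once. The delicate observation is that one need \emph{not} establish monotonicity separately for DIO and IO: the necessity direction is controlled entirely by the outermost class MIO, and the genuinely nontrivial input there is Theorem~\ref{Thm:Robustness-MIO-Monotone}, since $C_{\Delta,R}$ is \emph{a priori} only a DIO monotone and its MIO monotonicity holds only because of the qubit specialization (via the IO$=$MIO equivalence of Theorem~\ref{Thm:IO=DIO}). Dually, the sufficiency direction is controlled entirely by the innermost class SIO via Lemma~\ref{Lem:DIO-Transformation}. The substantive work therefore resides in those two prior results, and this proof simply closes the sandwich between them.
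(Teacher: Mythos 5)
Your proposal is correct and follows essentially the same route as the paper, which proves this theorem simply by combining Theorem~\ref{Thm:Robustness-MIO-Monotone} (MIO monotonicity of $C_{\Delta,R}$ on qubits, together with the general MIO monotonicity of $C_R$) for necessity with Lemma~\ref{Lem:DIO-Transformation} (SIO achievability under the two robustness inequalities) for sufficiency, closing the same SIO--MIO sandwich you describe.
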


\subsection{Coherence Measures}

For qubit states, a number of coherence measures have been proposed and evaluated, in direct analogy to entanglement measures in two-qubit systems.  For instance,    the so-called coherence of formation and concurrence of coherence \cite{Yuan-2015a, Winter-2015a} have been proposed, and both can be shown as being equivalent to the $\ell_1$-norm: $C_{\ell_1}(\rho)=2r$ \cite{Toloui-2011a, Yuan-2015a}.  Distinct from these is the relative entropy of coherence, which was known before under the name $G$-Asymmetry (see~\cite{GMS2009} and references therein), which takes the form
\begin{align}
C_{rel}(\rho)=S(\Delta(\rho))-S(\rho).
\end{align}
All measures in qubit systems can be seen as arising from the two robustness measures $C_{R}$ and $C_{\Delta,R}$ according to
\begin{align}
C_{\ell_1}(\rho)&=C_R(\rho)\notag\\
C_{rel}(\rho)&=f\left(\frac{C_R(\rho)}{C_{\Delta,R}(\rho)}\right)-f\left(\frac{C_R(\rho)}{C_{\Delta,R}(\rho)}\sqrt{1-C_{\Delta,R}(\rho)^2}\right),
\end{align}
where $f(x)=h\left(\frac{1}{2}[1-\sqrt{1-x^2}]\right)$ and $h(x)=-x\log x-(1-x)\log(1-x)$.

\section{Coherence Theories based on Asymmetry.}

\subsection{Translation Invariant Operations (TIO)}

\label{TIO}

Let us now comment further on asymmetry-based resource theories of coherence.  In these approaches, coherence is defined with respect to invariant subspaces of an observable $H$, say the Hamiltonian.  Specifically, one considers the unitary group of translations $\{e^{-it H}:t \in\mbb{R}\}$, and a state $\rho$ is said to be incoherent if it commutes with every element of the group; i.e. $e^{-itH}\rho e^{itH}=\rho$ for all $t$.  The class of translation invariant operations (TIO) consists of all CPTP maps $\mc{E}$ that commute with the unitary action of the group; i.e. 
$$\mc{E}[e^{-itH}(\rho) e^{itH}]=e^{-itH}[\mc{E}(\rho )]e^{itH}$$
 for all $t$ and all $\rho$.  The class TIO was first introduced and studied in Ref. \cite{Marvian-2015a}.  When $H$ is proportional to the number operator $\hat{N}$, then the unitary group of translations provides a representation for $U(1)$ \cite{Gour-2008a}.

TIO resource theory represents a specific example of an asymmetry-resource theory.  For a general compact group $G'$, a $G'$-asymmetry resource theory identifies its free states as those that are invariant under $G'$-twirling 
$$\mathcal{G}(\rho)=\int_{G'}dg U(g)\rho U(g)^{\dag},$$
 where $U: G'\to \mathcal{H}$ is the representation of $G'$ on the Hilbert space $\mathcal{H}$ and $dg$ the Haar measure.  The free operations are $G'$-covariant: 
 $$\mc{E}[U(g)\rho U(g)^\dagger]=U(t)[\mc{E}(\rho)]U(g)^\dagger$$
  for all $g\in G'$ and all $\rho$.  

Recall that under a basis-dependent definition of coherence, a state is incoherent if and only if it is diagonal in some specified basis $\mc{I}$, called the incoherent basis.  In order that a $G'$-asymmetry theory likewise identifies $\mc{I}$ as the free states, we need that $G'$ and its representation $U$ are such that 
$$\mathcal{G}(\rho)=\Delta(\rho).$$ %\textcolor{blue}{Note that for $G'=G$ the $G$-twirling outputs the identity matrix, and therefore, roughly speaking, the group $G$ is too big to lead to a resource theory of coherence (this is also evident from the form give in~ \eqref{Eq:G-cov-form} for $G$-covariant maps).} 
The $G'$-twirling of $G'=N$ or $G'=U(1)$ (with representation $U(\theta)=e^{i\theta\hat{N}}$) both lead to $\mathcal{G}(\rho)=\Delta(\rho)$, provided that the representation uniquely decomposes into a direct product of one-dimensional subrepresentations.  Below we will show that the group $N$ is the largest group with this property, whereas its subgroup $U(1)$ is one of the smallest ones. % Which group and representation to choose depends on the context, and the type of coherence one is interested in (i.e. coherence between the eigenspaces of the generators of $N$, or those of $U(1)$, etc).

In the case of TIO, the condition that $\mc{G}(\rho)\in\mc{I}$ amounts to the generator $H$ having a non-degenerate spectrum.  But in general, degeneracies will exist and the resulting resource theory will look very different than the basis-dependent theories of PIO/SIO/IO/DIO/MIO.  

As an example illustrating the sharp distinction between TIO and PIO/SIO/IO/DIO/MIO, consider a pair of bosons such as the electrons of a helium atom.  Due to the exchange symmetry, a natural incoherent basis to consider for this system is $\{\ket{b_0}=\sqrt{1/2}(\ket{01}+\ket{10}),\;\ket{b_1}=\sqrt{1/2}(\ket{01}-\ket{10}),\; \ket{b_2}=\ket{00},\;\ket{b_3}=\ket{11}\}$.  In the basis-dependent theories of PIO/SIO/IO/DIO/MIO, a state of this system is incoherent if and only if it is diagonal in this basis.  However, in a coherence resource theory based on $U(1)$-asymmetry, $\ket{b_0}$ and $\ket{b_1}$ are still identified as incoherent states, but so is the superposition state $\ket{\psi}=\sqrt{1/2}(\ket{b_0}+\ket{b_1})$ as well as the mixture $\rho=1/2(\op{b_0}{b_0}+\op{b_1}{b_1})$.  Typically $\ket{\psi}$ is called a coherent superposition whereas $\rho$ is an incoherent superposition.  

A TIO resource theory of coherence can thus be interpreted as defining coherence with respect to just individual degrees of freedom for a system, whereas a basis-dependent definition of coherence considers \textit{all} degrees of freedom.  In this sense, a basis-dependent theory of coherence may be seen as capturing a more complete notion of coherence for a system.  In terms of the generator $H$, TIO theory characterizes coherence between different \textit{eigenspaces} of $H$ rather than among a specific set of \textit{eigenstates}.  In certain settings in may be desirable to think of coherence in this way~\cite{Marvian-2015a}.

\subsection{$G$-Asymmetry and $N$-Asymmetry Resource Theories}

The set of all incoherent unitary matrices forms a group which we denote by $G$. The group $G$ consists of all $d\times d$ unitaries of the form $\pi u$, where $\pi$ is a permutation matrix and $u$ is a diagonal unitary matrix (with phases on the diagonal). We denote by $N$ the group of $d\times d$ diagonal unitary matrices and by $\Pi$ the group of permutation matrices. Note that $N$ is a normal subgroup of $G$, and $G=N\rtimes \Pi$ is the semi-direct product of $N$ and $\Pi$. Clearly, the group $G$ is compact and the twirlings over $N$ and $G$ are given by: 
\be
\int_N dg\;\mT_g( \rho)=\Delta(\rho)\;\text{ and }\;\int_G dg\;\mT_g( \rho) =\frac{1}{d}I\;
\ee
where $\mT_g(\rho):=g\rho g^\dag$, and the integration is with respect to the Haar measure $dg$.

\subsubsection{$G$-covariant maps}

We would like to characterize the set of all $G$-covariant quantum channels. 
That is, we would like to characterize all CPTP maps that satisfies
\be
[\mE,\mT_g]=0\;\;,\;\;\forall\;g\in G\;.
\ee
Consider the following 3 CPTP maps that are all G-covariant: 
\begin{align}
& \mE^{(1)}(\rho)=\rho\nonumber\\
& \mE^{(2)}(\rho)=\frac{1}{d-1}\left(I-\Delta(\rho)\right)\nonumber\\
& \mE^{(3)}(\rho)=\frac{1}{d-1}\left(d\Delta(\rho)-\rho\right)
\end{align}
\begin{remark}
\textbf{(1)} The map $\mE^{(1)}$ is the trivial map and it is covariant under \emph{all} groups (with unitary representations),
whereas the last two maps are non-trivial as they are not covariant with respect to all groups.\\
\textbf{(2)} The two convex combinations of $\mE^{(1)}$, $\mE^{(2)}$, and $\mE^{(3)}$:
\begin{align*}
& \frac{1}{d^2}\mE^{(1)}(\rho)+\frac{d-1}{d}\mE^{(2)}(\rho)+\frac{d-1}{d^2}\mE^{(3)}(\rho)=\frac{1}{d}I
\nonumber\\
&\frac{d}{d+1}\mE^{(2)}(\rho)+\frac{1}{d+1}\mE^{(3)}(\rho)=\frac{1}{d^2-1}(dI-\rho)
\end{align*}
are also covariant under all groups (note that the coefficient $d$ in front of $I$ in RHS of the second equation is necessary since otherwise the map is not completely positive).\\ \textbf{(3)} The map $\mE^{(3)}$ is completely positive (see Theorem~\ref{cp}) and the coefficient $d$ in front of $\Delta(\rho)$ is necessary since otherwise the map is not positive.\\
\textbf{(4)} The dephasing map is the following convex combination of $\mE^{(1)}$ and $\mE^{(3)}$:
\be
\Delta(\rho)=\frac{1}{d}\mE^{(1)}(\rho)+\frac{d-1}{d}\mE^{(3)}(\rho)
\ee
\end{remark} 
 
The following theorem shows that up to convex combinations, these 3 CPTP maps are all the G-covariant maps.
\begin{theorem}$\;$\label{gcom}

{\bf{\rm (a)}} Let $G$ be as above, $U$ be a unitary matrix, and  $\mU(\rho):=U\rho U^\dag$.
Then, 
\be
[\mU,\Delta]=0\;\;\iff\;\;U\in G\;.
\ee

{\bf{\rm (b)}} A CPTP map $\mE$ is G-covariant if and only if
$\mE$ is a convex combination of the three CPTP maps defined above.
Explicitly, $\mE$ is G-covariant if and only if
\begin{align}
\mE(\rho)&=q_1\rho+
\frac{q_2}{d-1}\left(I-\Delta(\rho)\right)+
\frac{q_3}{d-1}\left(d\Delta(\rho)-\rho\right)
\end{align}
for some $q_i\geq 0$ with $\sum_{i=1}^{3}q_i=1$.
\end{theorem}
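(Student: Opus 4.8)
The plan is to handle (a) and (b) separately. For part (a), the forward implication ($U\in G\Rightarrow[\mU,\Delta]=0$) is a direct computation: writing $U=\pi u$ with $\pi$ a permutation matrix and $u$ diagonal, the $j$-th diagonal entry of $U\rho U^\dagger$ equals $\rho_{\pi^{-1}(j)\pi^{-1}(j)}$, so $\Delta(U\rho U^\dagger)=\pi\Delta(\rho)\pi^\dagger$; since $u$ commutes with the diagonal matrix $\Delta(\rho)$, this also equals $U\Delta(\rho)U^\dagger$. For the converse I would test the commutation relation on the incoherent pure states: since $\Delta(\op{i}{i})=\op{i}{i}$, the identity $\mU\Delta=\Delta\mU$ applied to $\op{i}{i}$ gives $U\op{i}{i}U^\dagger=\Delta(U\op{i}{i}U^\dagger)$, i.e. the rank-one projector onto $U\ket{i}$ is diagonal. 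A rank-one projector is diagonal only if its defining vector is a computational basis vector up to phase, so $U\ket{i}=e^{i\theta_i}\ket{\pi(i)}$; unitarity forces $\pi$ to be a bijection, whence $U\in G$.

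For part (b) the strategy is to first pin down the dimension of the space of $G$-covariant linear maps using the structure $G=N\rtimes\Pi$, and only then impose the CPTP constraints. First, using $N$-covariance: conjugating $\op{i}{j}$ by a diagonal unitary with phases $\theta_1,\dots,\theta_d$ multiplies it by $e^{i(\theta_i-\theta_j)}$, so expanding $\mE(\op{i}{j})=\sum_{kl}c^{ij}_{kl}\op{k}{l}$ and matching phases forces $c^{ij}_{kl}=0$ unless $e_i-e_j=e_k-e_l$ as characters of $U(1)^d$. This leaves $\mE(\op{i}{j})=\mu_{ij}\op{i}{j}$ for $i\neq j$ and $\mE(\op{i}{i})=\sum_k M_{ki}\op{k}{k}$. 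Second, imposing $\Pi$-covariance: transitivity of $S_d$ on ordered distinct pairs forces all $\mu_{ij}$ equal to a single constant $\mu$, while $M$ must commute with every permutation matrix and hence has the form $M_{ki}=a\,\delta_{ki}+b(1-\delta_{ki})$. Thus the $G$-covariant maps form a three-parameter family $(\mu,a,b)$, with $\mu,a,b$ real for Hermitian-preserving maps (which all CPTP maps are); this matches the three maps $\mE^{(1)},\mE^{(2)},\mE^{(3)}$, which correspond to $(1,1,0)$, $(0,0,\tfrac{1}{d-1})$, and $(-\tfrac{1}{d-1},1,0)$ respectively.

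Next I would convert the CPTP requirements into conditions on $(\mu,a,b)$. Trace preservation reads $a+(d-1)b=1$, since off-diagonal terms are traceless. For complete positivity I would compute the Choi matrix $C_\mE=\sum_{ij}\op{i}{j}\otimes\mE(\op{i}{j})$, which splits into a diagonal block on $\mathrm{span}\{\ket{ii}\}$ equal to $(a-\mu)\,I_d+\mu\op{\Omega}{\Omega}$, with $\ket{\Omega}=\sum_i\ket{ii}$, and an off-diagonal block equal to $b\,I$ on $\mathrm{span}\{\ket{ik}:i\neq k\}$. Positivity of $C_\mE$ is then equivalent to $b\geq 0$, $a-\mu\geq 0$, and $a+(d-1)\mu\geq 0$, the last two being the eigenvalues of the diagonal block. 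Solving the linear relations between $(\mu,a,b)$ and the expansion coefficients yields $q_1=\tfrac{a+(d-1)\mu}{d}$, $q_2=(d-1)b$, and $q_3=\tfrac{(d-1)(a-\mu)}{d}$; one then checks $q_1+q_2+q_3=a+(d-1)b$, so trace preservation is exactly $\sum_i q_i=1$ and the three positivity inequalities are exactly $q_1,q_2,q_3\geq 0$. The easy direction follows since each $\mE^{(i)}$ is individually CPTP: $\mE^{(1)}$ is the identity, $\mE^{(2)}$ has a diagonal (hence positive) Choi matrix, and $\mE^{(3)}=\tfrac{1}{d-1}\Phi_{d-1}$ is CP by Theorem \ref{cp}.

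I expect the main obstacle to be the complete-positivity half of part (b), namely showing that CP forces \emph{every} $q_i\geq 0$ rather than merely that convex combinations are CP. The representation-theoretic reduction to three parameters is routine once the character bookkeeping for $N$ and the transitivity of $\Pi$ are handled, but care is needed to organize the Choi matrix cleanly into its diagonal and off-diagonal blocks and to read off the correct eigenvalues of the diagonal block; this block decomposition is what makes the positivity conditions transparent and lets them be matched precisely with the signs of the $q_i$.
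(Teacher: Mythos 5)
Your proof is correct, and while part (a) matches the paper's argument almost verbatim (testing commutation on the incoherent projectors $\op{i}{i}$), your part (b) takes a genuinely different route. The paper works at the level of \emph{Kraus operators}: it first invokes a symmetry-adaptation lemma from the asymmetry literature (Lemma 1 of the Gour--Spekkens paper) to show that any $N$-covariant map has Kraus operators that are either diagonal, $M_j=\sum_x a_{jx}\op{x}{x}$, or proportional to matrix units, $J_{xx'}=b_{xx'}\op{x}{x'}$; permutation covariance then forces the coefficient aggregates $c=\sum_j a_{jx}\bar a_{jx'}$, $a=\sum_j |a_{jx}|^2$, $b=|b_{xx'}|^2$ to be constants, and---this is the structural difference---complete positivity never has to be imposed separately, because it is already encoded in the Gram structure of the Kraus coefficients: Cauchy--Schwarz gives $|c|\le a$ and positivity of $\sum_j\bigl|\sum_x a_{jx}\bigr|^2$ gives $c\ge -a/(d-1)$, which are exactly your conditions $a-\mu\ge 0$ and $a+(d-1)\mu\ge 0$. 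You instead work at the superoperator level (character matching on matrix units for $N$, orbit-transitivity for $\Pi$) to get the three-parameter family, and then impose CP by block-diagonalizing the Choi matrix, whose eigenvalues $a-\mu$, $a+(d-1)\mu$, $b$ line up exactly with the signs of $q_3,q_1,q_2$. Your version is more self-contained (no external representation-theoretic lemma, no Kraus decomposition) and makes the ``if and only if'' transparent, since CP $\Leftrightarrow$ $q_i\ge 0$ and TP $\Leftrightarrow$ $\sum_i q_i=1$ hold as exact equivalences rather than one-directional derivations; the paper's route has the advantage that its intermediate Kraus-level lemma is reused elsewhere (for the $N$-covariant transformation theorem and the link to cooling operations in thermodynamics). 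One small point you should make explicit: for the easy direction you also need that every map in your three-parameter family is $G$-covariant (not just CPTP), but this is immediate since the family is spanned by $\rho\mapsto\rho$, $\rho\mapsto\Delta(\rho)$, and $\rho\mapsto\tr(\rho)\,I$, each of which commutes with conjugation by $U\in G$ by the forward computation in your part (a).
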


\begin{proof} (a) A direct calculation shows that $\Delta$ is a G-covariant map (it also follows from part B).
Conversely, suppose $[\Delta,\mU]=0$. Note that that for a given fixed $x$
\begin{align}
& \Delta\left(\mU(|x\lr x|)\right)=\sum_{x'}|\la x'|U|x\ra|^2|x'\lr x'|\nonumber\\
& \mU\left(\Delta(|x\lr x|)\right)=U|x\lr x|U^\dag
\end{align}
Comparing the two expressions gives $\la x'|U|x\ra=0$ except for one values of $x'$. Hence, $U\in G$.
\end{proof}

Before, we prove part (b) of the theorem, we first prove the following lemma:
\begin{lemma}
\label{ncov}
Let $\mE$ be an N-covariant CPTP map; that is,
\be
[\mE,\mT_g]=0\;\;,\;\;\forall\;g\in N\;.
\ee
Then, $\mE$ has the following Kraus decomposition
\be\label{basic1}
\mE(\rho)=\sum_j M_j\rho M_{j}^{\dag}+\sum_{x\neq x'}J_{xx'}\rho J_{xx'}^{\dag}
\ee
where all $M_j=\sum_xa_{jx}|x\lr x|$ are diagonal matrices and $J_{xx'}=b_{xx'}|x\lr x'|$.
\end{lemma}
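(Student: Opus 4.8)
The plan is to pass to the Choi matrix of $\mE$ and exploit $N$-covariance as a symmetry constraint. Set $\ket{\Omega}=\sum_x\ket{xx}$ and let $C:=(\mrm{id}\otimes\mE)(\op{\Omega}{\Omega})=\sum_{x,x'}\op{x}{x'}\otimes\mE(\op{x}{x'})$ be the Choi matrix of $\mE$. For a diagonal $g=\mrm{diag}(e^{i\phi_1},\dots,e^{i\phi_d})\in N$ one has $g\op{x}{x'}g^\dagger=e^{i(\phi_x-\phi_{x'})}\op{x}{x'}$, and a short computation shows that $(g\otimes I)C(g\otimes I)^\dagger$ equals $\sum_{x,x'}e^{i(\phi_x-\phi_{x'})}\op{x}{x'}\otimes\mE(\op{x}{x'})$ outright, while $(I\otimes g)C(I\otimes g)^\dagger$ equals the very same sum once one uses $[\mE,\mT_g]=0$ to move $g$ through $\mE$. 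Equating the two and isolating $C$ yields $(\bar g\otimes g)\,C\,(\bar g\otimes g)^\dagger=C$ for every $g\in N$; equivalently $C$ commutes with every operator $\bar g\otimes g$, which acts on the product basis as $\ket{ab}\mapsto e^{i(\phi_b-\phi_a)}\ket{ab}$.

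Next I would read off the block structure of $C$. Since $C$ commutes with all of these phase operators, the matrix element $\la ab|C|a'b'\ra$ can be nonzero only when $\phi_b-\phi_a=\phi_{b'}-\phi_{a'}$ holds for every $\phi\in\mbb{R}^d$. By linear independence of the coordinate functionals this leaves exactly two possibilities: either $a=b$ and $a'=b'$, or $a\neq b$ with $(a',b')=(a,b)$. Hence $C$ is block-diagonal for the decomposition $\mbb{C}^d\otimes\mbb{C}^d=V_0\oplus\bigoplus_{a\neq b}\mbb{C}\ket{ab}$, where $V_0=\mrm{span}\{\ket{xx}\}$: it restricts to a single Hermitian $d\times d$ block $C^{(0)}$ on $V_0$ and to a scalar $c_{ab}$ on each line $\mbb{C}\ket{ab}$. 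Because $\mE$ is completely positive, $C\geq 0$, so $C^{(0)}\geq 0$ and each $c_{ab}\geq 0$.

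Finally I would reconstruct the Kraus operators from a rank-one decomposition of $C$ that respects these blocks, using the standard correspondence $\la y|K|x\ra=\la xy|v\ra$ between a vector $\ket{v}$ in the support of $C$ and a Kraus operator $K$ of $\mE$. Diagonalizing $C^{(0)}=\sum_j\lambda_j\op{w_j}{w_j}$ with $\ket{w_j}=\sum_x(w_j)_x\ket{xx}\in V_0$ yields the \emph{diagonal} operators $M_j=\sum_x\sqrt{\lambda_j}(w_j)_x\op{x}{x}$, while each one-dimensional piece $c_{ab}\op{ab}{ab}$ yields the single-entry operator $\sqrt{c_{ab}}\,\op{b}{a}$, which is of the form $J_{xx'}=b_{xx'}\op{x}{x'}$ with $x\neq x'$. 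Summing these rank-one terms reproduces $C$, and therefore $\mE(\rho)=\sum_jM_j\rho M_j^\dagger+\sum_{x\neq x'}J_{xx'}\rho J_{xx'}^\dagger$, as claimed.

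The routine parts here are the bookkeeping in the first step and the standard Choi-to-Kraus dictionary in the last. The step I expect to need the most care is the covariance-to-commutation translation: keeping careful track of which tensor factor carries $g$ versus $\bar g$, and using the diagonal identity $(g\otimes I)\ket{\Omega}=(I\otimes g)\ket{\Omega}$ that holds precisely because $g$ is diagonal. The other point worth stating explicitly is that a block-diagonal positive semidefinite operator always admits a rank-one eigendecomposition in which every eigenvector lies in a single block; this is exactly what lets us separate the diagonal Kraus operators from the single-entry ones rather than ending up with mixtures of the two.
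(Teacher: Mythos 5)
Your proof is correct, but it takes a genuinely different route from the paper's. The paper invokes Lemma~1 of the Gour--Spekkens asymmetry framework (Ref.~[Gour-2008a]): since $N\cong U(1)^d$ is abelian, all its irreps are one-dimensional and labeled by integer vectors $\mbf{k}$, so the Kraus operators of an $N$-covariant channel can be chosen to satisfy $g_{\vec\theta}K_{\mbf{k},\alpha}g_{\vec\theta}^{\dag}=e^{i\vec\theta\cdot\mbf{k}}K_{\mbf{k},\alpha}$ without mixing; writing $K_{\mbf{k},\alpha}=\sum_{x,x'}c^{\mbf{k},\alpha}_{xx'}\op{x}{x'}$ and imposing this constraint forces each Kraus operator to be either diagonal ($\mbf{k}=0$) or a single-entry matrix $\op{x}{x'}$ (when $\mbf{k}$ has a $+1$ in slot $x$, a $-1$ in slot $x'$, and zeros elsewhere). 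You instead work entirely with the Choi matrix: covariance becomes the commutation relation $[\,C,\bar g\otimes g\,]=0$, the joint phases $e^{i(\phi_b-\phi_a)}$ separate $\mbb{C}^d\otimes\mbb{C}^d$ into the block $V_0=\mathrm{span}\{\ket{xx}\}$ plus the one-dimensional lines $\mbb{C}\ket{ab}$, $a\neq b$, and a block-respecting eigendecomposition of $C\geq 0$ hands back exactly the diagonal operators $M_j$ and the single-entry operators $J_{xx'}$ via the standard Choi-to-Kraus dictionary. Your derivation is self-contained and elementary---it needs no external representation-theoretic lemma, and the positivity bookkeeping ($C^{(0)}\geq 0$, $c_{ab}\geq 0$) comes for free---whereas the paper's route plugs directly into machinery that generalizes to non-abelian groups and makes explicit which irreps of $N$ actually appear, which fits the asymmetry-theoretic framing of that section. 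The two are morally dual (the cited lemma is itself typically proved by a Choi/Stinespring symmetrization), but as written your argument is a legitimately independent proof, and all the delicate steps---the direction of the $\bar g\otimes g$ conjugation, the linear-independence argument pinning down the allowed matrix elements of $C$ (note the phase condition holds for all real $\vec\phi$, so continuity upgrades equality mod $2\pi$ to exact equality of functionals), and the block-wise rank-one decomposition---are handled correctly.
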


\begin{proof}
We will apply Lemma 1 of~\cite{Gour-2008a} to the characterization of
$N$-invariant operations. Note first that the irreducible
representations of $N\cong U(1)^d$ are labeled by d integers 
$\mbf{k}=(k_1,...,k_d)$, and are all 1-dimensional. The
$\mbf{k}^{\text{th}}$ irreducible representation
$
u_{\mbf{k}}:N\to \mathbb{C}
$
has the form
\be
u_{\mbf{k}}(\vec{\theta})=e^{i\vec{\theta}\cdot\mbf{k}}\;.
\ee
where $\vec{\theta}=(\theta_1,...,\theta_d)\in U(1)^d$.
It follows from Lemma~1 of~\cite{Gour-2008a} 
that the Kraus operators
$K_{\mbf{k},\alpha}$
of a $N$-invariant operation can be labeled by the irrep
$\mbf{k}$ and a multiplicity index $\alpha$,
and satisfy
\be\label{formn}
g_{\vec{\theta}}\;K_{\mbf{k},\alpha}\;g_{\vec{\theta}}^{\dag}=e^{i\vec{\theta}\cdot\mbf{k}}K_{\mbf{k},\alpha}\;\;,\;\forall\vec{\theta}\in U(1)^d\;.
\ee
where $g_{\vec{\theta}}$ is the diagonal matrix with components $e^{i\theta_1},...,e^{i\theta_d}$ on the diagonal. 

Note that by virtue of the fact that the irreps are 1d,
the Kraus operators do not get mixed with one another
under the action of $N$  (this provides a significant
simplification relative to non-Abelian groups).
The most general expression for $K_{\mbf{k},\alpha}$ is
\be
K_{\mbf{k},\alpha}=\sum_{x,x'}c^{\mbf{k},\alpha}_{xx'}|x\lr x'|\;,
\ee
with some coefficients $c^{\mbf{k},\alpha}_{xx'}$.
Plugging this into~\eqref{formn} yields the constraint
\be
c^{\mbf{k},\alpha}_{xx'}\left(e^{i(\theta_x-\theta_{x'})}-e^{i\vec{\theta}\cdot\mbf{k}}\right)=0\;\;,\;\forall\;\vec{\theta}\in U(1)^d
\ee
Hence, $c^{\mbf{k},\alpha}_{xx'}$ must be zero unless $\mbf{k}=0$  and $x=x'$, 
or the $x$ and $x'$ components of 
$\mbf{k}$ are $1$ and $-1$, respectively, and all other components are zero. This completes the proof of the lemma. 
\end{proof}

Note that the lemma above provide the form of the Kraus operators in the resource theory of symmetric operations under the group $N$. This can be viewed as a physical resource theory of coherence. However, as discussed in the paper, resource theories of asymmetry cannot be used for coherence due to decoherence subspaces.
Moreover, as we can see from the above form of the Kraus operators, in the resource theory of $N$-asymmetry
permutations are not free!
We now ready to prove theorem~\ref{gcom}

\begin{proof} 
In addition to the form in~\eqref{basic1}, $\mE$ also has to commute with all permutations:
\be
[\mE,\mT_\pi]=0\;,\;\;\forall\;\pi\in\Pi\;.
\ee
In particular, we get
\begin{align}
&\mT_{\pi}\left(\mE(\rho)\right)=\nonumber\\
&\sum_{j,x,x'}a_{jx}\bar{a}_{jx'}\rho_{xx'}|\pi(x)\lr \pi(x')|\notag\\
&+\sum_{x'\neq x}|b_{xx'}|^2\rho_{x'x'}|\pi(x)\lr \pi(x)|
\end{align}
whereas
\begin{align}
\mE\left(\mT_{\pi}(\rho)\right)&=
\sum_{j,x,x'}a_{j\pi(x)}\bar{a}_{j\pi(x')}\rho_{xx'}|\pi(x)\lr \pi(x')|\nonumber\\
&+\sum_{x'\neq x}|b_{\pi(x)\pi(x')}|^2\rho_{x'x'}|\pi(x)\lr \pi(x)|
\end{align}
Hence, comparing the off-diagonal terms of $\mE\left(\mT_{\pi}(\rho)\right)=\mT_{\pi}\left(\mE(\rho)\right)$  give
\be
\sum_ja_{jx}\bar{a}_{jx'}=\sum_ja_{j\pi(x)}\bar{a}_{j\pi(x')}\equiv c\;,
\ee
since $\mE\left(\mT_{\pi}(\rho)\right)=\mT_{\pi}\left(\mE(\rho)\right)$ holds for all $\rho$ and for all permutations $\pi\in\Pi$. The constant $c\in\mathbb{R}$ and is independent of $x$ and $x'$
Comparing the diagonal terms of $\mE\left(\mT_{\pi}(\rho)\right)=\mT_{\pi}\left(\mE(\rho)\right)$ gives
\begin{align}
& \sum_j|a_{jx}|^2\rho_{xx}+\sum_{\{x':x'\neq x\}}|b_{xx'}|^2\rho_{x'x'}\nonumber\\
&=\sum_j|a_{j\pi(x)}|^2\rho_{xx}+\sum_{\{x':x'\neq x\}}|b_{\pi(x)\pi(x')}|^2\rho_{x'x'}\;\;\forall\;\rho
\end{align}
Since the equation above holds for all $\rho$ we must have
\be
\sum_j|a_{jx}|^2=\sum_j|a_{j\pi(x)}|^2\equiv a
\ee
and
\be
|b_{xx'}|^2=|b_{\pi(x)\pi(x')}|^2\equiv b\;,
\ee
where $a$ and $b$ are non-negative real numbers independent of $x$ and $x'$. We therefore get that
\begin{align}
&\mE(\rho)=\nonumber\\
&\sum_{x}a\rho_{xx}|x\lr x|+\sum_{x\neq x'}c\rho_{xx'}|x\lr x'|
+\sum_{x'\neq x}b\rho_{xx}|x'\lr x'|\nonumber\\
& =a\Delta(\rho)+c\left(\rho-\Delta(\rho)\right)+b\sum_x\rho_{xx} (I-|x\lr x|)\nonumber\\
&=a\Delta(\rho)+c\left(\rho-\Delta(\rho)\right)+b\left(I-\Delta(\rho)\right)
\end{align}
Note that the condition $\sum_jM_{j}^{\dag}M_j+\sum_{x\neq x'}J_{xx'}^{\dag}J_{xx'}=I$ gives
\be
a+b(d-1)=1\;.
\ee
We therefore conclude
\be\label{finally}
\mE(\rho)=a\Delta(\rho)+c\left(\rho-\Delta(\rho)\right)+\frac{1-a}{d-1}\left(I-\Delta(\rho)\right)
\ee
where $0\leq a\leq 1$. We now argue that 
\be\label{congg}
-\frac{a}{d-1}\leq c\leq a\;.
\ee 
Indeed,
\be
|c|\leq \sum_j|a_{jx}\bar{a}_{jx'}|\leq \sum_j\frac{1}{2}\left(|a_{jx}|^{2}+|a_{jx'}|^2\right)=a
\ee
and we also have
\begin{align}
0&\leq \sum_j\left(\sum_xa_{jx}\right)\left(\sum_{x'}\bar{a}_{jx'}\right)=\nonumber\\
&\sum_{x}\sum_j|a_{jx}|^2+\sum_{x\neq x'}\sum_ja_{jx}\bar{a}_{jx'}=
da+d(d-1)c\;,\nonumber
\end{align}
which is equivalent to $c\geq -a/(d-1)$. Finally, we note that~\eqref{finally} can be expressed as:
\begin{align}
\mE(\rho)& =\frac{a+c(d-1)}{d}\mE^{(1)}(\rho)+(1-a)\mE^{(2)}(\rho)\nonumber\\
&+\frac{(a-c)(d-1)}{d}\mE^{(3)}(\rho)
\end{align}
The constraints on $c$ in~\eqref{congg} ensures that the above equation is a convex combination of 
$\mE^{(1)}$, $\mE^{(2)}$, and $\mE^{(3)}$. This completes the proof of the theorem.
\end{proof}

\subsubsection{$N$-covariant maps}
The $N$-covariant operations given in Lemma~\ref{ncov} are very similar to the "cooling operations" given in~\cite{Varun}. The only difference is that $J_{xx'}$ is zero unless $x<x'$ (in the context of thermodynamics, the $x$ index corresponds to energy levels, and cooling operations can not increase the energy). Therefore, $N$-covariant operations are a bit more powerful than cooling operations, as can be seen from the following theorem, when compared with Theorem~1 in~\cite{Varun}.

\begin{theorem}
Let $\rho,\sigma$ be two density matrices of the same dimensions, with all the off-diagonal terms of $\rho$ being non-zero. Define the matrix $Q=(q_{xx'})$ as follows:
\be
q_{xx'}:=\left\{\begin{array}{ll}
& \min\left\{\frac{\sigma_{xx}}{\rho_{xx}},\;1\right\}\;\;\;\mbox{if  $x=x'$}\\
& \frac{\sigma_{xx'}}{\rho_{xx'}}\;\;\;\;\;\;\;\;\;\;\;\;\;\;\;\;\;\;\mbox{if  $x\neq x'$}
\end{array}\right.
\ee
Then, $\sigma=\mE(\rho)$ where $\mE$ is $N$-invariant operation if and only if $Q\geq 0$.
\end{theorem}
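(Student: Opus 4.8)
The plan is to reduce the statement to the explicit Kraus characterization of $N$-covariant channels in Lemma~\ref{ncov} together with an elementary transportation argument. First I would compute the action of a general $N$-covariant map on an arbitrary input. Writing $\mathcal{E}(\rho)=\sum_j M_j\rho M_j^\dagger+\sum_{x\neq x'}J_{xx'}\rho J_{xx'}^\dagger$ with $M_j=\sum_x a_{jx}|x\lr x|$ and $J_{xx'}=b_{xx'}|x\lr x'|$, a direct calculation shows that off-diagonal entries transform multiplicatively while diagonal entries undergo a stochastic transport,
\begin{align}
[\mathcal{E}(\rho)]_{xy}&=A_{xy}\,\rho_{xy}\quad(x\neq y),\notag\\
[\mathcal{E}(\rho)]_{xx}&=A_{xx}\,\rho_{xx}+\sum_{x'\neq x}|b_{xx'}|^2\rho_{x'x'},\notag
\end{align}
where $A_{xy}:=\sum_j a_{jx}\bar a_{jy}$. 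The matrix $A=(A_{xy})$ is a Gram matrix, hence $A\geq 0$, and the trace-preserving condition reads $A_{zz}+\sum_{x\neq z}|b_{xz}|^2=1$ for every $z$.

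For necessity, I would set $\sigma=\mathcal{E}(\rho)$. Because every off-diagonal $\rho_{xy}$ is non-zero, the off-diagonal relation forces $A_{xy}=\sigma_{xy}/\rho_{xy}=q_{xy}$ for $x\neq y$, so $A$ and $Q$ share all off-diagonal entries. On the diagonal, positivity of the transport term gives $A_{xx}\rho_{xx}\leq\sigma_{xx}$ and trace preservation gives $A_{xx}\leq 1$; since $\rho_{xx}>0$ (a vanishing diagonal entry of a positive matrix would force the corresponding off-diagonals to vanish), we get $A_{xx}\leq\min\{\sigma_{xx}/\rho_{xx},1\}=q_{xx}$. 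Hence $Q=A+D$, where $D$ is diagonal with entries $q_{xx}-A_{xx}\geq 0$, and therefore $Q=A+D\geq 0$.

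For sufficiency I would make the canonical choice $A=Q$. Since $Q\geq 0$, a spectral decomposition $Q=\sum_\lambda \mu_\lambda|w_\lambda\lr w_\lambda|$ produces diagonal Kraus operators via $a_{\lambda x}=\sqrt{\mu_\lambda}\,\la x|w_\lambda\ra$, which reproduce the required off-diagonals $\sigma_{xy}=q_{xy}\rho_{xy}$. It then remains to realize the diagonal transport by choosing $|b_{xx'}|^2\geq 0$. Setting $\gamma_{xx'}:=|b_{xx'}|^2\rho_{x'x'}$, the diagonal-matching and trace-preserving equations become a transportation problem whose column sums (supplies) are $s_z=(1-q_{zz})\rho_{zz}=(\rho_{zz}-\sigma_{zz})_+$ and whose row sums (demands) are $d_x=\sigma_{xx}-q_{xx}\rho_{xx}=(\sigma_{xx}-\rho_{xx})_+$, with equal totals since $\tr\rho=\tr\sigma=1$. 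The only genuine restriction is that self-routes $\gamma_{xx}$ are forbidden. The observation that dissolves this restriction is that, with $A=Q$, the supplies are supported on $\{z:\rho_{zz}>\sigma_{zz}\}$ and the demands on $\{x:\sigma_{xx}>\rho_{xx}\}$, which are disjoint; hence no self-route is ever needed, any standard feasible solution of the transportation problem yields non-negative $\gamma_{xx'}$, and back-substitution confirms $\mathcal{E}(\rho)=\sigma$.

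The step I expect to be the main obstacle is precisely this transportation feasibility with the diagonal forbidden: a naive admissible diagonal for $A$ could leave a source whose only available sink is itself. The disjoint-support observation is what singles out $A=Q$ (rather than any strictly smaller admissible diagonal) as the correct choice, since it simultaneously delivers the positivity criterion $Q\geq 0$ and a constraint-free transport.
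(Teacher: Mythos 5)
Your proof is correct and follows essentially the same route as the paper: necessity via the Gram-matrix form of the off-diagonal action together with the diagonal slack $Q=A+D$, $D\geq 0$, and sufficiency by taking $A=Q$ and routing the leftover diagonal mass off-diagonally. The paper realizes your transportation step with an explicit block column-stochastic matrix $R=\left(\begin{smallmatrix} I_k & CD' \\ \mathbf{0} & D \end{smallmatrix}\right)$ that sends the excess from sites with $\rho_{xx}>\sigma_{xx}$ to sites with $\sigma_{xx}>\rho_{xx}$, which is precisely your disjoint-support observation in matrix form.
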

\begin{proof}
Let $\mbf{a}_x\equiv(a_{jx})_j$ where $a_{jx}$ are the coefficients of $M_j$ as in Eq.~\eqref{basic1}. Denote also
$h_{xx'}\equiv \mbf{a}_{x}^{\dag}\mbf{a}_{x'}$ , and
\be
r_{x'|x}\equiv\left\{\begin{array}{ll}
h_{xx} &\text{ if }\;\;  x=x'\\
|b_{xx'}|^2 &\text{ if }\;\; x\neq x'
\end{array}\right.
\ee
where $b_{xx'}$ are the coefficients associated with the operator $J_{xx'}$ in Eq.~\eqref{basic1}.
Since $\mE$ is trace preserving, $\sum_{x'}r_{x'|x}=1$. Note that the matrix $H=(h_{xx'})$ is Gramian and therefore positive semi-definite. Recall also that the components of any positive semi-definite matrix can be written as $\mbf{a}_{x}^{\dag}\mbf{a}_{x'}$ for some vectors $\mbf{a}_x$. Hence,
from~\eqref{basic1}
it follows that there exists $N$-covariant map $\mE$ such that $\sigma=\mE(\rho)$ iff there exists $H\geq 0$ and a column stochastic matrix $R=(r_{x|x'})$ with diagonal elements $r_{x|x}=h_{xx}$ such that
\be\label{rhosigma}
\sigma_{xx'}\equiv\left\{\begin{array}{ll}
\sum_{y}r_{x|y}\rho_{yy} &\text{ if }\;\;  x=x'\\
h_{xx'}\rho_{xx'} &\text{ if }\;\; x\neq x'
\end{array}\right.
\ee
From the relation above we get
\begin{align}
h_{xx'} & =\frac{\sigma_{xx'}}{\rho_{xx'}}\equiv q_{xx'}\;\;\;\text{for }\;\;x\neq x'\nonumber\\
h_{xx} &=r_{x|x}\leq\min\left\{\frac{\sigma_{xx}}{\rho_{xx}},1\right\}\equiv q_{xx}
\end{align}
Suppose now that $\sigma=\mE(\rho)$. Then, there exists $H\geq 0$ that satisfies the above relations. Since, $Q$ and $H$ are only different in the diagonal elements we can write $Q=H+D$ where $D$ is some diagonal matrix. The equation above shows that $D\geq 0$. Therefore, $Q\geq 0$. Conversely, suppose $Q\geq 0$. We need to show that there exists $H\geq 0$ and column stochastic matrix $R$ (with the same diagonal as $H$) that satisfy Eq.(\ref{rhosigma}). 
We take $H=Q$ and show that there exists $R$ with the desired properties. For simplicity of the exposition here, suppose that $\rho_{xx}\leq \sigma_{xx}$ for $x=1,...,k$ and $\rho_{xx}>\sigma_{xx}$ for $x=k+1,...,d$. We take the column stochastic matrix $R$ to have the following form
\be
R=\begin{pmatrix}
I_{k} & \;CD'\\
\mbf{0} & D
\end{pmatrix}
\ee
where $I_k$ is the $k\times k$ identity matrix, $\mbf{0}$ is the $(d-k)\times k$ zero matrix, $D$ is the $(d-k)\times(d-k)$ diagonal matrix with diagonal elements $\{\sigma_{xx}/\rho_{xx}\}$ with $x=k+1,...,d$, the matrix $C$ is a $k\times (d-k)$ column stochastic matrix, and $D'$ is a $(d-k)\times(d-k)$ diagonal matrix with diagonal elements $\{1-\sigma_{xx}/\rho_{xx}\}$ with $x=k+1,...,d$.   Hence, $R$ is column stochastic as long as $C$ is column stochastic.
With this form of $R$, 
the condition $\sigma_{xx}=\sum_y r_{x|y}\rho_{yy}$ is equivalent to
\be
\begin{pmatrix}
\sigma_{11}\\
\sigma_{22}\\
\vdots\\
\sigma_{kk}
\end{pmatrix}
=\begin{pmatrix}
\rho_{11}\\
\rho_{22}\\
\vdots\\
\rho_{kk}
\end{pmatrix}
+C\begin{pmatrix}
\rho_{(k+1)(k+1)}-\sigma_{k+1)(k+1)}\\
\rho_{(k+2)(k+2)}-\sigma_{(k+2)(k+2)}\\
\vdots\\
\rho_{dd}-\sigma_{dd}
\end{pmatrix}
\ee
Define $\mbf{r}$ to be the $k$-dimensional vector whose components are $\sigma_{xx}-\rho_{xx}$ for $x=1,...,k$, and 
$\mbf{t}$ the $d-k$-dimensional vector whose components are $\rho_{xx}-\sigma_{xx}$ for $x=k+1,...,d$. By definition, both vectors have non-negative components, and note also that the sum of the components of $\mbf{r}$ is the same as the sum of the components of $\mbf{t}$. Hence, there exists a column stochastic matrix $C$ that satisfies $\mbf{r}=C\mbf{t}$.
This completes the proof.
\end{proof}

In the next proposition we show that the group $N$ is the largest group possible with the property that its twirling is the dephasing map $\Delta$. 

\begin{proposition}
Let $G'$ be any group with unitary representation $U(g)$ for $g\in G'$ such that
\be
\int_{G'} dg\;U(g)\rho U(g)^{\dag}=\Delta\rho.
\ee
Then, the set $\{U(g)\}_{g\in G'}$ is a subgroup of $N$.
\end{proposition}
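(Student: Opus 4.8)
The plan is to exploit the invariance of the Haar measure to show that conjugation by any $U(g)$ fixes the twirling map, and hence the dephasing map $\Delta$; specializing this identity to the diagonal rank-one projectors will then force each $U(g)$ to be diagonal in the incoherent basis.

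First I would record the conjugation-invariance of the twirling $\mathcal{G}(\rho)=\int_{G'}U(h)\rho U(h)^{\dagger}dh$. Fixing $g\in G'$ and using $U(gh)=U(g)U(h)$ together with the left-invariance of the Haar measure (the substitution $h\mapsto gh$ preserves $dh$), one computes
\be
U(g)\mathcal{G}(\rho)U(g)^{\dagger}=\int_{G'}U(gh)\rho U(gh)^{\dagger}dh=\int_{G'}U(h)\rho U(h)^{\dagger}dh=\mathcal{G}(\rho)
\ee
for every $\rho$. By hypothesis $\mathcal{G}=\Delta$, so this immediately yields $U(g)\Delta(\rho)U(g)^{\dagger}=\Delta(\rho)$ for all $\rho$ and all $g\in G'$.

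The second step is to feed in the diagonal projectors. Taking $\rho=\op{i}{i}$ gives $\Delta(\rho)=\op{i}{i}$, so the previous identity becomes $U(g)\op{i}{i}U(g)^{\dagger}=\op{i}{i}$. The left-hand side is the rank-one projector onto the unit vector $U(g)\ket{i}$, and equality of two rank-one projectors forces $U(g)\ket{i}=e^{i\theta_i(g)}\ket{i}$ for some phase. Since this holds for every basis element $\ket{i}$, each $U(g)$ is diagonal in the incoherent basis, i.e. $U(g)\in N$. As $g\mapsto U(g)$ is a homomorphism, its image is automatically closed under products and inverses, so $\{U(g)\}_{g\in G'}$ is a subgroup of $N$, completing the argument.

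The argument is short, and the only point demanding any care is the justification of the conjugation-invariance through the Haar measure; everything after that is forced. As an alternative route that avoids the measure manipulation, one could instead note that the fixed-point set of $\mathcal{G}$ coincides with the commutant $\{U(g)\}'$, which by hypothesis equals the maximal abelian algebra $\mathcal{D}$ of diagonal matrices; the double-commutant theorem then gives that the $*$-algebra generated by $\{U(g)\}$ equals $\mathcal{D}'=\mathcal{D}$, so again each $U(g)$ is diagonal. I would present the projector computation as the main line, since it is the most elementary and self-contained.
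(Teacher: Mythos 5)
Your proof is correct, but it reaches the key pointwise identity by a genuinely different mechanism than the paper. The paper plugs $\rho=\op{x}{x}$ directly into the averaged hypothesis, obtaining $\int_{G'}dg\,U(g)\op{x}{x}U(g)^{\dag}=\op{x}{x}$, and then concludes $U(g)\ket{x}=e^{i\theta_x(g)}\ket{x}$; that step implicitly invokes extremality of pure states (a mixture of pure states can equal a rank-one projector only if almost every state in the mixture is that projector), and strictly speaking it yields the eigenvector equation only for almost every $g$, with continuity of the representation and full support of the Haar measure needed to upgrade to all $g$. You instead use left-invariance of the Haar measure to derive the exact identity $U(g)\Delta(\rho)U(g)^{\dag}=\Delta(\rho)$ for \emph{every} fixed $g$ before ever specializing to projectors, so the projector step is then immediate and no extremality or almost-everywhere argument is needed. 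The trade-off is symmetrical: your route is cleaner and measure-theoretically airtight, but it leans essentially on the group law $U(gh)=U(g)U(h)$ and on $dg$ being Haar; the paper's route never uses left-invariance or even the homomorphism property in that step, so it would apply verbatim to any probability-measure average of unitaries that reproduces $\Delta$, not just a group twirling. Your alternative sketch via the commutant and the double-commutant theorem is also sound (the fixed-point algebra of a Haar twirling is the commutant of the representation, and the diagonal algebra is maximal abelian), though as you note it hides the same Haar-invariance argument inside the fixed-point--commutant identification, so the elementary projector computation is the right choice for the main line.
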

\begin{proof}
If $\int_{G'} dg\;U(g)\rho U(g)^{\dag}=\Delta\rho$, then
\be
\int_{G'}dg\;U(g)|x\ra\la x| U(g)^{\dag}=|x\ra\la x|\;,\;\;\;\forall\;x=1,...,d
\ee
which gives
\be
U(g)|x\ra=e^{i\theta_x(g)}|x\ra\;,
\ee
where $\{\theta_x\}_{x=1}^{d}$ are one-dimensional representations of $G'$.  
The equation above clearly indicates that $U(g)\in N$ so that $U(G')$ must be a subgroup of $N$.
In this sense, $N$ is the largest group with the property that $\mathcal{G}(\rho)=\Delta(\rho)$.

The requirement $\mathcal{G}(|x\lr x'|)=0$ for $x\neq x'$ gives in addition
\be
\int_{G'}dg e^{i\left(\theta_x(g)-\theta_{x'}(g)\right)}=\delta_{xx'}
\ee
Taking $dg=\frac{d\alpha}{2\pi}$ and $\theta_x(g)=x\alpha$ with $\alpha\in[0,2\pi]$ reproduce the $U(1)$-twirling.
Of course, the equation above is also satisfied for $\theta_x(g)=x^2\alpha$, but still the group $G'=U(1)$. \end{proof}

\section{Open problems}\label{final}

We conclude the Appendix with a few open questions.

\subsection{State Transformations}

Pure state transformations under SIO (both asymptotic and single copy cases) have been completely characterized in this paper via the one-to-one correspondence with LOCC. Consequently, among all coherence models discussed here, the SIO model is the most similar to the theory of pure bipartite entanglement. 
Particularly, in the single-copy regime, pure state transformations are determined by the majorization criterion (similar to Nielsen theorem in entanglement theory). A key open question is whether or not this criterion can be extended to the IO and DIO models.

Since majorization is both a necessary and sufficient condition for an SIO pure state transformation $|\psi\rangle\to\ket{\phi}$, it follows that it is sufficient for both IO and DIO (recall SIO is a subset of both IO and DIO). In IO it is also known to be necessary if both pure states have a full Schmidt rank since here the transformation is actually accomplished by sIO.  But as we discussed in this paper, it is not clear if it is still the case when the Schmidt rank of the target state $|\phi\ra$ is strictly smaller than the Schmidt rank of $|\psi\rangle$.

As for DIO, we have shown that all the R\'{e}nyi entropies of the Schmidt components of a pure state are monotones under DIO. In~\cite{Turgut} it was shown that if $S_\alpha(\psi)\geq S_{\alpha}(\phi)$ for all $\alpha$ then there exists a catalyst $|C\ra$ such that the Schmidt components of $|\psi\ra|C\ra$ are majorized by the Schmidt components of $|\phi\ra|C\ra$. Therefore, the existence of a catalyst provides a sufficient condition
for the transformation $|\psi\rangle\to\ket{\phi}$ under DIO. This means that necessary \emph{and} sufficient condition for pure state transformation under DIO are somewhere between majorization and catalytic majorization.

Majorization also provides sufficient condition for $|\psi\ra\to|\phi\ra$ under MIO, but here we also know that is is not necessary. In fact, MIO can increase the Schmidt rank as demonstrated in Theorem~\ref{thm:MIO-pure}.  However, Theorem~\ref{thm:MIO-pure} only involves a transformation from pure qubit to pure qudit.  It is left open to extend it to higher dimensions.

Necessary and sufficient conditions for mixed-state transformations have only been found for the qubit case, and a special type of asymmetry-based theory with symmetry groups $G$ and $N$. However, in higher dimensions, necessary and sufficient conditions for mixed state transformations for SIO/IO/DIO/MIO are not known. In the asymptotic limit of many copies of a mixed state we know that IO is not a reversible model, and distillation and formation rates have been calculated in~\cite{Winter-2015a}. MIO on the other hand, is a reversible quantum resource theory (QRT) in the asymptotic limit of many copies, due to a general QRT theorem proved in~\cite{Brandao2015}. However, the asymptotic distillation and formation rates are not known for SIO and DIO.   

Finally, another area of open inquiry pertains to determining the precise relationship between SIO, IO, and DIO.  To our knowledge, no operational gap in terms of state transformation is known between these classes, despite the fact that they represent distinct collections of CP maps.  More precisely, for every transformation $\rho\to\sigma$ feasible by IO (resp. DIO), is it also feasible by DIO (resp. IO) as well as SIO?  We suspect that such examples can be found, but perhaps not when $\rho$ is pure.

\subsection{Monotones}

There are few open problems regrading coherence monotones.
In~\cite{Baumgratz-2014a} a measure of coherence under IO was introduced.
This measure was defined by
\be
C_{\ell_1}(\rho)=\sum_{x\neq y}\rho_{xy}\;,
\ee
where $\rho_{xy}$ are components of $\rho$ in the incoherent basis. In the Appendix, we have shown that 
the robustness of coherence as defined in~\eqref{robust} equals $C_{\ell_1}$ for pure states and mixed states with
non-negative real off-diagonal terms. While the robustness of coherence is a monotone under MIO, it is not know if $C_{\ell_1}$ is also a monotone under MIO.

In the Appendix we have also introduced many new monotones under DIO. These set of monotones are closely related to monotones under thermal operations. In the resource theory of quantum theormodynamics, the free (or ``thermal'') operations take the form $\rho_A \to\tr_B[U(\rho_A\otimes \gamma_B^{(T)})U^\dagger]$, where $U$ is any unitary that commutes with the joint Hamiltonian, and $\gamma^{(T)}_B$ is the Gibbs state at temperature $T$ \cite{Janzing-2000a, Brandao-2013a}. It was also observed in~\cite{Terry} that Thermal operations are time-translation symmetric, and in particular belongs to DIO when the incoherent basis is taken to be the energy eigenstates, assuming no-degeneracy in the energy eigenstates. Therefore, all the DIO monotones introduced in this appendix, are also monotones under thermal operations.  In the case of degeneracy in the energy eigenstates, it is left open how to apply the DIO monotones to thermodynamics. 

\subsection{Relating Coherence with Maximally Correlated Entanglement}

Propositions \ref{Prop:SIO-LOCC} and \ref{Prop:sIO-LOCC} show that every transformation $\rho\to\sigma$ by either SIO or sIO corresponds to an LOCC transformation between the corresponding maximally correlated states $\rho^{(mc)}\to\sigma^{(mc)}$.  One obtains the maximally correlated state $\rho^{(mc)}$ from the single-system state $\rho$ via the ``coherent channel'' $\ket{x}\to\ket{xx}$.  In and of itself, such a channel appears in the theory of coherent communication where the tasks of coherent superdense coding and coherent teleportation are fully dual to one another (see Chapter 7 of \cite{Wilde-2013a}).  We have been interested in using this channel to map the theory of SIO/sIO into one-way/two-way LOCC.  A natural question is whether or not such a connection can also be established between IO and LOCC.  Such a relationship has been conjectured in Ref. \cite{Winter-2015a}, and a probabilistic version of it was proven in Ref. \cite{Chitambar-2016a}.  Specifically, it was shown that for every IO transformation $\rho\to\sigma$, the transformation $\rho^{(mc)}\to\sigma^{(mc)}$ can always be accomplished with some nonzero probability.  It is unknown whether a deterministic LOCC implementation is always possible, and whether such a result also holds for transformations $\rho\to\sigma$ that are feasible using DIO.

Lastly, Theorem \ref{thm:MIO-pure} shows that $\rho\to\sigma$ by MIO fails to imply $\rho^{(mc)}\to\sigma^{(mc)}$ by LOCC.  Unlike LOCC, MIO is able to increase the Schmidt rank under pure state transformations.  An interesting open question is whether, analogous to MIO, the Schmidt rank can be increased by some non-entangling operation.  

\bibliography{CoherenceBib}

\end{document}